\newcommand{\sums}{\sum_{k=s-N^{1/3}}^{s+N^{1/3}}}
\newcommand{\tm}{t_{\text{max}}}
\newcommand{\vs}{\vert s\vert}
\newcommand{\Om}{\Omega}
\newcommand{\bOm}{\bar{\Omega}}
\newcommand{\X}{\mathfrak{X}}
\newcommand{\Z}{\mathbb{Z}}
\newcommand{\R}{\mathbb{R}}
\newcommand{\C}{\mathbb{C}}
\newcommand{\dn}{\delta_n}
\newcommand{\bigO}{\mathcal{O}}
\newcommand{\lf}{\lfloor}
\newcommand{\rf}{\rfloor}
\newcommand{\E}{\mathbb{E}}
\theoremstyle{plain}
\newtheorem{theorem}{Theorem}[section]
\newtheorem{proposition}[theorem]{Proposition}
\newtheorem{lemma}[theorem]{Lemma}
\newtheorem{corollary}[theorem]{Corollary}
\theoremstyle{definition}
\newtheorem{definition}[theorem]{Definition}
\theoremstyle{remark}
\newtheorem{remark}[theorem]{Remark}
\begin{document}

\title{The Gaussian free field in interlacing particle systems}
\author{Jeffrey Kuan}
\date{}
\maketitle
\begin{abstract} We show that if an interlacing particle system in a two-dimensional lattice is a determinantal point process, and the correlation kernel can be expressed as a double integral with certain technical assumptions, then the moments of the fluctuations of the height function converge to that of the Gaussian free field. In particular, this shows that a previously studied random surface growth model with a reflecting wall has Gaussian free field fluctuations. 
\end{abstract}

\section{Introduction}We begin by describing a particle system which was introduced in \cite{kn:BK1}.

\textbf{Particle System.} Introduce coordinates on the plane as shown in Figure \ref{configur}. Denote the
horizontal coordinates of all particles with vertical coordinate $m$
by $y^m_1>y^m_2>\dots>y^m_k$, where $k=\lfloor (m+1)/2\rfloor$. There is a wall on the left side, which forces $y^m_k\geq 0$ for $m$ odd and $y^m_k\geq 1$ for $m$ even. The particles must also satisfy the interlacing conditions $y_{k+1}^{m+1}<y_k^m<y_k^{m+1}$ for all meaningful values of $k$ and $m$.

\begin{center}
\begin{figure}[htp]
\caption{}
\begin{center}\includegraphics[height=4.5cm]{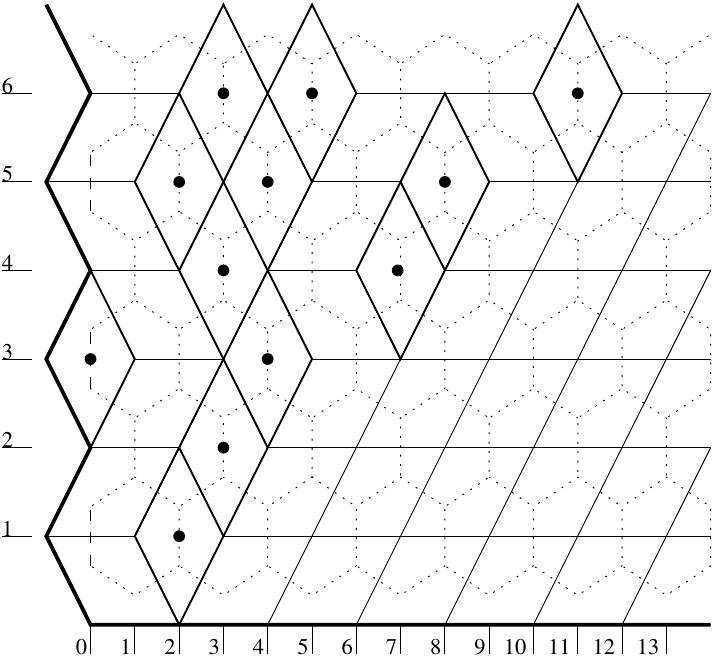}\end{center}
\label{configur}
\end{figure}
\end{center}

By visually observing Figure \ref{configur}, one can see that the particle system can be interpreted as a stepped surface. We thus define the height function at a point to be the number of particles to the right of that point.

Define a continuous time Markov chain as follows. The initial condition is a single particle configuration where all
the particles are as much to the left as possible, i.e.
$y^m_k=m-2k+1$ for all $k,m$. This is illustrated in the left-most iamge in Figure \ref{jumps}. Now let us describe the evolution.
We say that a particle $y^m_k$ is blocked on the right if
$y^m_k+1=y^{m-1}_{k-1}$, and it is blocked on the left if
$y^m_k-1=y^{m-1}_k$ (if the corresponding particle $y^{m-1}_{k-1}$
or $y^{m-1}_{k}$ does not exist, then $y^m_k$ is not blocked).

Each particle has two exponential clocks of rate $\frac 12$; all
clocks are independent. One clock is responsible for the right
jumps, while the other is responsible for the left jumps. When the
clock rings, the particle tries to jump by 1 in the corresponding
direction. If the particle is blocked, then it stays still. If the
particle is against the wall (i.e. $y^m_{[\frac {m+1}2]}=0$) and the
left jump clock rings, the particle is reflected, and it tries to
jump to the right instead.

When $y^m_k$ tries to jump to the right (and is not blocked on
the right), we find the largest $r\in \Z_{\ge 0}\sqcup\{+\infty\}$
such that $y_k^{m+i}=y_k^{m}+i$ for $0\le i\le r$, and the jump
consists of all particles $\bigl\{y_k^{m+i}\bigr\}_{i=0}^r$ moving
to the right by 1. Similarly, when $y^m_k$ tries to jump to the left
(and is not blocked on the left), we find the largest $l\in \Z_{\ge
0}\sqcup\{+\infty\}$ such that $y_{k+j}^{m+j}=y_k^m-j$ for $0\le
j\le l$, and the jump consists of all particles
$\bigl\{y_{k+j}^{m+j}\bigr\}_{j=0}^l$ moving to the left by 1.

In other words, the particles with smaller upper indices can be
thought of as heavier than those with larger upper indices, and the
heavier particles block and push the lighter ones so that the
interlacing conditions are preserved.

\begin{center}
\begin{figure}[htp]
\caption{First three jumps} \medskip
\includegraphics[height=2.8cm]{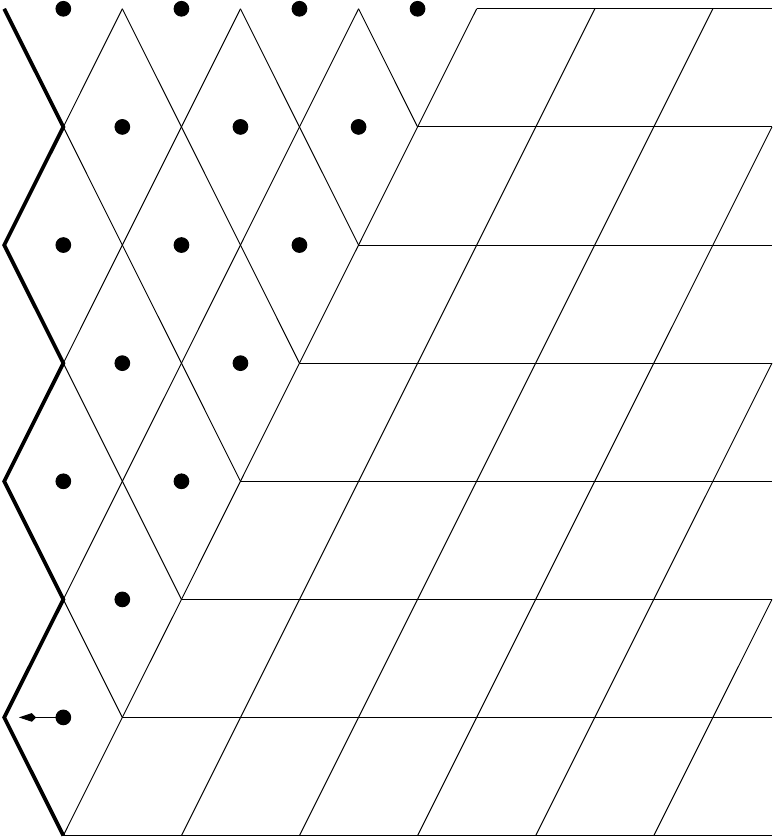}
\quad
\includegraphics[height=2.8cm]{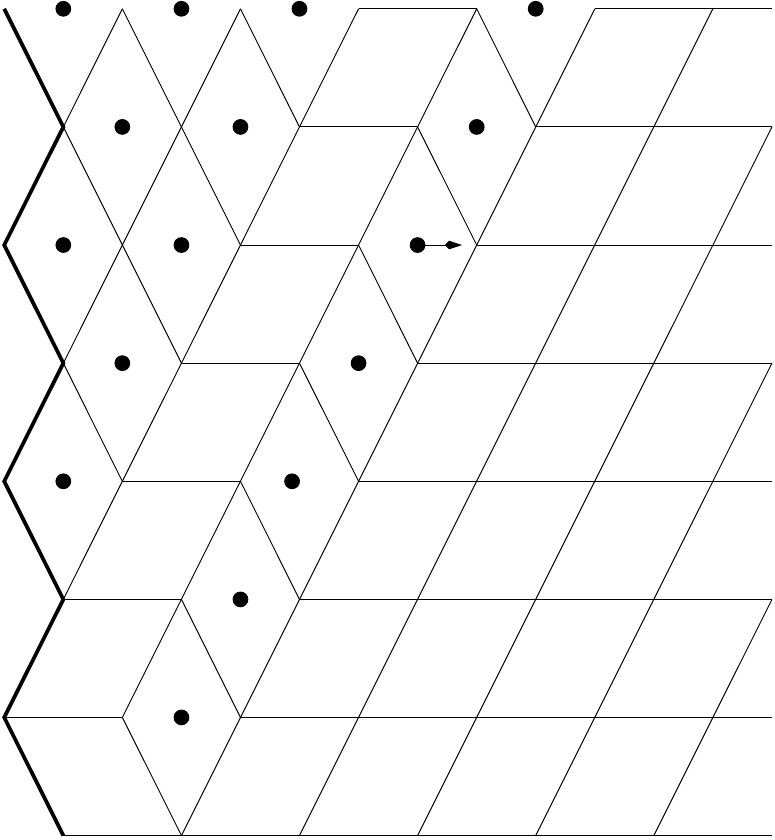}
\quad
\includegraphics[height=2.8cm]{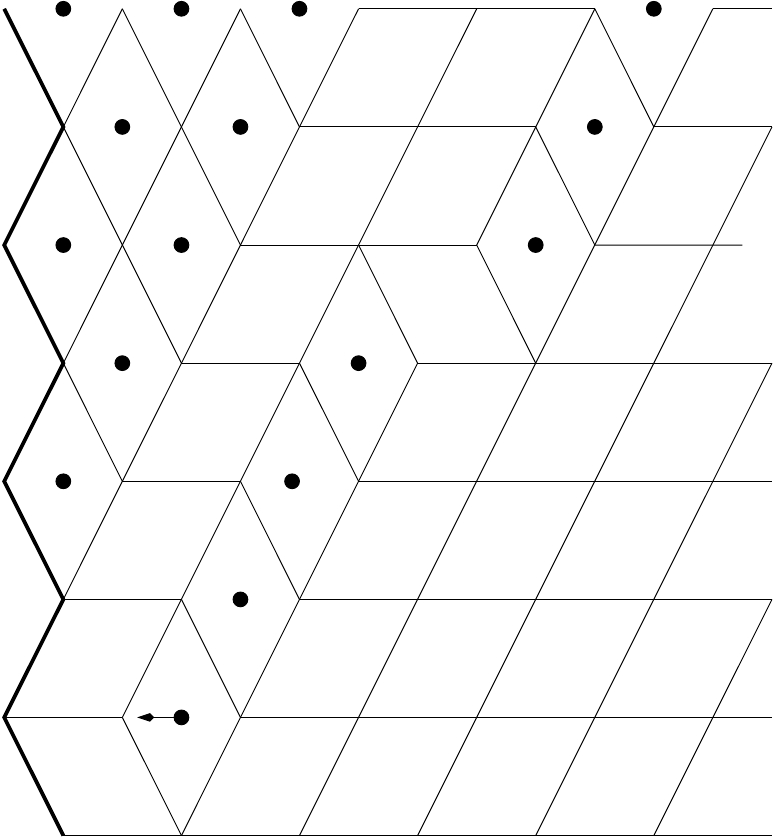}
\quad
\includegraphics[height=2.8cm]{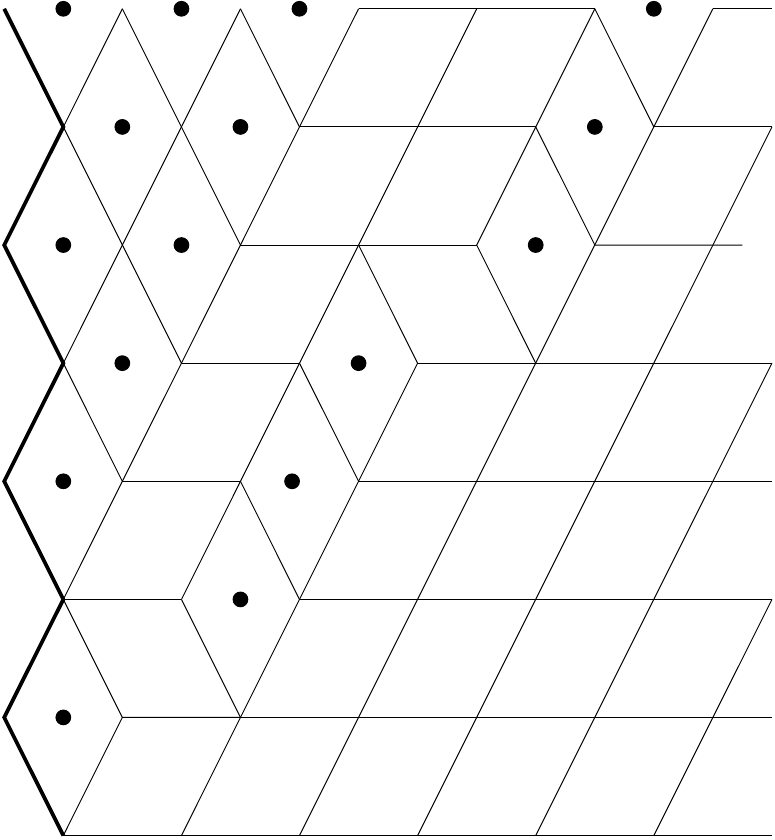} \label{jumps}
\end{figure}
\end{center}
Figure \ref{jumps} depicts three possible first jumps: Left clock of
$y_1^1$ rings first (it gets reflected by the wall), then right
clock of $y_1^5$ rings, and then left clock of $y_1^1$ again.


In terms of the underlying stepped surface, the evolution can be
described by saying that we add possible ``sticks'' with base
$1\times 1$ and arbitrary length of a fixed orientation with rate
1/2, remove possible ``sticks'' with base $1\times 1$ and a
different orientation with rate 1/2, and the rate of removing sticks
that touch the left border is doubled.\footnote{This phrase is based
on the convention that
\includegraphics[height=0.4cm]{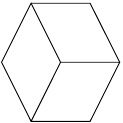} is a figure of a
$1\times1\times1$ cube. If one uses the dual convention that this is
a cube-shaped hole then the orientations of the sticks to be added
and removed have to be interchanged, and the tiling representations
of the sticks change as well.} 

A computer simulation of this dynamics can be found at\\
\href{http://www.math.caltech.edu/papers/Orth_Planch.html}
{$\mathtt{http://www.math.caltech.edu/papers/Orth\_Planch.html}$}.

This particle system has important connections to the representation theory of the orthogonal groups, to the Kardar--Parisi--Zhang equation from mathematical physics, and to random lozenge tilings. The interested eader is referred to the introduction of \cite{kn:BK1}. 

\smallskip

\textbf{Limit shape}
A very natural question about this random surface is to ask if it satisfies a law of large numbers and central limit theorem. In other words, in the large $N$ limit, the random surface should converge to a deterministic limit shape, and the fluctuations around this limit shape should be a reasonably nice object. This paper will prove that the flucutations are described by the Gaussian free field, but first let us describe the limit shape, which was proved in Proposition 5.6 of \cite{kn:BK1}. 

Let $H(x,n,t)$ denote the height function, i.e. the number of particles to the right of $(x,n)$ at time $t$. Define $h$ to be
$$
h(\nu,\eta,\tau) := \lim_{N\rightarrow\infty} \frac{1}{N} \mathbb{E}H(\nu N,\lf \eta N\rf,\tau N).
$$
Thus, $h$ describes the deterministic limit shape. It can be described explicitly as follows. Let $G(z)=G(\nu,\eta,\tau;z)$ be the function
\begin{equation}\label{Stdl}
G(\nu,\eta,\tau;z) = \tau\frac{z+z^{-1}}{2} + \eta\log \left( \frac{z+z^{-1}}{2}-1\right) - \nu\log z.
\end{equation}
There is an explicit (in the sense that it can be written in terms of algebraic functions) connected domain $\mathcal{D}$ consisting of all triples $(\nu,\eta,\tau)$ such that $G(\nu,\eta,\tau;z)$ has a unique critical point in the region $\mathbb{H}-\mathbb{D}=\{z:\Im z>0 \text{ and } \vert z\vert>1\}$. This induces a map $\Omega:\mathcal{D}\rightarrow \mathbb{H}-\mathbb{D}$ by sending $(\nu,\eta,\tau)$ to the critical point of $G(z)$. Then 
$$
h(\nu,\eta,\tau)=\Im\left(\frac{S(\Omega(\nu,\eta,\tau))}{2\pi}\right).
$$
Outside of $\mathcal{D}$, the limit shape is trivial -- that is, if $\nu $ is too large, then there are no particles to the right of $(\nu N,\eta N)$ at time $\tau N$, so the height function is zero. If $\eta$ is too small, then all the particles are to the right of $(\nu N,\eta N)$ at time $tN$, so the height function is $\eta/2$. In the literaure, $\mathcal{D}$ is called the \textit{liquid region} and the triples $(\nu,\eta,\tau)$ outside of $\mathcal{D}$ is called the \textit{frozen region}.

\smallskip

\textbf{Gaussian free field fluctuations}
In order to describe the fluctuations, let us review the Gaussian free field. A comprehensive survey can be found in \cite{kn:S}. The Gaussian free field is a Gaussian probability measure on a suitable class of distributions on a domain $D\subset \R^d$. More precisely, given compactly supported smooth test functions $\{\phi_m\}_{m=1}^{\infty}$, the random variables $\{\text{GFF}(\phi_m)\}_{m=1}^{\infty}$ are mean zero Gaussians with covariance
\begin{equation}\label{GFFdefn}
\mathbb{E}[\text{GFF}(\phi_{m_1})\text{GFF}(\phi_{m_2})] = \int_{D\times D} \phi_{m_1}(z_1)\phi_{m_2}(z_2)\mathcal{G}(z_1,z_2)dz_1dz_2,
\end{equation}
where $\mathcal{G}(z_1,z_2)$ is the Green function for the Laplacian on $D$ with Dirichlet boundary conditions. 

Formally, one could attempt to set $\phi_{m}=\delta_{z_m}$ for $ z_m\in D$ in order to define the Gaussian free field at a point. However \eqref{GFFdefn} would imply that $\text{GFF}(z)$ has variance $\mathcal{G}(z,z)$, which is undefined for $d\geq 2$. However, for pairwise distinct points $z_1,\ldots,z_k$ one expects from Wick's theorem
$$
\displaystyle
\mathbb{E}[\text{GFF}(z_1)\ldots\text{GFF}(z_k)] =
\begin{cases}
\sum_{\sigma} \prod_{i=1}^{k/2} \mathcal{G}(z_{\sigma(2i-1)},z_{\sigma(2i)}), \quad &\text{k even }\\
0, \quad &\text{ k odd}
\end{cases}
$$
where the sum is over all fixed point free involutions $\sigma$ on $\{1,\ldots,k\}$. This can be made into a rigorous statement:
$$
\mathbb{E}[\text{GFF}(\phi_1)\ldots\text{GFF}(\phi_k)] = \int_{D^n}\mathbb{E}[\text{GFF}(z_1)\ldots\text{GFF}(z_k)] \prod_{i=1}^k \phi_i(z_i)dz_i.
$$
Furthermore, these moments uniquely determine the Gaussian free field.

\begin{theorem}\label{OrthTheorem}
Let $\varkappa_j=(\nu_j,\eta_j,\tau)\in \mathcal{D}$ for $1\leq j\leq k$. Define
$$
H_N(\nu,\eta,\tau) := \frac{1}{N}\left(H(\nu N,\lf \eta N\rf, \tau N) - \mathbb{E}H(\nu N,\lf \eta N\rf, \tau N) \right)
$$
and let $\Omega_j=\Omega(\varkappa_j).$ Then
$$
\lim_{N\rightarrow \infty} \mathbb{E}(H_N(\varkappa_1)\cdots H_N(\varkappa_k)) = 
\begin{cases}
\sum_{\sigma} \prod_{i=1}^{k/2} \mathcal{G}(\Omega_{\sigma(2i-1)},\Omega_{\sigma(2i)}), \quad &\text{ k even }\\
0, \quad &\text{ k odd}
\end{cases}
$$ 
where the sum is over all fixed point free involutions $\sigma$ on $\{1,\ldots,k\}$ and $\mathcal{G}$ is the Green's function for the Laplacian on $\mathbb{H}-\mathbb{D}$ with Dirichlet boundary conditions:
\begin{equation}\label{Dirichlet}
\mathcal{G}(z,w) = \frac{1}{2\pi}\log\left(\frac{z+z^{-1}-\bar{w}-\bar{w}^{-1}}{z+z^{-1}-w-w^{-1}} \right).
\end{equation}
\end{theorem}

\smallskip

\textbf{Idea of proof and generalization}
The proof uses a very specific property of the interacting particle system, namely that it is a \textit{determinantal point process}. There are several previous examples of determinantal point processes having Gaussian free field fluctuations \cite{kn:BF,kn:CJY,kn:D,kn:P}. (See also \cite{kn:K}The essential idea in these proofs is similar.. One takes an explicit formula for the correlation kernel $K(x,y)$, and then asymptotic analysis on $K(x,x)$ provides information about the limit shape while asymptotics of $K(x,y),x\neq y$ provides information about the fluctuations. In \cite{kn:BK1}, an explicit formula for the correlation kernel was proved, enabling steepest descent analysis. 

It is thus natural to ask: given a determinantal point process with an explicit correlation kernel, is there a general statement that the fluctuations of the height function are governed by the Gaussian free field? The answer is yes.

\begin{theorem}
Suppose we are given a particle system on $\Z\times\Z_{\geq 1}$ which is a determinantal point process with correlation kernel
$$
K(x_1,n_1,x_2,n_2,t)\approx
\left(\frac{1}{2\pi i}\right)^2\int_{\Gamma_1}\int_{\Gamma_2}\frac{\exp(NG(\nu_1,\eta_1,\tau,u))}{\exp(NG(\nu_2,\eta_2,\tau,w))}f(u,w)dwdu,
$$
where $\Gamma_1,\Gamma_2$ are steepest descent paths. We make certain technical assumptions about $K$ (see Definition \label{Normal} below).

Let $\mathcal{D}\subset\R^3$ be the liquid region and let $\Omega:\mathcal{D}\rightarrow\C$ send $(\nu,\eta,\tau)$ to the critical point of $G(\nu,\eta,\tau,z)$. 
If $H_N$ denotes the scaled and centered random height function of the particle system, then for $\varkappa_1,\ldots,\varkappa_k\in \mathcal{D}$ with $\Omega_j=\Omega(\varkappa_j)$
\[
\E[H_N(\varkappa_1)\ldots H_N(\varkappa_k)]\rightarrow
\begin{cases}
\displaystyle\sum_{\sigma}\prod_{j=1}^{k/2} \mathcal{G}(\Omega_{\sigma(j)},\Omega_{\sigma(j+1)}),\ \ k \ \text{even}\\
0,\ \ k\ \text{odd},
\end{cases}
\]
where 
\begin{equation}\label{GeneralG}
\mathcal{G}(z,w)=\frac{1}{2\pi}\int_{\bar{z}}^z\int_{\bar{w}}^w\frac{f(u,v)f(v,u)}{G_{\nu}'(u)G_{\nu}'(v)}dudv,
\end{equation}
with $G_{\nu}'$ denoting $(\partial^2/\partial \nu\partial z) G.$
\end{theorem}
The rigorous details are in Section \ref{GenRes}. In particular, the formula for $\mathcal{G}$ in \eqref{Dirichlet} follows from \eqref{GeneralG} with $S$ as in \eqref{Stdl} and 
$$
f(u,v) = \frac{1}{v} \frac{1-u^{-2}}{v+v^{-1}-u-u^{-1}}.
$$

\smallskip

\textbf{Outline of paper} 
In section \ref{Statement of the Main Theorem}, we state precisely the assumptions on the determinantal point process, as well as explain why these assumptions are natural. In sections \ref{AlgebraGeneral} and \ref{AnalysisGeneral}, we prove Theorem \ref{GeneralG}. In section \ref{Specific Results}, we show that Theorem \ref{OrthTheorem} follows once we prove that the interacting particle system with a reflecting wall satisfies the necessary technical assumptions. In section \ref{SpecificAlgebra} and \ref{SpecificAnalysis}, we show that the necessary technical assumptions indeed hold. Section 4 collects the asymptotic analysis needed throughout the proofs. 

\textbf{Acknowledgments} The author would like to thank Alexei Borodin and Maurice Duits for insightful comments. The author was supported by the National Science Foundation's Graduate Research Fellowship Program.

\section{General Results}\label{GenRes}
\subsection{Statement of the Main Theorem}\label{Statement of the Main Theorem}
Suppose we have a family of point processes on $\X=\Z\times\Z_{\geq 1}$ which runs over time $t\in [0,\infty)$. (Note that these are different co-ordinates from the introduction). In other words, at any time $t$, the system selects a random subset $X\subset\X$. If $(x,n)\in X$, then we say that there is a particle at $(x,n)$. For any $k\geq 1$ and $t\geq 0$, let $\rho_k^t:\X^k\rightarrow [0,1]$ be defined by
\begin{multline*}
\rho_k^t(x_1,n_1,\ldots,x_k,n_k) \\
= \mathbb{P}(\text{There is a particle at}\ (x_j,n_j)\ \text{at time}\ t\ \text{for each}\ j=1,\ldots,k).
\end{multline*}
Assume that there is a map $K$ on $\X\times\X\times [0,\infty)$ such that
\begin{equation}\label{DeterminantalCorrelationFunction}
\rho_k^t(x_1,n_1,\ldots,x_k,n_k) = \det[K(x_i,n_i,x_j,n_j,t)]_{1\leq i,j\leq k}.
\end{equation}
The maps $\rho_k$ and $K$ are called the \textit{kth correlation function} and the \textit{correlation kernel}, respectively. 

A function $c$ on $\X\times\X$ is called a \textit{conjugating factor} if there exists another function $\mathcal{C}$ on $\X$ such that
\[
c(x,n,x',n')=\frac{\mathcal{C}(x,n)}{\mathcal{C}(x',n')}.
\]
Note that if $c$ is a conjugating factor, then
\begin{equation}\label{Conjugating}
\det[K(x_i,n_i,x_j,n_j,t)]_{1\leq i,j\leq k} = \det[c(x_i,n_i,x_j,n_j)K(x_i,n_i,x_j,n_j,t)]_{1\leq i,j\leq k}.
\end{equation}
Two kernels $K$ and $\tilde{K}$ are called \textit{conjugate} if $\tilde{K}=cK$ for some conjugating factor $c$.

If a correlation kernel exists, the point process is called \textit{determinantal}. On a discrete space, a point process is determined uniquely by its correlation functions (see e.g. \cite{kn:L}).  Therefore, if we are given two determinantal point process on a discrete space with conjugate kernels, they must have the same law.

The set $\Z\times\{n\}$ is called the \textit{nth level}. Given a subset $X\subset\X$, let $m_n$ be the cardinality of the set $X\cap (\Z \times \{n\})$. Assume that the numbers $m_n$ take constant finite values which are independent of the time parameter $t$. In words, this means that the number of particles on the $n$th level is always $m_n$. Further assume that $m_n\leq m_{n+1}\leq m_n+1$ for all $n$. Let $x_1^{(n)} > x_2^{(n)} > \ldots > x_{m_n}^{(n)}$ denote the elements of $X\cap (\Z \times \{n\})$. A subset $X$ is called \textit{interlacing} if 
\[
x_{k+1}^{(n+1)} < x_k^{(n)} \leq x_k^{(n+1)}, \ \ \text{when} \ \ m_{n+1}=m_{n},
\]
\[
x_{k+1}^{(n+1)} \leq x_k^{(n)} < x_k^{(n+1)}, \ \ \text{when} \ \ m_{n+1}=m_{n}+1.
\]
Assume that at any time $t$, the system almost surely selects an interlacing subset.  Let $\delta_n$ equal $m_{n+1}-m_n$.

Define the random \textit{height function} by 
\[
h: \X \times \mathbb{R}_{\geq 0} \rightarrow \mathbb{Z}_{\geq 0},
\]
\[
h(x,n,t) = \vert \{(s,n)\in X: s>x\} \vert.
\]
In words, $h$ counts the number of particles to the right of $(x,n)$ at time $t$. 

We wish to study the large-time asymptotics of this particle system. Let $x=[N\nu], n=[N\eta], t=N\tau$, where $N$ is a large parameter. Define $\mathcal{D}\subset\mathbb{R}\times\R_+\times\R_+$ to be 
\[
\mathcal{D}:=\{(\nu,\eta,\tau): \lim_{N\rightarrow\infty} \rho_1^t(x,n)>0\}.
\]
Let $H_N$ be defined by
\[
H_N: \mathbb{R}\times\R_+\times\R_+ \rightarrow \R,
\]
\[
H_N(\nu,\eta,\tau) := h(x,n,t) -\mathbb{E}h(x,n,t).
\]
In words, $H_N$ is the fluctuation of the height function around its expectation.

Before stating the theorem, we need to state some more assumptions on the kernel.

Suppose the kernel $K$ is conjugate to a kernel $\tilde{K}$ such that $\tilde{K}$ satisfies the following property: There is a number $L$ such that whenever $x,x'\geq L$,
\begin{multline}
\tilde{K}(x,n,x',n',t)+\tilde{K}(x,n,x'-1+\dn,n'+1,t)+\tilde{K}(x,n,x'+\dn,n'+1,t)\\
=
\begin{cases}\label{C1}
1,\ \ (x,n)=(x',n')\\
0,\ \ \text{otherwise},
\end{cases}
\end{multline}
\begin{multline}
\tilde{K}(x,n,x',n',t)+\tilde{K}(x+1-\dn,n-1,x',n',t)+\tilde{K}(x-\dn,n-1,x',n',t)\\
=
\begin{cases}\label{C2}
1,\ \ (x,n)=(x',n')\\
0,\ \ \text{otherwise}.
\end{cases}
\end{multline}
Further suppose that for $x',x''>L$, 
\begin{align}
\tilde{K}(x,n,x',n',t)\tilde{K}(x'',n'',x-1+\dn,n+1,t) \rightarrow 0\ &\text{as}\ x\rightarrow\infty\label{C3}\\
\tilde{K}(x,n,x',n',t)\tilde{K}(x'',n'',x+\dn,n+1,t) \rightarrow 0\ &\text{as}\ x\rightarrow\infty\label{C4}\\
\tilde{K}(x,n,x-1+\dn,n+1,t) \rightarrow 0\ &\text{as}\ x\rightarrow\infty\label{C6}\\
\tilde{K}(x,n,x+\delta_n,n+1,t) \rightarrow 1\ &\text{as}\ x\rightarrow\infty.\label{C5}
\end{align}

Suppose $G(\nu,\eta,\tau,z)$ is a complex-valued function on $\mathbb{R}\times\R_+\times\R_+\times\mathbb{C}$. To save space, we will sometimes write $G(z)$. Expressions such as $G',G_{\nu},G'_{\nu}$ will be shorthand for $\partial G/\partial z$, $\partial G/\partial \nu$ and $\partial^2 G/\partial z\partial\nu$, respectively. Assume $G(\overline{z})=\overline{G(z)}$.
Also suppose there exists a differentiable map $\Omega$ from $\mathcal{D}$ to the upper half-plane $\mathbb{H}=\{z\in\mathbb{C}:\Im(z)>0\}$ such that $\Omega$ is a critical point of $G$. In other words, 

\begin{equation}
G'(\nu,\eta,\tau,\Omega(\nu,\eta,\tau))=0\ \ \text{for all}\ \ (\nu,\eta,\tau)\in\mathcal{D}.
\label{DefinitionOfOmega}
\end{equation}
Note that $\Om$ need not be onto. For any $(\eta,\tau)$, if the set $\{\nu\in\mathbb{R}:(\nu,\eta,\tau)\in\mathcal{D}\}$ is nonempty, let $q_2(\eta,\tau)$ denote its supremum.

\begin{definition}\label{Normal}
With the notation above, a determinantal point process on $\Z\times\Z_{\geq 1}$ is \textit{normal} if all of the following hold:

\begin{itemize}
\item
For all $\eta,\tau>0$, the limit $\Om(q_2(\eta,\tau)-0,\eta,\tau)$ exists and is a positive real number.

\item 
For all $\eta,\tau>0$, as $\nu$ approaches $q_2(\eta,\tau)$ from the left, $G''(\nu,\eta,\tau,\Om(\nu,\eta,\tau))=\bigO((q_2(\eta,\tau)-\nu)^{1/2})$.

\item
$K$ is conjugate to some $\tilde{K}$ such that \eqref{C1}-\eqref{C5} hold for some integer $L$.


\item
Set $t=N\tau$, $x_j=[N\nu_j]$ and $n_j=[N\eta_j]$ for $j=1,2$, where $(\nu_j,\eta_j,\tau)\in\mathcal{D}$. Let $\Om_j$ denote $\Om(\nu_j,\eta_j,\tau)$ and let $G_j(z)$ denote $G(\nu_j,\eta_j,\tau,z)$. If $\Om_1\neq\Om_2$ and $k_1$,$k_2$ are finite integers, then as $N\rightarrow\infty,$
\begin{multline}
\tilde{K}(x_1,n_1+k_1,x_2,n_2+k_2,t)=\\
\left(\frac{1}{2\pi i}\right)^2\int_{\Gamma_1}\int_{\Gamma_2}\frac{\exp(NG(\nu_1,\eta_1,\tau,z))}{\exp(NG(\nu_2,\eta_2,\tau,w))}f_{k_1k_2}(u,w)dwdu+\bigO(e^{N\kappa}),
\label{AssumptiononK}
\end{multline}
where $\Gamma_1$ and $\Gamma_2$ are steepest descent paths, $\kappa< \Re(G_1(\Om_1) -G_2(\Om_2))$, and $f_{k_mk_n}$ are complex-valued meromorphic functions satisfying the identity
\begin{multline*}
f_{k_1k_2}(z_1,z_2)f_{k_2k_3}(z_2,z_3)\ldots f_{k_{r-1}k_r}(z_{r-1},z_r)f_{k_rk_1}(z_r,z_1)\\
=f(z_1,z_2)f(z_2,z_3)\ldots f(z_{r-1},z_r)f(z_r,z_1).
\end{multline*}
Here, we have written $f$ for $f_{00}$.

\item
For any $l\geq 3$, the following indefinite integral satisfies
\begin{equation}\label{Cycle}
\displaystyle\int\cdots\int\sum_{\sigma}\displaystyle\prod_{i=1}^l \frac{f(z_{\sigma(i)},z_{\sigma(i+1)})}{G_{\nu}'(z_{\sigma(i)})}dz_i \equiv 0,
\end{equation}
where the sum is taken over all $l$-cycles in $S_l$ and the indices are taken cyclically.

\item
For any finite interval $[a,b]$, $G\in C^2[a,b]$ and the Lesbesgue measure of the set $\{x\in [a,b]: I'(x)\in 2\pi\Z+[-\delta,\delta]\}$ is $\bigO(\delta^a)$ for some positive $a$. 
\end{itemize}
\end{definition}

The following remarks will help explain the definition.
\begin{remark}\label{NormalExplanation}
(1) The assumption that $\Om(q_2(\eta,\tau)-0,\eta,\tau)>0$ occurs naturally. One often finds that for $k_1\neq k_2\in\Z$,
\[
\displaystyle\lim_{N\rightarrow\infty} K([N\nu]+k_1,[N\eta],[N\nu]+k_2,[N\eta],N\tau)=\frac{1}{2\pi i}\int_{\bar{\Om}}^{\Om} \frac{dz}{z^{k_1-k_2+1}}=\frac{\Im(\Om^{k_2-k_1})}{\pi(k_2-k_1)},
\]
where the contour crosses the positive real line. By setting $\Om=e^{i\varphi}$, we see that the right hand side reduces to the ubiquitous sine kernel. When $k_1=k_2=0$, we see that
\[
\lim_{N\rightarrow\infty} \rho_1^{N\tau}([N\nu],[N\eta])=\frac{1}{2\pi}(\log\Om-\log\bar{\Om})=\frac{\arg\Om(\nu,\eta,\tau)}{\pi}.
\]
Since the left hand side equals zero, we expect $\arg\Om(q_2(\eta,\tau)-0,\eta,\tau)=0$.

(2) Since $G(\overline{z})=\overline{G(z)}$, this means that $G$ has a critical point at both $\Om$ and $\bar{\Om}$. As $\Om$ approaches the real line, the two critical points coalesce into a triple zero, so $G''(t,\eta,\tau)$ converges to $0$ as $t$ approaches $q_2(\eta,\tau)$. We need a control for how quickly this convergence to $0$ occurs, in order to order to control the behavior near the boundary of $\mathcal{D}$. More specifically, it controls the bound in Proposition \ref{EdgeAsymptotics}.

There is a heuristic understanding for why (2) should hold. The function $G$ has two critical points which coalesce into a triple zero. The simplest example of such a function is $G(t,x)=x^3/3-tx$ as $t$ approaches $0$. In this case, the solution to $G'(t,x)=0$ is $\Om(t,x)=t^{1/2}$. Then $G''(t,\Om(t,x))=2t^{1/2}$.  


(3) Assumptions \eqref{C1}--\eqref{C5} will be elucidated when we interpret the particles as lozenges. In particular, see remark \ref{LozengeRemark}.

(4) It is common for the kernel to be expressed in this form; previous examples are \cite{kn:BK1} and \cite{kn:BF}. If the kernel has a different expression with the same asymptotics as in Propositions \ref{Prop6.9} and \ref{EdgeAsymptotics}, the results still hold.

(5) In particular, \eqref{Cycle} holds if there always exist $u$-substitutions and an expression $Y$ such that  
\[
\int\cdots\int \displaystyle\prod_{i=1}^l \frac{f(z_i,z_{i+1})}{G_{\nu_1}'(z_i)} dz_i =\int\cdots\int \displaystyle\prod_{i=1}^l \frac{1}{Y(u_i)-Y(u_{i+1})}du_i,
\]
where $z_{l+1}=z_1$ and $u_{l+1}=u_1$. This is because of Lemma 7.3 in \cite{kn:K}, which refers back to \cite{kn:B}, which says that 
\[
\displaystyle\sum_{\sigma}  \prod_{i=1}^l \frac{1}{Y(u_{\sigma(i)})-Y(u_{\sigma(i+1)})}=0.
\]

(6) This is a technical lemma which allows Lemma \ref{ZeroLemma} to be applied.
\end{remark}

We can now state the main theorem.
\begin{theorem}\label{MainTheorem}
Suppose we are given a normal determinantal point process. For $1\leq j\leq k$, let $\varkappa_j=(\nu_j,\eta_j,\tau)$ be distinct points in $\mathcal{D}$ , and let $\Om_j=\Om(\nu_j,\eta_j,\tau)$. Define the function $\mathcal{G}$ on the upper half-plane to be
\[
\mathcal{G}(z,w)=\left(\frac{1}{2\pi}\right)^2\int_{\bar{z}}^z\int_{\bar{w}}^w\frac{f(z_1,z_2)f(z_2,z_1)}{G_{\nu}'(z_1)G_{\nu}'(z_2)}dz_2dz_1
\]
Then
\[
\displaystyle\lim_{N\rightarrow\infty}\mathbb{E}(H_N(\varkappa_1)\cdots H_N(\varkappa_k))=
\begin{cases}
\displaystyle\sum_{\sigma\in\mathcal{F}_k}\prod_{j=1}^{k/2}\mathcal{G}(\Om_{\sigma(2j-1)},\Om_{\sigma(2j)}),\ \ &\text{k is even}\\
0,\ \ & \text{k is odd},
\end{cases}
\]
where $\mathcal{F}_k$ is the set of all involutions in $S_k$ without fixed points.
\end{theorem}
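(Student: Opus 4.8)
The plan is to follow the standard strategy for establishing Wick's theorem for height-function fluctuations in determinantal systems: express the mixed moments $\mathbb{E}(H_N(\varkappa_1)\cdots H_N(\varkappa_k))$ as sums of determinantal correlations, pass to the joint cumulants, show that only the pair cumulants survive in the limit, and identify the limiting pair cumulant with $\mathcal{G}(\Om_i,\Om_j)$. Concretely, writing each $H_N(\varkappa_j)$ as a signed sum over sites $(s,n_j)$ with $s > x_j$ of the centered occupation variable $\mathbf{1}_{(s,n_j)\in X} - \mathbb{E}\mathbf{1}_{(s,n_j)\in X}$, one uses the determinantal structure to write the $k$-th joint cumulant of the occupation variables as a sum over $k$-cycles $\sigma\in S_k$ of products $\prod_{i} K(\text{site}_{\sigma(i)},\text{site}_{\sigma(i+1)},t)$. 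Summing over the sites (with the height-function signs) and over cycles, one obtains cyclic contour-integral expressions via the double-integral representation \eqref{AssumptiononK}.

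The heart of the argument is then a steepest-descent analysis. For each pair of levels one substitutes \eqref{AssumptiononK}, so that a $k$-cycle contributes an integral over $k$ steepest-descent contours $\Gamma_1,\dots,\Gamma_k$ of $\prod_i \exp(N(G_{i}(z_i) - G_{i+1}(z_{i+1}))) f_{k_ik_{i+1}}(z_i,z_{i+1})$, with the exponential factors telescoping to $1$ around the cycle. The summation over lattice points $s > x_j$ converts, via the assumptions \eqref{C1}--\eqref{C5} and the resulting telescoping/Abel-summation identities (this is exactly what those conditions are designed to do — compare Remark \ref{NormalExplanation}(3)), each occurrence of a variable into a factor $1/G_\nu'(z_i)$ together with a residue at the relevant pole, upgrading $f_{k_ik_{i+1}}$ to $f$ via the multiplicative identity for the $f_{k_mk_n}$. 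For $l = k \geq 3$, the crucial cancellation \eqref{Cycle} forces the sum over $l$-cycles of $\prod_i f(z_{\sigma(i)},z_{\sigma(i+1)})/G_\nu'(z_{\sigma(i)})$ to integrate to zero, so all cumulants of order $\geq 3$ vanish in the limit. For $k = 2$ the two $2$-cycles combine to give precisely $\mathcal{G}(\Om_1,\Om_2) = (2\pi)^{-2}\int_{\bar{\Om_1}}^{\Om_1}\int_{\bar{\Om_2}}^{\Om_2} f(z_1,z_2)f(z_2,z_1)/(G_\nu'(z_1)G_\nu'(z_2))\,dz_2\,dz_1$, since the antiderivatives in $\nu_1,\nu_2$ of the resulting integrand, evaluated at the endpoints coming from the height-function summation, reproduce the $\Om$-to-$\bar\Om$ contours. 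Having only pair cumulants surviving, the moment-cumulant formula gives exactly the sum over fixed-point-free involutions with the $\mathcal{G}(\Om_{\sigma(2j-1)},\Om_{\sigma(2j)})$ factors, and the odd moments vanish.

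Two technical points need care and each is handled by a dedicated piece of the hypotheses. First, interchanging the infinite lattice summation (over $s > x_j$, which is a divergent range) with the contour integrals requires the decay estimates \eqref{C3}--\eqref{C6} and the convergence \eqref{C5}, together with the last bullet of Definition \ref{Normal} controlling the measure of near-resonant $\nu$ (so that $1/(e^{iI'(x)}-1)$-type denominators are integrable); this is the bookkeeping that turns formal telescoping into genuine $L^1$-convergent sums. Second, the steepest-descent estimates degenerate near the boundary of $\mathcal{D}$, where $\Om$ approaches the real axis and $G''(\Om)\to 0$; the hypothesis $G''(\nu,\eta,\tau,\Om) = \bigO((q_2-\nu)^{1/2})$ and $\Om(q_2-0,\eta,\tau)$ being a positive real number are exactly what is needed for the edge asymptotics (Proposition \ref{EdgeAsymptotics}, to be proved in Section \ref{Asymptotics}) to give a uniform bound on the contribution from the edge region, so the main-term computation dominates.

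I expect the main obstacle to be the second point: making the asymptotic analysis uniform all the way up to $\nu = q_2(\eta,\tau)$, i.e. controlling the contour integrals when the two critical points $\Om$ and $\bar\Om$ nearly coalesce into a triple critical point. Away from the edge the saddle-point method is routine, but near the edge one must either deform to an Airy-type local parametrix or, as suggested by Remark \ref{NormalExplanation}(2), use the model $G \approx z^3/3 - tz$ with $t = t(\nu)\to 0$ at a rate controlled by the $\bigO((q_2-\nu)^{1/2})$ hypothesis, and then show the resulting error is $o(1)$ after integration over $\nu$. The combinatorial skeleton (cumulants, cycles, \eqref{Cycle}) is clean; the analytic uniformity near the boundary is where the real work lies, and it is precisely why the "normal" conditions were front-loaded with those two edge hypotheses.
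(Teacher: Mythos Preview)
Your high-level architecture is right and matches the paper: expand the centered moment as a determinant with zero diagonal, decompose over permutations into cycles, show each cycle contributes a contour integral, invoke \eqref{Cycle} to kill cycles of length $\geq 3$, and read off $\mathcal{G}$ from the $2$-cycles. You also correctly flag the edge uniformity as the main analytic difficulty, which is precisely what Proposition~\ref{EdgeAsymptotics} and the $\bigO((q_2-\nu)^{1/2})$ hypothesis handle.

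However, you misidentify two mechanisms, and the first is a real gap. You claim that conditions \eqref{C1}--\eqref{C5} are what convert the lattice sum $\sum_{s>x_j}$ into the factor $1/G_\nu'(z_i)$ via ``telescoping/Abel summation.'' They do not. In the paper, \eqref{C1}--\eqref{C5} are used solely to prove Proposition~\ref{LozengeKernel}, i.e.\ to extend the determinantal structure from particles to all three lozenge types; this is what permits the alternative height formula \eqref{HeightFormula2}--\eqref{HeightFormula3} and hence the disjointness hypothesis in the Lemma producing \eqref{LemmaToProve} (needed when several $\eta_j$ coincide). If you try to telescope the $s$-sum using \eqref{C1} or \eqref{C2} you will find that those identities relate values at adjacent \emph{levels}, not adjacent horizontal sites, and no Abel summation emerges.

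The factor $1/G_\nu'(z)$ arises by a completely different route (Proposition~\ref{ProductOfKernels}): one first applies the saddle-point expansion (Proposition~\ref{Prop6.9}) to each kernel entry, so that the product $K(x_1,n_1,y,n_2)K(y,n_2,x_3,n_3)$ acquires a factor $1/(N\,|G_2''(\Om_2)|)$ with $\Om_2=\Om(y/N,\eta_2,\tau)$; the sum over $y$ then becomes a Riemann sum in $\nu=y/N$ (this is where Propositions~\ref{OneProposition} and \ref{RiemannApproximation} and the last bullet of Definition~\ref{Normal} enter, to kill the non-cancelling oscillatory pieces); finally one substitutes $z=\Om(\nu,\eta_2,\tau)$ and uses $\partial\Om/\partial\nu=-G_\nu'(\Om)/G''(\Om)$ (differentiating \eqref{DefinitionOfOmega}) to obtain $d\nu/G''(\Om)=-dz/G_\nu'(z)$. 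So $1/G_\nu'$ is a Jacobian, not a telescoped residue. Your sketch would need to be reworked along these lines before it could be executed.
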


\begin{remark} We note that these are the moments of a linear family of Gaussian random variables: see Appendix A. Using the results of \cite{kn:So}, it should be possible to show that $H_N(\varkappa)/\sqrt{\text{Var} H_N(\varkappa)}$ converges to a Gaussian, but this was not pursued. 
\end{remark}

\subsection{Algebraic steps in proof of Theorem \ref{MainTheorem}}\label{AlgebraGeneral}

The most natural way to view $\X$ is as a square lattice. However, it turns out that a hexagonal lattice is more useful. To obtain the hexagonal lattice, take the $n$th level and shift it to the right by $(n+1)/2-m_n$. See Figure \ref{Particles}. 

Figure \ref{Particles} also shows that the particle system can be interpreted as lozenges. Each lozenge is a pair of adjacent equilateral triangles. See Figure \ref{Lozenges}. 

By setting the location of each triangle to be the midpoint of its horizontal side, each lozenge can be viewed as a pair $(x,n,x',n')$, where the black triangle is located at $(x,n)$ and the white triangle is located at $(x',n')$. For example, in Figure \ref{Particles} there are lozenges $(1,3,1,3), (2,3,2,4),$ and $(0,3,1,4)$. The three types of lozenges can be described as follows. For lozenges of type I, $(x',n')=(x,n)$. For lozenges of type II, $(x',n')=(x-1+\delta_n,n+1)$. For lozenges of type III, $(x',n')=(x+\delta_n,n+1)$. Note that a lozenge of type I is just a particle. 

We say that $(x,n,x',n')\in\mathfrak{X}\times\mathfrak{X}$ is \textit{viable} if $(x',n')=(x,n), (x-1+\delta_n,n+1)$, or $(x+\delta_n,n+1)$. A sequence $(x_1,n_1,x_1',n_1'),\ldots,(x_k,n_k,x_k',n_k')$ of viable elements is \textit{non-overlapping} if $(x_1,n_1),\ldots,(x_k,n_k)$ are all distinct from each other and $(x_1',n_1'),\ldots,(x_k',n_k')$ are also all distinct from each other. We do, however, allow the possibility of $(x_i,n_i)=(x_j',n_j')$.

\begin{figure}
\caption{In this example, the integers $m_n$ equal $1,1,1,2,3,3,4\ldots$. The black line on the left represents the points where $x=0$. Examples of $x_{k}^{(n)}$ are $x_1^{(3)}=1,x_1^{(4)}=3,x_2^{(7)}=4$.}
\begin{center}
\includegraphics[height=2in]{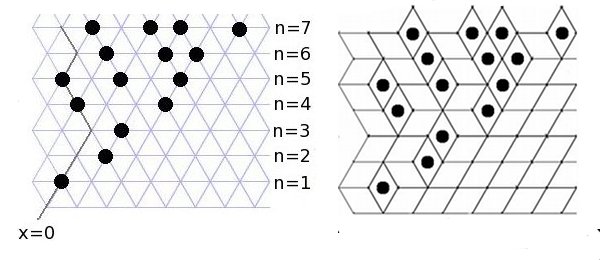}
\end{center}
\label{Particles}
\end{figure}

\begin{figure}
\caption{Lozenges of types I,II, and III, respectively. Note that lozenges of type I occur exactly at the same places as particles.}
\centering
\includegraphics[height=0.4in]{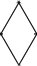}%
\hspace{1in}%
\includegraphics[height=0.2in]{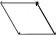}%
\hspace{1in}%
\vspace{0.2in}%
\includegraphics[height=0.2in]{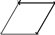}%

\includegraphics[height=0.4in]{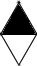}%
\hspace{1in}%
\includegraphics[height=0.2in]{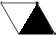}%
\hspace{1in}%
\includegraphics[height=0.2in]{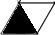}%
\label{Lozenges}
\end{figure}

The statement and proof of the next proposition are similar to Theorem 5.1 of \text{kn:BF}.

\begin{proposition}\label{LozengeKernel}
Suppose the kernel $K$ is conjugate to some $\tilde{K}$ such that \eqref{C1}-\eqref{C5} hold for some $L$. If  $t\geq 0, x_1,x_1',\ldots,x_k,x_k'>L,$ and $(x_1,n_1,x_1',n_1'),\ldots,$ $(x_k,n_k,x_k',n_k')$ is a sequence of non-overlapping viable elements of $\X\times\X$, then 
\begin{multline}\label{PEqualsD}
\mathbb{P}(\text{There is a lozenge}\ (x_j,n_j,x_j',n_j')\ \text{at time}\ t\ \text{for each}\ j=1,\ldots,k)\\
=\det[\tilde{K}(x_i,n_i,x_j',n_j',t)]_{1\leq i,j\leq k}.
\end{multline}
\end{proposition}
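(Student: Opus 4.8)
The plan is to reduce the statement about lozenge probabilities to the given determinantal formula \eqref{DeterminantalCorrelationFunction} for particle configurations, by expressing the indicator that a given lozenge is present as a linear combination of indicators about the presence or absence of a small number of particles. First I would recall that a lozenge of type I at $(x,n,x,n)$ is simply a particle at $(x,n)$, so in that case there is nothing to do. For a lozenge of type II, located at $(x,n,x-1+\dn,n+1)$, the geometry of the interlacing picture (Figure \ref{Particles}) shows that such a lozenge is present precisely when there is a particle at $(x,n)$ and the particle on level $n+1$ that "would block it from the right" is absent at the relevant site; concretely, presence of the type-II lozenge should be expressible via the event that $(x,n)\in X$ together with the absence of a particle at one of the neighboring sites on level $n+1$. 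The analogous statement holds for type III. The key combinatorial input is that, because of interlacing, the relevant "absence" event on level $n+1$ is forced to be one specific site, so each lozenge indicator is a $\pm1$ combination of at most two particle-occupation indicators.

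The engine of the proof is then the standard fact (used e.g. in Theorem 5.1 of \cite{kn:BF}) that complementation of a determinantal point process on a discrete space is again determinantal: if the occupied sites have kernel $\tilde K$, then the point process counting \emph{holes} at a chosen collection of sites, with particles at the remaining sites, has correlation functions given by the determinant of the kernel $\delta_{ij} - \tilde K$ at the "hole" arguments and $\tilde K$ elsewhere, up to signs. Thus I would write the left-hand side of \eqref{PEqualsD} as an inclusion-exclusion sum over which of the auxiliary level-$(n+1)$ sites are occupied, express each term as a $k'\times k'$ determinant in $\tilde K$ (with $k' \ge k$), and then collapse the sum. The identities \eqref{C1} and \eqref{C2} are exactly what make this collapse work: \eqref{C1} says that for $x'\ge L$ the three occupation probabilities $\tilde K$ associated to a site $(x',n')$ and its two type-II/type-III partners on level $n'+1$ sum (in the kernel) to the identity, and \eqref{C2} is the dual statement with the roles of rows and columns swapped. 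Feeding these relations into the multilinear expansion of the determinant — doing column operations for \eqref{C1} and row operations for \eqref{C2} — telescopes the inclusion-exclusion sum down to the single determinant $\det[\tilde K(x_i,n_i,x_j',n_j',t)]$.

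More precisely, the bookkeeping I would carry out is this: after conjugating so that the kernel is $\tilde K$ (which by \eqref{Conjugating} does not change any of the determinants), I introduce for each $j$ the set of particle sites whose occupation status determines the $j$th lozenge, form the joint occupation/hole probability as a determinant over the union of all these sites, and then perform, for each $j$, the substitution dictated by \eqref{C1} (if the $j$th lozenge is "vertical", i.e. $n_j' = n_j+1$) inside the appropriate column. Because the non-overlapping hypothesis guarantees the $(x_i,n_i)$ are distinct and the $(x_i',n_i')$ are distinct, these column (resp. row) operations act on disjoint sets of indices and commute, so they can be applied simultaneously; each one replaces a $3$-term sum of entries by either $1$ (on the diagonal) or $0$, exactly reproducing the entry $\tilde K(x_i,n_i,x_j',n_j',t)$. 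The decay conditions \eqref{C3}--\eqref{C5} are needed to justify that the relevant sums over the (a priori semi-infinite) families of sites on level $n+1$ converge and that boundary terms in the telescoping vanish; this is where the hypotheses $x_i, x_i' > L$ and the "as $x\to\infty$" limits are used.

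The main obstacle I expect is the careful setup of the inclusion–exclusion and the verification that the signs produced by complementation match the signs in \eqref{C1}--\eqref{C2}, together with the convergence/boundary argument justifying that one may work with the infinitely many sites on each level and still have the telescoping sum terminate. The combinatorics of "which site on level $n+1$ is the forced partner of a given type-II or type-III lozenge" has to be pinned down from the interlacing definition, and one must check that the non-overlapping condition really does decouple the row and column operations; once that is in place, the algebra is the same finite multilinear manipulation as in \cite{kn:BF}.
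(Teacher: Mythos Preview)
Your proposal contains a genuine gap at the very first step: the geometric characterization of a type-II lozenge is incorrect, and this propagates through the rest of the sketch. A type-II lozenge $(x,n,x-1+\dn,n+1)$ means the black triangle at $(x,n)$ is paired with the white triangle at $(x-1+\dn,n+1)$; in particular it forces that there is \emph{no} particle at $(x,n)$ (a particle at $(x,n)$ is precisely a type-I lozenge there). More importantly, the indicator of a type-II or type-III lozenge is \emph{not} a $\pm 1$ combination of finitely many particle-occupation indicators. From the black-triangle identity one only gets $\mathbf{1}(\text{type II at }(x,n)) = 1 - \mathbf{1}(\text{particle at }(x,n)) - \mathbf{1}(\text{type III at }(x,n))$, and the last term is again a lozenge indicator, not a particle event. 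Unwinding this recursively shifts the problem one unit to the right indefinitely; there is no finite local particle/hole description. Consequently the particle--hole complementation trick for determinantal processes does not apply: type-II and type-III lozenges are not holes at specified sites, so there is no finite determinant over a ``union of sites'' on which to perform your column operations.

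The paper's proof uses exactly this infinite shift, but organizes it differently. It argues by induction on the number of non-type-I lozenges. The identities \eqref{C1} and \eqref{C2} are first used to prove, at the level of the full $k\times k$ determinant (not as column operations inside a larger matrix), that $D(S\cup\{\text{type I}\})+D(S\cup\{\text{type II}\})+D(S\cup\{\text{type III}\})=D(S)$, and similarly for the white-triangle version. The same identity holds trivially for the probability $P$. Subtracting and combining the black- and white-triangle versions gives $D-P$ for a type-II lozenge at $(x,n)$ equal to $\pm(D-P)$ for a type-II (or type-III) lozenge at $(x+1,n)$. Iterating pushes the discrepancy to $x\to\infty$, where \eqref{C3}--\eqref{C5} (and the fact that type-II lozenges almost surely do not occur far to the right) force both $D$ and $P$ to vanish; if the iteration terminates finitely because of an overlap with $S$, both sides vanish for the trivial reason that two rows (or two lozenges at one triangle) coincide. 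Your proposal has the right list of ingredients --- \eqref{C1}/\eqref{C2} for the reduction, \eqref{C3}--\eqref{C5} for the boundary --- but the mechanism you describe (finite inclusion--exclusion plus complementation) does not exist here; the argument is intrinsically an infinite telescoping shift.
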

\begin{remark}\label{LozengeRemark}
The equations \eqref{C1}--\eqref{C5} can now be intuitively understood. Equation \eqref{C1} says that each black triangle is located in exactly one of the three lozenges around it, and equation \eqref{C2} makes an identical statement for white triangles. Equations \eqref{C3} and \eqref{C6} say that lozenges of type II almost surely do not occur far to the right of the particles, with \eqref{C3} controlling the off-diagonal entries in the determinant and \eqref{C6} controllling the diagonal entries. Similarly, equations \eqref{C4} and \eqref{C5} says that lozenges of type III almost surely do occur far to the right of the particles. This intuition will be exploited in the proof of Thereom \ref{LozengeKernel}.
\end{remark}
\begin{proof}
We proceed by induction on the number of lozenges that are not of type I. When this number is zero, the statement reduces to \eqref{DeterminantalCorrelationFunction} and \eqref{Conjugating}.

For any set $S=\{(x_1,n_1,x_1',n_1'),\ldots,(x_k,n_k,x_k',n_k')\}$ of non-overlapping, viable elements, let $P(S)$ and $D(S)$ denote the left and right hand sides of \eqref{PEqualsD}, respectively. First, as a preliminary statement, it is not hard to prove that if $(x_{k+1},n_{k+1})\neq (x_r,n_r)$ for $1\leq r\leq k$, then 
\begin{multline}\label{DEquation1}
D(S\cup\{(x_{k+1},n_{k+1},x_{k+1},n_{k+1})\})+D(S\cup\{(x_{k+1},n_{k+1},x_{k+1}-1+\dn,n_{k+1}+1)\})\\
+D(S\cup\{(x_{k+1},n_{k+1},x_{k+1}+\dn,n_{k+1}+1)\})=D(S).
\end{multline}
One simply expands the determinant in the left--hand--side as a sum over permutations $\sigma\in S_{k+1}$. One then uses \eqref{C1} to show that the sum over the $\sigma$ fixing $k+1$ equals $D(S)$, while the sum over the sigma not fixing $k+1$ equals $0$. Note that if $D$ is replaced by $P$ in \eqref{DEquation1}, the statement is immediate, since the black triangle at $(x_{k+1},n_{k+1})$ must be contained in exactly one lozenge. 


In a similar manner, if $(x_{k+1}',n_{k+1}')\neq (x_r',n_r')$ for $1\leq r\leq k$, then \eqref{C2} implies that 
\begin{multline}\label{DEquation2}
D(S\cup\{(x_{k+1},n_{k+1},x_{k+1},n_{k+1})\})+D(S\cup\{(x_{k+1}+1-\dn,n_{k+1}-1,x_{k+1},n_{k+1})\})\\
+D(S\cup\{(x_{k+1}-\dn,n_{k+1}-1,x_{k+1},n_{k+1})\})=D(S).
\end{multline}
Again, the statement holds if $D$ is replaced by $P$.

In order to prove the induction step, it suffices to prove that $D$ and $P$ still agree if we add a lozenge of type $II$ or type $III$ to $S$. Let us do type $II$, as type $III$ is similar.
Suppose that $(x,n,x-1+\dn,n+1)$ is viable and that $S\cup\{(x,n,x-1+\dn,n+1)\}$ is non-overlapping. Then equation \eqref{DEquation1} is equivalent to
\begin{multline}\label{BEquivalent}
D(S\cup\{(x,n,x-1+\dn,n+1)\})\\
= D(S)-D(S\cup\{(x,n,x,n)\})-D(S\cup\{(x,n,x+\dn,n+1)\}),
\end{multline}
and the same holds for $P$ instead of $D$.
By the induction hypothesis, 
\begin{align*}
D(S) &= P(S),\\
D(S\cup\{(x,n,x,n)\}) &= P(S\cup\{(x,n,x,n)\})\\
D(S\cup\{(x+\dn,n+1,x+\dn,n+1)\})&=P(S\cup\{(x+\dn,n+1,x+\dn,n+1)\}).
\end{align*}
Thus, \eqref{BEquivalent} implies 
\begin{multline*}
D(S\cup\{(x,n,x-1+\dn,n+1)\})-P(S\cup\{(x,n,x-1+\dn,n+1)\}))\\
=-D(S\cup\{(x,n,x+\dn,n+1)\})+P(S\cup\{(x,n,x+\dn,n+1)\}).
\end{multline*}

Assume for now that $(x_r',n_r')\neq (x+\delta_n,n+1)$ for $1\leq r\leq $k. Then we cam apply equation \eqref{DEquation2}, which implies that
\begin{multline}\label{WEquivalent}
D(S\cup\{(x,n,x-1+\dn,n+1)\})=D(S)\\
-D(S\cup\{(x+\dn,n+1,x+\dn,n+1)\})-D(S\cup\{(x+1,n,x+\dn,n+1)\}),
\end{multline}
and the same statement holds for $P$. Thus, 
\begin{multline*}
-D(S\cup\{(x,n,x+\dn,n+1)\})+P(S\cup\{(x,n,x+\dn,n+1)\})\\
=D(S\cup\{(x+1,n,x+\dn,n+1)\})-P(S\cup\{(x+1,n,x+\dn,n+1)\}).
\end{multline*}

If $S\cup\{(x+1,n,x+\delta_n,n+1)\}$ is non-overlappinng, then \eqref{BEquivalent} is again applicable. We repeatedly apply \eqref{BEquivalent} and \eqref{WEquivalent} as often as possible. First, suppose that this can be done indefinitely. Then 
\begin{multline*}
\vert D(S\cup\{(x,n,x-1+\dn,n+1)\})-P(S\cup\{(x,n,x-1+\dn,n+1)\}) \vert\\
=\displaystyle\lim_{M\rightarrow\infty} \vert D(S\cup\{(x+M,n,x-1+\dn+M,n+1)\})-P(S\cup\{(x+M,n,x-1+\dn+M,n+1)\}) \vert.
\end{multline*}
Since lozenges of type II almost surely do not appear when we look far to the right of the particles, 
\[
\displaystyle\lim_{M\rightarrow\infty} P(S\cup\{(x+M,n,x-1+\dn+M,n+1)\})=0.
\]
By expanding the determinant into a sum over $S_{k+1}$, \eqref{C3} and \eqref{C6} imply that 
\[
\displaystyle\lim_{M\rightarrow\infty} D(S\cup\{(x+M,n,t,x-1+\dn+M,n+1)\})=0.
\]

Now suppose that \eqref{BEquivalent} and \eqref{WEquivalent} can only be applied finitely many times. This means that $D(S\cup\{(x,n,x-1+\dn,n+1)\})-P(S\cup\{(x,n,x-1+\dn,n+1)\})$ equals either
\[
D(S\cup\{(x+M,n,x+M+\dn,n+1)\})-P(S\cup\{(x+M,n,x+M+\dn,n+1)\})
\]
or
\[
D(S\cup\{(x+M+1,n,x+M+\dn,n+1)\})-P(S\cup\{(x+M+1,n,x+M+\dn,n+1)\})
\]


In the first case, $S\cup\{(x+M+1,n,x+M+\dn,n+1)\}$ is non non-overlapping. This implies $D(S\cup\{(x+M+1,n,x+M+\dn,n+1)\})=0$ (because two of the rows are idential) and $P(S\cup\{(x+M+1,n,x+M+\dn,n+1)\})=0$ (because a triangle cannot be in two different lozenges at the same time). Thus, $D$ and $P$ agree. A similar argument holds in the second case. Thus, $D$ and $P$ agree whenever a lozenge of type II is added to $S$. 

An identical argument holds for type III lozenges, except that we use \eqref{C4} and \eqref{C5} instead of \eqref{C3} and \eqref{C6}.
\end{proof}

We have been describing a lozenge as a pair $(x,n,x',n')$. It can also be described as $(x',n',\lambda)$, where $(x',n')$ is the location of the white triangle and $\lambda\in\{I,II,III\}$ is the type of the loznege. Thus the proposition can be restated as the following statement.

\begin{corollary}\label{Alternative}
For any non--overlapping $(x'_1,n'_1,\lambda_1),\ldots,(x'_k,n'_k,\lambda_k)$,
\begin{multline*}
\mathbb{P}(\text{There is a lozenge}\ (x_j',n_j',\lambda_j)\ \text{at time t for each}\ j=1,\ldots,k)\\
=\det[K_{\lambda}(x_i',n_i',\lambda_i,x_j',n_j',t)]_{1\leq i,j\leq k},
\end{multline*}
where 
\begin{multline*}
K_{\lambda}(x,n,\lambda,x',n',t)=
\begin{cases}
\tilde{K}(x,n,x',n',t),\ \text{when}\ \lambda=I\\
\tilde{K}(x-\delta_{n-1},n-1,x',n',t),\ \text{when}\ \lambda=II\\
\tilde{K}(x-\delta_{n-1}-1,n-1,x',n',t),\ \text{when}\ \lambda=III
\end{cases}
\end{multline*}
\end{corollary}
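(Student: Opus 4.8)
The plan is to obtain Corollary \ref{Alternative} as an immediate reformulation of Proposition \ref{LozengeKernel}: the two ways of recording a lozenge --- as a pair $(x,n,x',n')$ consisting of its black triangle together with its white triangle, and as a triple $(x',n',\lambda)$ consisting of its white triangle together with its type $\lambda\in\{I,II,III\}$ --- are related by an explicit bijection, so the two statements are literally the same determinant written in different coordinates.

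First I would spell out this bijection. Given a lozenge recorded by its white triangle $(x',n')$ and its type $\lambda$, its black triangle $(x,n)$ is recovered by inverting the relations in the definition of the three types: for $\lambda=I$ one has $(x,n)=(x',n')$; for $\lambda=II$ the relation $x'=x-1+\dn$, $n'=n+1$ is solved to express $(x,n)$ in terms of $(x',n')$ and $\dn$; and likewise for $\lambda=III$ using $x'=x+\dn$, $n'=n+1$. Substituting these expressions for the first two arguments of $\tilde{K}$ turns the matrix $[\tilde{K}(x_i,n_i,x_j',n_j',t)]$ of Proposition \ref{LozengeKernel} into exactly the matrix $[K_{\lambda}(x_i',n_i',\lambda_i,x_j',n_j',t)]$ of the corollary; this is just unwinding the piecewise definition of $K_{\lambda}$.

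Second I would check that the hypotheses match. A sequence $(x_j',n_j',\lambda_j)$ is non-overlapping exactly when the associated sequence of pairs $(x_j,n_j,x_j',n_j')$ is a non-overlapping sequence of viable elements in the sense of Proposition \ref{LozengeKernel}: the white triangles $(x_j',n_j')$ are pairwise distinct and the black triangles $(x_j,n_j)$ are pairwise distinct, while viability is automatic since each $(x_j',n_j')$ is, by construction, of one of the three required forms relative to $(x_j,n_j)$. One also needs the black and white triangle coordinates to exceed $L$, which holds under the standing assumptions. With the dictionary and the matching of hypotheses in hand, Proposition \ref{LozengeKernel} yields the claimed identity verbatim.

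There is essentially no analytic obstacle here; the only point requiring care is bookkeeping --- correctly inverting the type relations, in particular tracking $\dn$ versus $\delta_{n-1}$ when passing between adjacent levels, and confirming that the non-overlapping property is preserved under the change of coordinates. I expect this to be the only place where one could slip, and it is a matter of careful substitution rather than a genuine difficulty.
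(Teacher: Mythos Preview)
Your proposal is correct and follows the same approach as the paper: the paper's proof consists entirely of writing down the three correspondences $(x',n',I)\leftrightarrow(x',n',x',n')$, $(x',n',II)\leftrightarrow(x'-\delta_{n'-1},n'-1,x',n')$, $(x',n',III)\leftrightarrow(x'-\delta_{n'-1}-1,n'-1,x',n')$ and invoking Proposition~\ref{LozengeKernel}. You have simply spelled out the bookkeeping (inverting the type relations and verifying that non-overlapping and viability transfer) in more detail than the paper does.
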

\begin{proof}
This is a result of the correspondences
\begin{align*}
(x',n',I)\ &\text{iff}\ (x',n',x',n'),\\
(x',n',II)\ &\text{iff}\ (x'-\delta_{n'-1},n'-1,x',n'),\\
(x',n',III)\ &\text{iff}\ (x'-\delta_{n'-1}-1,n'-1,x',n').
\end{align*}
\end{proof}

There are two different formulas for the height function. One formula is
\begin{equation}\label{HeightFormula1}
h(x,n)=\displaystyle\sum_{s>x}\mathbf{1}(\text{lozenge of type I at}\ (s,n)).
\end{equation}
It is possible to only use \eqref{HeightFormula1} to complete the proof. However,  when there are multiple points on one level, i.e. not all $\eta_1,\ldots,\eta_k$ are distinct, the computation becomes much more complicated. This is because lozenges of type I will appear in multiple sums of the form \eqref{HeightFormula1}. We can avoid this difficulty by introducing another formula for the height function:
\begin{equation}\label{HeightFormula2}
h(x,n)=h(x+\delta_n+\delta_{n+1}+\ldots+\delta_{n'-1},n')+H_{n,n'}(x),
\end{equation}
where, for $n<n'$,
\begin{equation}\label{HeightFormula3}
H_{n,n'}(x)=-\displaystyle\sum_{p=n+1}^{n'} \mathbf{1}(\text{lozenge of type II at}\ (x+\delta_n+\delta_{n+1}\ldots+\delta_{p-1},p)).
\end{equation}
Therefore, the expression 
\begin{equation}\label{Expectation}
\mathbb{E}\left(\displaystyle\prod_{j=1}^k [h(x_j,n_j)-\mathbb{E}(h(x_j,n_j))]\right)
\end{equation}
can be expressed as a sum of terms of the form
\begin{equation}\label{EExpression}
\mathbb{E}\left(\displaystyle\prod_{j=1}^{k'} [h(x_j,n_j)-\mathbb{E}(h(x_j,n_j))]\prod_{l=k'+1}^k [H_{n_l,n_l'}(x_l)-\mathbb{E}(H_{n_l,n_l'}(x_l))]\right).
\end{equation}

\begin{lemma}
Assume that the following sets are disjoint:
\begin{align*}
\{(s,n_j):s>x_j\},& \ 1\leq j\leq k'\\
 \{(x_l+\delta_{n_l}+\delta_{n_l+1}\ldots+\delta_{p-1},p): n_l+1\leq p\leq n_l'\},&\ k'+1\leq l\leq k.
\end{align*}
Then
\begin{equation}\label{LemmaToProve}
\eqref{EExpression}=\displaystyle\sum_{s_1>x_1}\cdots\sum_{s_{k'}>x_{k'}}\sum_{p_{k'+1}=n_{k'+1}+1}^{n_{k'+1}'}\cdots\sum_{p_k=n_k+1}^{n_k'}\det\left[
\begin{array}{cc}
A_{11} &  A_{12} \\
A_{21} & A_{22} 
\end{array}
\right],
\end{equation}
where the matrix blocks are:
\begin{align*}
A_{11} &= [(1-\delta_{ij})\tilde{K}(s_i,n_i,s_j,n_j,t)]_{1\leq i,j\leq k'} \\
A_{12} &= [\tilde{K}(s_i,n_i,x_j,p_j,t)]_{1\leq i\leq k',\ k'+1\leq j\leq k} \\
A_{21} &= [-\tilde{K}(x_i-\delta_{p_i-1},p_i-1,s_j,n_j,t)]_{k'+1\leq i\leq k,\ 1\leq j\leq k'}\\
A_{22} &= [-(1-\delta_{ij})\tilde{K}(x_i-\delta_{p_i-1},p_i-1,x_j,p_j,t)]_{k'+1\leq i,j\leq k}
\end{align*}
\end{lemma}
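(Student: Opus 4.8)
The plan is to expand the expectation \eqref{EExpression} using the two height-function formulas and the determinantal structure for lozenges provided by Proposition \ref{LozengeKernel} (equivalently Corollary \ref{Alternative}). First I would write each factor $h(x_j,n_j)-\mathbb{E}(h(x_j,n_j))$ for $1\leq j\leq k'$ as $\sum_{s_j>x_j}[\mathbf{1}(\text{type I at }(s_j,n_j))-\rho]$ using \eqref{HeightFormula1}, and each factor $H_{n_l,n_l'}(x_l)-\mathbb{E}(H_{n_l,n_l'}(x_l))$ for $k'+1\leq l\leq k$ as a signed sum over $p_l$ of centered indicators $-[\mathbf{1}(\text{type II at }(x_l+\delta_{n_l}+\cdots+\delta_{p_l-1},p_l))-\rho']$ using \eqref{HeightFormula3}. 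Distributing the product over all these sums, the problem reduces to computing, for fixed $s_1,\dots,s_{k'}$ and $p_{k'+1},\dots,p_k$, the joint cumulant-type quantity $\mathbb{E}\big(\prod (\mathbf{1}_{A_j}-\mathbb{P}(A_j))\big)$ where each $A_j$ is ``there is a type I lozenge at a specified location'' (for $j\le k'$) or ``there is a type II lozenge at a specified location'' (for $j>k'$).

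The key algebraic step is the standard identity expressing a product of centered indicators of a determinantal process in terms of the kernel with the diagonal removed. Concretely, Corollary \ref{Alternative} gives $\mathbb{P}(\text{lozenges }(x_i',n_i',\lambda_i)\text{ for all }i)=\det[K_\lambda(\cdots)]$, and the disjointness hypothesis guarantees that the tuple of lozenge locations under consideration is non-overlapping in the sense required by that corollary, so the determinantal formula genuinely applies. Then one uses the combinatorial fact that $\mathbb{E}\big(\prod_i(\mathbf{1}_{A_i}-\mathbb{P}(A_i))\big)=\sum_{S}(-1)^{k-|S|}\prod_{i\notin S}\mathbb{P}(A_i)\,\mathbb{P}(\bigcap_{i\in S}A_i)$, and substituting $\mathbb{P}(\bigcap_{i\in S}A_i)=\det[K_\lambda]_{i,j\in S}$ collapses the alternating sum to the determinant of the matrix $[(1-\delta_{ij})K_\lambda(x_i',n_i',\lambda_i,x_j',n_j',t)]$ — i.e. the kernel with its diagonal entries replaced by zero. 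This is the mechanism that produces the factors $(1-\delta_{ij})$ in the blocks $A_{11}$ and $A_{22}$ of \eqref{LemmaToProve}.

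It remains to match the off-diagonal blocks and signs. Using Corollary \ref{Alternative}: a type I lozenge at $(s_i,n_i)$ contributes the row/column entry $\tilde K(s_i,n_i,\,\cdot\,)$; a type II lozenge at $(x_j+\delta_{n_j}+\cdots+\delta_{p_j-1},p_j)$ is, by the correspondence $(x',n',II)\leftrightarrow(x'-\delta_{n'-1},n'-1,x',n')$, the pair $(x_j+\delta_{n_j}+\cdots+\delta_{p_j-2},p_j-1,\ x_j+\delta_{n_j}+\cdots+\delta_{p_j-1},p_j)$. One checks that its black-triangle location is precisely $x_j-\delta_{p_j-1}$ shifted appropriately — here I would verify that $(x_j+\delta_{n_j}+\cdots+\delta_{p_j-2})$ simplifies to the expression $x_j-\delta_{p_j-1}$ appearing in the statement under the indexing conventions in play (this is the one genuinely fiddly bookkeeping point) — so that the black-triangle argument of $\tilde K$ in rows indexed by $l>k'$ is $(x_i-\delta_{p_i-1},p_i-1)$, matching $A_{21}$ and $A_{22}$, and the white-triangle argument in columns indexed by $j>k'$ is $(x_j,p_j)$, matching $A_{12}$ and $A_{22}$. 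The single minus sign on each type II factor coming from \eqref{HeightFormula3} is distributed one per row $l>k'$, i.e. it multiplies the entire $i$-th row for each $k'+1\le i\le k$; pulling these $k-k'$ signs out of the determinant is legitimate, and absorbing one sign per such row into the entries gives exactly the minus signs displayed in $A_{21}$ and $A_{22}$ (the block $A_{12}$ has no sign because its rows are indexed by $i\le k'$).

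\textbf{Main obstacle.} The conceptual content — reducing a product of centered indicators of a determinantal process to a determinant with zeroed diagonal — is routine. The real work, and the step most likely to hide an error, is the index bookkeeping: confirming that the lozenge-location shifts $x_l+\delta_{n_l}+\cdots+\delta_{p-1}$ from \eqref{HeightFormula3}, when fed through the white-triangle-to-black-triangle correspondence of Corollary \ref{Alternative}, produce exactly the arguments $x_i-\delta_{p_i-1}$ and $x_i$ written in the blocks $A_{21}$, $A_{22}$, $A_{12}$, and that the disjointness hypothesis is precisely what is needed for non-overlapping. I would handle this by treating $k'=k$ (only type I points) and $k'=0$ (only type II points) as warm-up cases to fix the sign and shift conventions, then assemble the mixed case.
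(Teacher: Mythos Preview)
Your proposal is correct and follows essentially the same approach as the paper's own proof: expand via \eqref{HeightFormula1} and \eqref{HeightFormula3}, apply Corollary \ref{Alternative} (which is where the disjointness hypothesis is used, exactly as you say), and then invoke the standard fact that centering each indicator of a determinantal process amounts to zeroing the diagonal of the kernel matrix. The paper is considerably more terse---it simply asserts the uncentered identity and then cites the ``zero the diagonal'' trick as well known (referring to the proof of Theorem 7.2 in Kenyon's paper)---whereas you spell out the inclusion--exclusion mechanism and the sign/shift bookkeeping in more detail; your flagged concern about matching $(x_l+\delta_{n_l}+\cdots+\delta_{p_l-2})$ to the displayed argument $x_i-\delta_{p_i-1}$ is a legitimate notational wrinkle in the paper's statement rather than a flaw in your argument.
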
 
\begin{proof}
By applying Corollary \ref{Alternative} to \eqref{HeightFormula1} and \eqref{HeightFormula3}, we see that 
\[
\mathbb{E}\left(\displaystyle\prod_{j=1}^{k'} h(x_j,n_j)\prod_{l=k'+1}^k H_{n_l,n_l'}(x_l)\right)
\]
equals the right hand side of \eqref{LemmaToProve} with the $(1-\delta_{ij})$ terms removed. It is well-known that subtracting the expectation corresponds to putting zeroes on the diagonal. For example, this is noticed in the proof of Theorem 7.2 of \cite{kn:K}.
\end{proof}

Write the determinant in \eqref{LemmaToProve} as a sum over permutations $\sigma$ in $S_k$. If the cycle decomposition of $\sigma$ contains the cycle $(c_1\ c_2\ \ldots\ c_r)$ of length $r$ and $M$ denotes the matrix in the right hand side of \eqref{LemmaToProve}, then the contribution from $\sigma$ is
\[
\displaystyle\sum_{s_1}\cdots\sum_{s_{k'}}\sum_{p_{k'+1}}\cdots\sum_{p_k}\operatorname{sgn}(\sigma)M_{c_1c_2}M_{c_2c_3}\ldots M_{c_rc_1}(\cdots)(\cdots),
\]
where $(\cdots)(\cdots)$ correspond to other cycles of $\sigma$. Let $\psi_{c_{\iota}}$ denote $s_{c_{\iota}}$ if $1\leq c_{\iota}\leq k'$, and $p_{c_{\iota}}$ if $k'<c_{\iota}\leq k$. Since the sum over $\psi_{c_{\iota}}$ only affects the matrix terms $M_{c_{\iota-1}c_{\iota}}$ and $M_{c_{\iota}c_{\iota+1}}$, the contribution from $\sigma$ is
\begin{equation}\label{ProductofCycles}
\left((-1)^{r-1}\displaystyle\sum_{\psi_{c_1}}\cdots\sum_{\psi_{c_r}}M_{c_1c_2}M_{c_2c_3}\ldots M_{c_rc_1}\right)\big(\ldots\big),
\end{equation}
where $(\ldots)$ denote other cycles. In other words, the contribution from $\sigma$ can be expressed as a product over the cycles in the cycle decomposition of $\sigma$.

Note that if $\sigma$ fixes any points,  then the correponding contribution is zero because all the diagonal entries are zero. 

\subsection{Analysis steps in proof of Theorem \ref{MainTheorem}}\label{AnalysisGeneral}

In \eqref{LemmaToProve}, set $x_j=[N\nu_j], n_l=[N\eta_l],$ and $t=N\tau$.  Our goal is to find the limit of \eqref{LemmaToProve} as $N\rightarrow\infty$.  Expanding the determinant into a sum over $\sigma\in S_k$, we just saw that the contribution from a fixed $\sigma$ is of the form \eqref{ProductofCycles}. First note that if any of the $\psi_{c_i}$ denotes $p_{c_i}$, then
\[
\displaystyle\sum_{\psi_{c_1}}\cdots\sum_{\psi_{c_r}}M_{c_1c_2}M_{c_2c_3}\ldots M_{c_rc_1}\rightarrow 0.
\]
This is because each $M_{c_jc_{j+1}}$ is proportional to $1/N$ (by Proposition \ref{Prop6.9}, so $M_{c_1c_2}M_{c_2c_3}\ldots M_{c_rc_1}$ is proportional to $N^{-r}$, but the sum is only taken over $\mathcal{O}(N^{r-1})$ terms. Therefore, \eqref{Expectation} can be expressed as a single term of the form in \eqref{EExpression}, and in this term $k'=k$. 

Now we will prove (stated as Theorem \ref{GoalinAnalysisSection} below) that
\[
\displaystyle\sum_{s_{c_1}}\cdots\sum_{s_{c_r}}M_{c_1c_2}M_{c_2c_3}\ldots M_{c_rc_1}\rightarrow
\left(\frac{1}{2\pi}\right)^r \int_{\bar{\Om_1}}^{\Om_1} dz_1 \cdots \int_{\bar{\Om_r}}^{\bar{\Om_r}} dz_r \frac{f(z_1,z_2)}{G'_{\nu}(z_1)} \ldots \frac{f(z_r,z_r)}{G'_{\nu}(z_r)}
\]
Once this is proven, \eqref{Cycle} implies that the total contribution from $S_k-\mathcal{F}_k$ equals zero. When $l=2$, then the right hand side is just $\mathcal{G}(\Om_1,\Om_2)$, completing the proof of Theorem \ref{MainTheorem}.



Recall the definitions of $G$ and $\Om$ from section \ref{Statement of the Main Theorem}. Set $\theta:\mathcal{D}\rightarrow [0,\pi)$ to be 
\[
\theta(\nu,\eta,\tau)=\frac{1}{2}\arg G''(\nu,\eta,\tau,\Omega(\nu,\eta,\tau)).
\]





\begin{proposition}\label{ProductOfKernels}
For $i=1,2,3$, let $(\nu_i,\eta_i,\tau)\in\mathcal{D}$, $x_i=[N\nu_i]$, $n_i=[N\eta_i]$ and $t=N\tau$. For $i=1,3$, let $G_i(z)$ denote $G(\nu_i,\eta_i,\tau,z)$, let $\theta_i$ denote $\theta(\nu_i,\eta_i,\tau)$ and let $\Om_i$ denote $\Om(\nu_i,\eta_i,\tau)$. Let $\Gamma_+:=\{\Om(\nu,\eta_2,\tau): \nu_2\leq \nu < q_2(\eta_2,\tau)\}$ and $\Gamma_-=\bar{\Gamma}_+$. Let $G_{\nu}'(z)=(\partial^2/\partial z\partial v)G(\nu_2,\tau_2,\tau,z)$. Then
\begin{multline}\label{KernelProduct}
\displaystyle\sum_{y>[N\nu_2]} K(x_1,n_1,y,n_2,t)K(y,n_2,x_3,n_3,t) \\
= o\left(\frac{1}{N}\right)+\frac{e^{N\Re((G_1(\Om_1)-G_3(\Om_3)))}}{2\pi N\sqrt{\vert G_1''(\Om_1)\vert}\sqrt{\vert G_3''(\Om_3)\vert}}\int_{\Gamma_+\cup\Gamma_-}\frac{dz}{2\pi G'_{\nu_1}(z)}\\
\times \Big[ f(\Om_1,z)f(z,\Om_3)\frac{e^{iN\Im(G_1(\Om_1))-i{\theta}_1}}{e^{iN\Im(G_3(\Om_3))+i{\theta}_3}} + f(\bOm_1,z)f(z,\Om_3)\frac{e^{-iN\Im(G_1(\Om_1))+i{\theta}_1}}{e^{iN\Im(G_3(\Om_3))+i{\theta}_3}}\\
+ f(\Om_1,z)f(z,\bOm_3)\frac{e^{iN\Im(G_1(\Om_1))-i{\theta}_1}}{e^{-iN\Im(G_3(\Om_3))-i{\theta}_3}} + f(\bOm_1,z)f(z,\bOm_3)\frac{e^{-iN\Im(G_1(\Om_1))+i{\theta}_1}}{e^{-iN\Im(G_3(\Om_3))-i{\theta}_3}}\Big].
\end{multline}
\end{proposition}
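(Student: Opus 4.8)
The plan is to perform a two-fold steepest-descent asymptotic analysis of the sum $\sum_{y>[N\nu_2]} K(x_1,n_1,y,n_2,t)K(y,n_2,x_3,n_3,t)$, using the integral representation \eqref{AssumptiononK} for each of the two kernel factors. First I would substitute the double-contour representation for $K(x_1,n_1,y,n_2,t)$ (with $k_1=k_2=0$) and for $K(y,n_2,x_3,n_3,t)$, treating $y$ as the running summation variable that appears in the exponents as $\exp(NG(\nu_2,\eta_2,\tau,w_1))^{-1}$ from the first kernel and $\exp(NG(\nu_2,\eta_2,\tau,z_2))$ from the second kernel (after relabeling contours). The key observation is that since $y$ ranges over $y>[N\nu_2]$, the combined $w_1,z_2$-dependence on $y$ is of the form $(z_2/w_1)^{y}$ (up to the $\io$-type shifts built into the kernel), so summing the geometric series in $y$ produces a factor $\frac{1}{1-z_2/w_1}$ and couples the two inner contour integrals into a single kernel-like expression $f(w_1,z_2)$ after contour deformation — this is exactly the mechanism that collapses the product of two kernels into the one-contour object appearing on the right side of \eqref{KernelProduct}.

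Next I would carry out the saddle-point analysis in the two outer variables. The variable attached to the $(\nu_1,\eta_1)$-kernel is integrated along a steepest-descent path $\Gamma_1$ through the critical point $\Om_1$ of $G_1$; because $G_1(\bar z)=\overline{G_1(z)}$, the relevant contour also passes near $\bar\Om_1$, and the standard Laplace expansion at a non-degenerate saddle gives the factor $\frac{e^{NG_1(\Om_1)}}{\sqrt{N|G_1''(\Om_1)|}}e^{-i\theta_1}$ from $\Om_1$ together with its conjugate contribution $\frac{e^{NG_1(\bar\Om_1)}}{\sqrt{N|G_1''(\Om_1)|}}e^{+i\theta_1}$ from $\bar\Om_1$, where $\theta_1 = \tfrac12\arg G_1''(\Om_1)$ controls the phase of the Gaussian integral. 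The same analysis applied to the $(\nu_3,\eta_3)$-kernel produces the conjugate pair at $\Om_3,\bar\Om_3$, but with reciprocal exponentials since $G_3$ enters with a minus sign. Multiplying the two pairs of contributions yields precisely the four cross terms in the bracket of \eqref{KernelProduct}, each carrying the appropriate $e^{\pm iN\Im G_j(\Om_j)\mp i\theta_j}$ phase, with the common prefactor $\frac{e^{N\Re(G_1(\Om_1)-G_3(\Om_3))}}{2\pi N\sqrt{|G_1''(\Om_1)|}\sqrt{|G_3''(\Om_3)|}}$ and the remaining single contour integral $\int_{\Gamma_+\cup\Gamma_-}\frac{dz}{2\pi G'_{\nu_1}(z)}$ coming from the collapsed inner integral. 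The identification of the inner contour with $\Gamma_+\cup\Gamma_- = \{\Om(\nu,\eta_2,\tau):\nu_2\le\nu<q_2\}\cup(\text{conjugate})$ comes from deforming the $z_2$-contour so it runs through the locus of critical points of $G(\nu,\eta_2,\tau,\cdot)$ as $\nu$ increases toward $q_2$, which is where the summation over $y\ge [N\nu_2]$ forces it.

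The main obstacle, as usual in these GFF computations, is controlling the error terms uniformly and justifying the contour deformations rigorously. Specifically: (i) the tail of the geometric sum in $y$ (or equivalently the poles of $\frac{1}{1-z_2/w_1}$) must be shown not to contribute beyond the stated orders — one needs the contours $\Gamma_1,\Gamma_2$ arranged so that $|z_2|<|w_1|$ along them, and when crossing poles during deformation one picks up exactly the residue producing $f(w_1,z_2)$; (ii) the $\bigO(e^{N\kappa})$ error in \eqref{AssumptiononK} with $\kappa<\Re(G_1(\Om_1)-G_3(\Om_3))$ must be propagated through the sum over $\bigO(N)$ values of $y$ and the outer integrals, which is why it remains subdominant and lands in the $o(1/N)$ term; (iii) near the edge $\nu_2\to q_2(\eta_2,\tau)$ the saddle $\Om$ approaches the real axis and $G''\to 0$, so the non-degenerate Laplace expansion degrades — this is precisely why the ``normal'' hypothesis imposes $G''(\nu,\eta,\tau,\Om) = \bigO((q_2-\nu)^{1/2})$ and $\Om(q_2-0,\eta,\tau)>0$, and I would invoke those bounds to show the edge portion of $\Gamma_+\cup\Gamma_-$ contributes only to the $o(1/N)$ remainder. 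I expect that once the geometric-sum-collapse and the bookkeeping of the four saddle contributions are set up, the remaining estimates are routine given the hypotheses of Definition \ref{Normal}, so the real content is the careful choice and deformation of the four contours (two outer, two inner) so that all the crossing residues assemble into the claimed formula.
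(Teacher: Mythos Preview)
Your approach has a genuine gap: you propose to sum a geometric series in $y$ inside the contour integrals, asserting that the $y$-dependence is of the form $(z_2/w_1)^y$. In the general framework of Definition~\ref{Normal}, however, the $y$-dependence in \eqref{AssumptiononK} enters through $\exp(\pm NG(y/N,\eta_2,\tau,\cdot))$, and nowhere is it assumed that $G$ is affine in $\nu$ (i.e.\ that $G_\nu(z)$ is $-\log z$ or anything independent of $\nu$). So the summand is not a geometric progression, and the ``collapse to $\tfrac{1}{1-z_2/w_1}$'' step is unavailable. A second obstruction is that the steepest-descent contours $\Gamma_1,\Gamma_2$ in \eqref{AssumptiononK} depend on the parameters, so as $y$ varies the inner contour moves; you cannot interchange the infinite sum with a fixed contour integral and then deform.

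The paper's argument is different in kind. It first applies the single-kernel saddle-point expansion (Proposition~\ref{Prop6.9}) \emph{separately} to each factor $K(x_1,n_1,y,n_2,t)$ and $K(y,n_2,x_3,n_3,t)$, replacing each by four explicit terms evaluated at the relevant critical points. The product has sixteen terms; in eight of them the oscillatory factor $e^{\pm iN\Im G_2(\Om_2)}$ (with $\Om_2=\Om(y/N,\eta_2,\tau)$) survives, and these are shown to contribute $o(1/N)$ by the oscillatory-sum estimate of Lemma~\ref{ZeroLemma} via Proposition~\ref{OneProposition} --- this is precisely where the last bullet of Definition~\ref{Normal} is used, an ingredient your outline omits entirely. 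In the remaining eight terms the oscillation cancels, and the sum over $y$ is treated as a Riemann sum (Proposition~\ref{RiemannApproximation}); the substitution $z=\Om(\nu,\eta_2,\tau)$ together with the chain-rule identity $\partial\Om/\partial\nu=-G'_\nu(\Om)/G''(\Om)$ is what turns $d\nu/G''(\Om)$ into $-dz/G'_\nu(z)$ and produces the contour $\Gamma_+\cup\Gamma_-$. The sum is split into bulk and edge pieces, with the edge (near $\nu\to q_2$) bounded using Proposition~\ref{EdgeAsymptotics} and the $G''=\bigO((q_2-\nu)^{1/2})$ hypothesis. Your description of $\Gamma_+\cup\Gamma_-$ as arising from ``deforming the $z_2$-contour through the locus of critical points'' does not correspond to this Riemann-sum mechanism.
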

\begin{proof}
Let $G_2(z)$ denote $G([y/N],\eta_2,\tau,z)$, let $\theta_2$ denote $\theta([y/N],\eta_2,\tau)$ and $\Om_2$ denote $\Om(y/N,\eta_2,\tau)$. Fix some $\beta\in (-1/2,0)$ and split up the sum into two parts: the first part is from $\lf N\nu_2 \rf$ to $\lf N(q_2-N^{\beta}) \rf$, while the second sum is from $\lf N(q_2-N^{\beta}) \rf$ to $\lf Nq_2 \rf$. Since there are no particles to the right of $Nq_2$ in the limit $N\rightarrow\infty$, the sum from $Nq_2$ to $\infty$ can be ignored. It is common to refer to the first sum as the bulk and the second sum as the edge. First examine the bulk. By Proposition \ref{Prop6.9}, 
\begin{multline}
\displaystyle K(x_1,n_1,y,n_2,t)K(y,n_2,x_3,n_3,t)\\
=\frac{e^{N\Re((G_1(\Om_1)-G_2(\Om_2)))}}{2\pi N\sqrt{\vert G_1''(\Om_1)\vert}\sqrt{\vert G_2''(\Om_2)\vert}} \frac{e^{N\Re((G_2(\Om_2)-G_3(\Om_3)))}}{2\pi N\sqrt{\vert G_2''(\Om_2)\vert}\sqrt{\vert G_3''(\Om_3)\vert}}\\
\times \Big[f(\Om_1,\Om_2)f(\Om_2,\Om_3)\frac{e^{iN\Im(G_1(\Om_1))-i{\theta}_1}}{e^{iN\Im(G_2(\Om_2))+i{\theta}_2}}\frac{e^{iN\Im(G_2(\Om_2))-i{\theta}_2}}{e^{iN\Im(G_3(\Om_3))+i{\theta}_3}} + \circlearrowleft \Big]\\
+\mathcal{O}(G_2''(\Om_2)^{-4}N^{-3}) +\mathcal{O}(G_2''(\Om_2)^{-7}N^{-4}),
\end{multline}
where $\circlearrowleft$ denotes the other fifteen terms that occur in the sum. First let us examine the error term in the bulk.

By (2) of Definition \ref{Normal}, each term in the error is bounded by $(N^{\beta/2})^{-4}N^{-3}$ and $(N^{\beta/2})^{-7}N^{-4}$, respectively. There are $\sim N$ terms, and since $\beta>-1/2$, we must have $-2\beta-3+1<-1$ and $-7\beta/2-4+1<-1$. Therefore the sum is $o(1/N)$.  

Now let us return to the main term in the bulk. For eight of the sixteen terms in $\circlearrowleft$, the expression $e^{iN\Im(G_2(\Om_2))}$ cancels in the numerator and the denominator. By Proposition \ref{OneProposition}, these eight terms are $o(1/N)$. By Proposition \ref{RiemannApproximation}, the other eight terms equal 

\begin{multline*}
 \frac{e^{N\Re((G_1(\Om_1)-G_3(\Om_3)))}}{2\pi N\sqrt{\vert G_1''(\Om_1)\vert}\sqrt{\vert G_3''(\Om_3)\vert}}\int_{\nu_2}^{\infty} \frac{e^{-2i\theta_2}}{2\pi\vert G_2''(\Om_2)\vert}\\
\times \Big[f(\Om_1,\Om_2)f(\Om_2,\Om_3)\frac{e^{iN\Im(G_1(\Om_1))-i{\theta}_1}}{e^{iN\Im(G_3(\Om_3))+i{\theta}_3}} + \ldots \Big] d\nu+o\left(\frac{1}{N}\right),
\end{multline*}
where $\ldots$ represent the other seven terms. Of the eight total terms, four have $f(\cdot,\Om_2)f(\Om_2,\cdot)$ and four have $f(\cdot,\bar{\Om_2})f(\bar{\Om_2},\cdot)$. For the four terms with the expression $\Om_2$, make the substitution $z=\Om(\nu,\eta_2,\tau)$. The new integration path is $\Gamma_+$. By taking the partial of \eqref{DefinitionOfOmega} with respect to $\nu$ and using the chain rule, 
\begin{equation*}
\frac{\partial\Omega}{\partial\nu}=-\frac{G'_{\nu}(\Omega)}{G''(\Omega)},
\end{equation*}
which implies
\[
\frac{e^{-2i\theta_2}}{2\pi\vert G_2''(\Om_2)\vert}d\nu = \frac{d\nu}{2\pi G_2''(\Om_2)}=-\frac{dz}{2\pi G'_{\nu}(z)}. 
\]
For the four terms with $\bar{\Om}_2$, make the substitution $z=\bar{\Om}(\nu,\eta_2,\tau)$. The path of integration is $\Gamma_-$. Finally, the integral becomes 
\begin{multline*}
o\left(\frac{1}{N}\right)+\frac{e^{N\Re((G_1(\Om_1)-G_3(\Om_3)))}}{2\pi N\sqrt{\vert G_1''(\Om_1)\vert}\sqrt{\vert G_3''(\Om_3)\vert}}\int_{\Gamma_+\cup\Gamma_-}\frac{dz}{2\pi G'_{\nu_1}(z)}\\
\times \Big[ f(\Om_1,z)f(z,\Om_3)\frac{e^{iN\Im(G_1(\Om_1))-i{\theta}_1}}{e^{iN\Im(G_3(\Om_3))+i{\theta}_3}} + f(\bOm_1,z)f(z,\Om_3)\frac{e^{-iN\Im(G_1(\Om_1))+i{\theta}_1}}{e^{iN\Im(G_3(\Om_3))+i{\theta}_3}} \\
+ f(\Om_1,z)f(z,\bOm_3)\frac{e^{iN\Im(G_1(\Om_1))-i{\theta}_1}}{e^{-iN\Im(G_3(\Om_3))-i{\theta}_3}} + f(\bOm_1,z)f(z,\bOm_3)\frac{e^{-iN\Im(G_1(\Om_1))+i{\theta}_1}}{e^{-iN\Im(G_3(\Om_3))-i{\theta}_3}}\Big].
\end{multline*} 

Now we sum over the edge. By Proposition \ref{EdgeAsymptotics} and (2) of Definition \ref{Normal}, the sum is bounded above by 
\[
\sum_{y=(q_2-N^{\beta})N}^{q_2N} \vert G_2(\Om_2)^{-1}\vert N^{-2} \leq \sum_{y=0}^{N^{\beta+1}} \left(\frac{y}{N}\right)^{1/2}N^{-2}=\bigO(N^{3\beta/2-1}).
\]
As long as $\beta<0$, the sum over the edge is also $o(1/N)$. 
\end{proof}

\begin{theorem}\label{GoalinAnalysisSection}
For $i=1,\ldots,l$, let $(\nu_i,\eta_i,\tau)\in\mathcal{D}$ and set $x_i=[N\nu_i],n_i=N\eta_i$. For $i=1,\ldots,l$, let $G_i(z)$ denote $G(\nu_i,\eta_i,z)$, let $\theta_i$ denote $\theta(\nu_i,\eta_i,\tau)$ and let $\Om_i$ denote $\Om(\nu_i,\eta_i,\tau)$. Let $\Gamma_i^{+}:=\{\Om(\nu,\eta_i,\tau): \nu_1\leq \nu < q_2(\eta_i,\tau)\}$ and $\Gamma_i^{-}=\bar{\Gamma}_i^{+}$. Then
\begin{multline*}
\displaystyle\sum_{y_1>[N\nu_1]} \cdots \displaystyle\sum_{y_l>[N\nu_l]} \prod_{i=1}^l K(y_i,x_i,y_{i+1},x_{i+1},t) \\
\rightarrow \left(\frac{1}{2\pi}\right)^l \int_{\Gamma_1^{+}\cup\Gamma_1^{-}} dz_1 \cdots \int_{\Gamma^{+}_l\cup\Gamma^{-}_l} dz_l \frac{f(z_1,z_2)}{G'_{\nu}(z_1)} \ldots \frac{f(z_l,z_1)}{G'_{\nu}(z_l)}.
\end{multline*}
The indices are taken cyclically.
\end{theorem}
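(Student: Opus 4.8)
The plan is to prove the statement by induction on $l$, using Proposition \ref{ProductOfKernels} to collapse two consecutive summations at a time, together with the cycle identity \eqref{Cycle} to discard the terms that do not survive in the limit. The base case is effectively the content of Proposition \ref{ProductOfKernels}: applying it once replaces an inner sum $\sum_{y}K(\cdot,y,\cdot)K(y,\cdot)$ by a contour integral over $\Gamma_+\cup\Gamma_-$ against the four ``diagonal'' combinations $f(\Om_1,z)f(z,\Om_3)$, $f(\bOm_1,z)f(z,\Om_3)$, etc., each carrying an explicit Gaussian prefactor $e^{N\Re(G_1(\Om_1)-G_3(\Om_3))}/(2\pi N\sqrt{|G_1''(\Om_1)|}\sqrt{|G_3''(\Om_3)|})$ and an oscillatory phase $e^{\pm iN\Im G_1(\Om_1)\mp i\theta_1}$ in the endpoint variables.

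First I would set up the iteration carefully. One performs the sums $\sum_{y_1}\cdots\sum_{y_l}$ in a cyclic order, at each stage applying Proposition \ref{ProductOfKernels} to a pair $K(\cdot,y_j,\cdot)K(y_j,\cdot)$ and integrating out $y_j$ into a contour variable $z_j$ on $\Gamma_j^{+}\cup\Gamma_j^{-}$. The key bookkeeping point is that the Gaussian prefactors telescope: when the kernel product runs around a cycle, every $e^{\pm N\Re G_i(\Om_i)}$ and $\sqrt{|G_i''(\Om_i)|}$ appears once in a numerator and once in a denominator, so the product of all the prefactors is exactly $1$. Similarly the oscillatory phases $e^{\pm iN\Im G_i(\Om_i)}$ and $e^{\mp i\theta_i}$ cancel in any term where each endpoint $\Om_i$ is paired consistently with itself (or its conjugate) on both sides. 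The surviving terms are precisely the $2^l$ choices of $\Om_i$ versus $\bOm_i$ at each node, and summing these $2^l$ choices reconstitutes the full contour $\int_{\bOm_i}^{\Om_i}$ along $\Gamma_i^{+}\cup\Gamma_i^{-}$ for each $i$; the Jacobian substitution $z=\Om(\nu,\eta_i,\tau)$ with $\partial\Om/\partial\nu=-G_\nu'(\Om)/G''(\Om)$ (as derived in the proof of Proposition \ref{ProductOfKernels}) converts $e^{-2i\theta_i}d\nu/(2\pi|G_i''|)$ into $-dz_i/(2\pi G_\nu'(z_i))$, producing exactly $\prod_i f(z_i,z_{i+1})/G_\nu'(z_i)$ with the overall constant $(1/2\pi)^l$.

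Two auxiliary points must be handled. One: all the ``mixed'' terms — those where some phase $e^{iN\Im G_i(\Om_i)}$ survives uncancelled — must be shown to vanish, which follows term by term from the oscillatory-integral estimates (Propositions \ref{OneProposition} and the non-resonance hypothesis, the last bullet of Definition \ref{Normal}) exactly as in the $l=2$ case; one needs to check that iterating does not accumulate more than $o(1/N)$ worth of such terms, but since there are only finitely many ($2^l\cdot(\text{something bounded})$) and each is $o(1/N)$ after the corresponding sum, this is routine. Two: the edge contributions at each stage are $\bigO(N^{3\beta/2-1})=o(1/N)$ by Proposition \ref{EdgeAsymptotics} and bullet (2) of Definition \ref{Normal}, exactly as before, and propagating a bounded number of such error terms through a bounded number of sums keeps everything $o(1)$ relative to the main term.

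The main obstacle I anticipate is the combinatorial control of the iteration: after applying Proposition \ref{ProductOfKernels} once, the ``kernel'' appearing in the next sum is no longer literally of the form to which the proposition applies — it is a kernel already smeared into a contour integral — so one must either reformulate Proposition \ref{ProductOfKernels} in a form stable under the induction (carrying along an extra contour variable and an $f$-factor), or instead expand all $l$ kernels simultaneously via \eqref{AssumptiononK} as $l$-fold contour integrals and perform a single $l$-dimensional steepest-descent analysis, localizing each $y_i$-sum near $q_2(\eta_i,\tau)$ and each $z_i,w_i$ contour near $\Om_i$. The latter is cleaner conceptually but requires a genuinely multidimensional saddle-point estimate with the non-resonance condition used to kill cross terms; I would take that route, treating Proposition \ref{ProductOfKernels} as the $l=2$ template and generalizing its proof verbatim, with the cycle identity \eqref{Cycle} invoked only at the very end to conclude that the non-$\mathcal{F}_k$ permutations contribute zero.
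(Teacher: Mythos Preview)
Your proposal is correct and aligns with the paper's approach, which is stated in two lines: expand the product into $4^l$ terms via Proposition \ref{Prop6.9}, then repeatedly apply (the argument of) Proposition \ref{ProductOfKernels} to collapse each $y_j$-sum, cutting the term count by a factor of four each time until a single $l$-fold contour integral remains. The obstacle you flag---that after one pass the expression is no longer a literal product of kernels, so the proposition cannot be re-invoked verbatim---is real and glossed over in the paper; your proposed resolution via a one-shot $l$-dimensional steepest-descent (with the telescoping of the prefactors $e^{N\Re G_i(\Om_i)}$ and $\sqrt{|G_i''(\Om_i)|}$ around the cycle, and the oscillatory-sum lemma killing the mixed-phase terms) is exactly what the paper's phrase ``repeated applications'' must mean in practice.

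One small overreach: the cycle identity \eqref{Cycle} plays no role in the present theorem, which concerns a single fixed cycle. It enters only in the subsequent passage to Theorem \ref{MainTheorem}, where one sums the right-hand side over all $l$-cycles and uses \eqref{Cycle} to see that the total vanishes for $l\geq 3$.
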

\begin{proof}
By Proposition \ref{Prop6.9}, the product has $4^l$ terms. Each application of Proposition \ref{ProductOfKernels} decreases the number of terms by a factor of 4, so repeated applications of Proposition \ref{ProductOfKernels} yields the result.
\end{proof}



\section{Specific Results}\label{Specific Results}
\subsection{Particle system with a wall}
We now return to the particle system with a reflecting wall described in the Introduction. For notational reasons, it is more convenient to use different co-ordinates. Instead of labeling the levels as $1,2,3,\ldots$, it is more convenient to label them as $(1,-1/2),(1,1/2),(2,-1/2),(2,1/2),\ldots$. If the $(n_1,a_1)$ is at least as high as the $(n_2,a_2)$ level, then this will be denoted as $(n_1,a_1)\trianglerighteq (n_2,a_2)$. This happens if and only if $2n_1+a_1\geq 2n_2+a_2$. Using the notation of Section \ref{AlgebraGeneral}, $m_{(n,a)}=n$ and $\delta_{(n,a)}=a+1/2$. Along the horizontal direction, we will use a square lattice, so that the particles live on $\mathbb{N}$ instead of $2\mathbb{N}$ or $2\mathbb{N}+1$.

Let $m_{a_1}(dz)$ be defined by
\[
m_{a_1}(dz)
\begin{cases}
\dfrac{dz}{2iz},\ & a_1=-1/2,\\
\dfrac{-(z^{1/2}-z^{-1/2})^2dz}{4iz}, \ & a_1=1/2.
\end{cases}
\]
Let $\mathsf{J}_s^{(\pm 1/2,-1/2)}$ denote the (normalized) Jacobi polynomial with parameters $(\pm 1/2,-1/2)$. The normalization is set so that for any nonzero complex number $z$, $\mathsf{J}_s^{(\pm 1/2,-1/2)}$ satisfies 
\begin{eqnarray}\label{JacobiEquations}
\mathsf{J}_s^{(-1/2,-1/2)}\left(\frac{z+z^{-1}}{2}\right)=\frac{z^s+z^{-s}}{2},\\
\mathsf{J}_s^{(1/2,-1/2)}\left(\frac{z+z^{-1}}{2}\right)=\frac{z^{s+1/2}-z^{-s-1/2}}{z^{1/2}-z^{-1/2}}.
\label{JacobiEquations2}
\end{eqnarray}

Let $W^{(a,-1/2)}(s)$ be defined for nonnegative integers $s$ by 
\[
W^{(a,-1/2)}(s)=
\begin{cases}
2,\ \ \text{if}\ \ s>0,a=-\frac{1}{2},\\
1,\ \ \text{if}\ \ s=0,a=-\frac{1}{2},\\
1,\ \ \text{if}\ \ s\geq 0,a=\frac{1}{2}.
\end{cases}
\]
Note that for $a=\pm 1/2$,
\begin{equation}\label{Orthogonality}
\frac{W^{(a,-1/2)}(s_1)}{\pi}\oint_{\vert z\vert=1}\mathsf{J}_{s_1}^{(a,-1/2)}\left(\frac{z+z^{-1}}{2}\right)\mathsf{J}_{s_2}^{(a,-1/2)}\left(\frac{z+z^{-1}}{2}\right)m_{a}(dz)=\delta_{s_1s_2}
\end{equation}

By Theorem 4.1 of \cite{kn:BK}, the correlation functions are determinantal with kernel
\begin{multline}\label{TheDoubleIntegral'}
K(n_1,a_1,s_1,n_2,a_2,s_2,t)\\
=\frac{W^{(a_1,-1/2)}(s_1)}{2\pi^2 i}\oint\oint\frac{e^{t(\frac{z+z^{-1}}{2})}}{e^{t(\frac{v+v^{-1}}{2})}} \mathsf{J}_{s_1}^{(a_1,-1/2)}\left(\frac{z+z^{-1}}{2}\right)\mathsf{J}_{s_2}^{(a_2,-1/2)}\left(\frac{v+v^{-1}}{2}\right)\\
\times\frac{(\frac{z+z^{-1}}{2}-1)^{n_1}}{(\frac{v+v^{-1}}{2}-1)^{n_2}} \frac{1-v^{-2}}{z+z^{-1}-v-v^{-1}} m_{a_1}(dz) dv\\
\end{multline}
\begin{multline}\label{ExtraIntegral'}
+\mathbf{1}_{(n_1,a_1)\trianglerighteq(n_2,a_2)}\Bigg(\frac{W^{(a_1,-1/2)}(s_1)}{\pi}\oint\mathsf{J}_{s_1}^{(a_1,-1/2)}\left(\frac{z+z^{-1}}{2}\right)\mathsf{J}_{s_2}^{(a_2,-1/2)}\left(\frac{z+z^{-1}}{2}\right)\\
\times\left(\frac{z+z^{-1}}{2}-1\right)^{n_1-n_2}m_{a_1}(dz)\Bigg),
\end{multline}
where the $z$-contour is the unit circle and the $v$-contour is a circle centered at the origin with radius bigger than $1$.

\begin{theorem}\label{IsNormal}
The determinantal point process is normal. The Green's function is given by 
\[
\mathcal{G}(z,w)=\frac{1}{2\pi}\log\left(\frac{z+z^{-1}-\bar{w}-\bar{w}^{-1}}{z+z^{-1}-w-w^{-1}}\right). 
\]
\end{theorem}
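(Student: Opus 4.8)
The plan is to verify, one bullet at a time, that the determinantal point process with kernel \eqref{TheDoubleIntegral'}--\eqref{ExtraIntegral'} satisfies Definition \ref{Normal}, and then to read off $\mathcal{G}$ from the formula in Theorem \ref{MainTheorem} by identifying the data $G$, $\Omega$, $f$, and $G'_\nu$ explicitly. The starting point is to put the kernel in the double-integral form of \eqref{AssumptiononK}. First I would apply the saddle-point/steepest-descent machinery to \eqref{TheDoubleIntegral'}: writing the horizontal coordinate as $s=[N\nu]$, the level data as $n=[N\eta]$ (with the half-integer part $a$ contributing only lower-order shifts), and $t=N\tau$, the integrand has the form $\exp(N G(\nu,\eta,\tau,z))$ times a slowly varying prefactor, where
\[
G(\nu,\eta,\tau,z)=\tau\,\frac{z+z^{-1}}{2}+\eta\log\left(\frac{z+z^{-1}}{2}-1\right)-\nu\log z,
\]
the $-\nu\log z$ coming from the asymptotics of the Jacobi polynomial $\mathsf{J}^{(a,-1/2)}_s$ via \eqref{JacobiEquations}--\eqref{JacobiEquations2} (the two choices of $a$ differ only by a bounded factor that becomes a conjugating factor or part of the prefactor). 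The critical-point equation $G'=0$ then defines $\Omega(\nu,\eta,\tau)$, and one checks $G(\bar z)=\overline{G(z)}$ is immediate since all coefficients are real and $\overline{z+z^{-1}}=\bar z+\bar z^{-1}$ on the relevant domain. Extracting $f$ and $G'_\nu$: differentiating $G$ in $\nu$ gives $G'_\nu(z)=-1/z$ (or more precisely $G_\nu=-\log z$, so $G'_\nu=-1/z$), while the kernel factor $\frac{1-v^{-2}}{z+z^{-1}-v-v^{-1}}$ together with the Jacobi-polynomial exponential pieces is what becomes $f_{k_1k_2}(z,w)$; the shift identities \eqref{JacobiEquations}, \eqref{JacobiEquations2} give the cocycle relation $f_{k_1k_2}f_{k_2k_3}\cdots f_{k_rk_1}=f f\cdots f$ because the $z^{k}$-type factors telescope around a cycle.

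Next I would check the combinatorial/algebraic bullets. The conjugation to $\tilde K$ satisfying \eqref{C1}--\eqref{C5}: this is the lozenge-consistency statement, and I would prove it by a direct contour computation — \eqref{C1} and \eqref{C2} are "sum over the three lozenges around a triangle equals one or zero" identities that reduce, after conjugating away $W^{(a,-1/2)}$ and the $\mathsf{J}$-normalizations, to a residue computation using orthogonality \eqref{Orthogonality} and the contour deformation between the unit circle and the larger circle, exactly as in Theorem 5.1 of \cite{kn:BF}; the decay conditions \eqref{C3}--\eqref{C5} follow because $\mathsf{J}^{(a,-1/2)}_s((z+z^{-1})/2)$ grows like $z^{\pm s}$ so the type-II pieces decay and the type-III pieces tend to $1$ as $s\to\infty$. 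The cycle-vanishing condition \eqref{Cycle}: here I would invoke remark (5) after Definition \ref{Normal} — with $G'_\nu(z)=-1/z$ and $f(z,w)$ a function of $z+z^{-1}$ and $w+w^{-1}$, the substitution $Y(u)=u+u^{-1}$ turns $\prod f(z_i,z_{i+1})/G'_\nu(z_i)$ into $\prod 1/(Y(u_i)-Y(u_{i+1}))$, so Lemma 7.3 of \cite{kn:K} (via \cite{kn:B}) applies and the cyclic sum integrates to zero. The two analytic bullets — $\Omega(q_2-0,\eta,\tau)$ real and positive, and $G''=\bigO((q_2-\nu)^{1/2})$ near the edge — follow from the local structure of $G$ at the edge where the two conjugate critical points $\Omega,\bar\Omega$ coalesce onto the real axis into a triple zero (the $x^3/3-tx$ heuristic of remark (2)); concretely, $q_2(\eta,\tau)$ is where the discriminant of $G'=0$ vanishes, and a Puiseux expansion there gives the $(q_2-\nu)^{1/2}$ bound. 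The last bullet (Lebesgue measure of near-resonant set) is a soft statement about the real-analytic function $I=\Im G(\cdot,\Omega(\cdot))$ having non-degenerate derivative away from finitely many points.

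Finally, with all hypotheses verified, Theorem \ref{MainTheorem} gives
\[
\mathcal{G}(z,w)=\left(\frac{1}{2\pi}\right)^2\int_{\bar z}^{z}\int_{\bar w}^{w}\frac{f(z_1,z_2)f(z_2,z_1)}{G'_\nu(z_1)G'_\nu(z_2)}\,dz_2\,dz_1,
\]
and I would evaluate this explicitly. With $G'_\nu(z)=-1/z$ and the appropriate normalization making $f(z_1,z_2)f(z_2,z_1)/(G'_\nu(z_1)G'_\nu(z_2))$ equal, up to the telescoping factors, to something like $\dfrac{(1-z_1^{-2})(1-z_2^{-2})\,z_1 z_2}{(z_1+z_1^{-1}-z_2-z_2^{-1})^2}$ — note the denominator is exactly $(Y(z_1)-Y(z_2))^2$ with $Y(u)=u+u^{-1}$ and the numerator is $Y'(z_1)Y'(z_2)$ — the double integral becomes $\left(\frac{1}{2\pi}\right)^2\int\int \frac{Y'(z_1)Y'(z_2)}{(Y(z_1)-Y(z_2))^2}dz_2 dz_1$, which under $Z=Y(z)$ is the standard kernel $\int\int \frac{dZ_1\,dZ_2}{(Z_1-Z_2)^2}$ whose antiderivative is $-\log(Z_1-Z_2)$. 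Carrying out both integrations and tracking the endpoints ($z\mapsto z+z^{-1}$, $\bar z\mapsto \bar z+\bar z^{-1}$, etc.) collapses the iterated logarithm to
\[
\mathcal{G}(z,w)=\frac{1}{2\pi}\log\left(\frac{z+z^{-1}-\bar w-\bar w^{-1}}{z+z^{-1}-w-w^{-1}}\right),
\]
the cross terms $z+z^{-1}-w-w^{-1}$ and $\bar z+\bar z^{-1}-\bar w-\bar w^{-1}$ being complex conjugates and hence contributing a purely imaginary constant that drops out (or is absorbed so that $\mathcal{G}$ is real on the diagonal limit and matches the Dirichlet Green's function on $\mathbb{H}-\mathbb{D}$). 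I expect the main obstacle to be the verification of \eqref{C1}--\eqref{C5} — i.e., producing the correct conjugating factor $c$ and checking the three-term lozenge identities against the explicit Jacobi-kernel \eqref{TheDoubleIntegral'}--\eqref{ExtraIntegral'}, since the wall term \eqref{ExtraIntegral'} and the $s=0$ exceptional value of $W^{(-1/2,-1/2)}$ make the residue bookkeeping delicate near the boundary $s=L$; the asymptotic bullets are then comparatively routine once $G$ is in hand, and the final integral evaluation is a clean change of variables.
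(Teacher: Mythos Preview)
Your plan is correct and matches the paper's proof almost step for step: the paper uses exactly your $G$, takes $f(u,v)=\tfrac{1}{v}\,\tfrac{1-u^{-2}}{v+v^{-1}-u-u^{-1}}$, verifies \eqref{C1}--\eqref{C5} via the explicit conjugating factor $\mathcal{C}_0(n,a,s)=(-1)^s(-2)^{n-1+(a+1/2)}$, handles the cycle condition by the substitution $u\mapsto u+u^{-1}$ and Remark~\ref{NormalExplanation}(5), and evaluates $\mathcal{G}$ exactly as you do via $Y(z)=z+z^{-1}$. The one place your expectations diverge from the paper is the difficulty ranking: the paper dispatches \eqref{C1}--\eqref{C5} in under a page, while the fourth bullet (putting $\tilde K$ in the steepest-descent form \eqref{AssumptiononK}) absorbs most of the effort --- the $v^{-s_2}$ branch of the Jacobi polynomial produces a spurious saddle at $\Omega(-\nu_2,\eta_2,\tau)$ that must be shown subdominant (via Lemma~\ref{InverseOmega} and the fact that the $v$-contour stays outside the unit circle), and the residues picked up when the $z$- and $v$-contours cross require a non-obvious argument (exploiting the arbitrariness of the crossing point $\xi$) to bound by $e^{N\kappa}$ with admissible $\kappa$.
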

Once we prove the point process is normal, the expression for the Green's function follows from Theorem \ref{MainTheorem} with
\[
G(\nu,\eta,\tau;u)=\tau\frac{u+u^{-1}}{2}+\eta\log\left(\frac{u+u^{-1}}{2}-1\right)-\nu\log u,
\]
\[
f(u,v)=\frac{1}{v}\frac{1-u^{-2}}{v+v^{-1}-u-u^{-1}}.
\]
In section \ref{SpecificAlgebra}, we show that the third condition in Definition \ref{Normal} is satisfied.  In section \ref{SpecificAnalysis}, we show that the fourth and second conditions are satisfied. Since these are conditions are the hardest to prove, we will focus mainly on their proofs. The fifth conditions follows from the substitution $u_j=z_j+z_j^{-1}$ and (5) of Remark \ref{NormalExplanation}.  

\subsection{Algebraic steps in proof of theorem \ref{IsNormal}}\label{SpecificAlgebra}
\begin{proposition}
Let $\mathcal{C}_0(n,a,s)$ equal
\[
\mathcal{C}_0(n,a,s)=
\begin{cases}
(-1)^s (-2)^{n-1}, \ \ a=-1/2\\
(-1)^s (-2)^n, \ \ a=1/2 
\end{cases}
\]
and $c_0(n_1,a_1,s_1,n_2,a_2,s_2) =\mathcal{C}_0(n_1,a_1,s_1)/\mathcal{C}_0(n_2,a_2,s_2)$. Then $\tilde{K}=c_0K$ satisfies \eqref{C1}--\eqref{C5} for $L=1$.
\end{proposition}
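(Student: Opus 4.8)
The plan is to verify the items \eqref{C1}, \eqref{C2} and \eqref{C3}--\eqref{C6}, \eqref{C5} for $\tilde K=c_0K$ after splitting $\tilde K=\tilde K^{(1)}+\tilde K^{(2)}$, where $\tilde K^{(1)}$ is $c_0$ times the double integral \eqref{TheDoubleIntegral'} and $\tilde K^{(2)}$ is $c_0$ times the single integral \eqref{ExtraIntegral'}. Abbreviate $q(z)=\tfrac{z+z^{-1}}2$ and $w(z)=q(z)-1=\tfrac{(z-1)^2}{2z}$. The first observation is that taking $L=1$ makes every horizontal coordinate occurring in \eqref{C1}--\eqref{C5} at least $1$, so that $W^{(-1/2,-1/2)}\equiv 2$ and $W^{(1/2,-1/2)}\equiv 1$ throughout; this is precisely what removes the $s=0$ reflecting-wall value of $W$ from the discussion. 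Two facts are used constantly: the contiguous relations
\[
\mathsf{J}_s^{(1/2,-1/2)}(x)-\mathsf{J}_{s-1}^{(1/2,-1/2)}(x)=2\,\mathsf{J}_s^{(-1/2,-1/2)}(x),\qquad (x-1)\,\mathsf{J}_s^{(1/2,-1/2)}(x)=\mathsf{J}_{s+1}^{(-1/2,-1/2)}(x)-\mathsf{J}_s^{(-1/2,-1/2)}(x),
\]
which follow from \eqref{JacobiEquations}--\eqref{JacobiEquations2} by setting $x=q(z)$ and comparing Laurent polynomials, and the orthogonality relation \eqref{Orthogonality}. The constants in $\mathcal C_0$ are chosen so that the ratios of $\mathcal C_0$ across a unit shift of $s$ and across a one-level shift of $(n,a)$ (namely $-1$, and $-\tfrac12$ or $1$, the latter matching $w(-1)=-2$) are exactly the coefficients for which these relations apply.

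\noindent\textbf{The exact identities \eqref{C1} and \eqref{C2}.} In \eqref{C1} one fixes the (black) first argument and sums the three values of $\tilde K$ whose (white) second argument runs over the three lozenges of Figure \ref{Lozenges}; the data attached to the first argument and to the inner ($z$) integration factor out. For $\tilde K^{(1)}$ the leftover is the combination $\sum_\lambda \mathcal C_0(n_2^{(\lambda)},a_2^{(\lambda)},s_2^{(\lambda)})^{-1}\,\mathsf{J}_{s_2^{(\lambda)}}^{(a_2^{(\lambda)},-1/2)}(q(v))\,w(v)^{-n_2^{(\lambda)}}$ inside the $v$-integral, which one checks is identically zero by the two contiguous relations (separately in the $a_2=-1/2$ and $a_2=1/2$ cases), so the three $\tilde K^{(1)}$-terms cancel. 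For $\tilde K^{(2)}$ the same combination appears, in $z$, each term weighted by its indicator $\mathbf 1_{(n_1,a_1)\trianglerighteq(\cdot,\cdot)}$; a short case analysis of the position of $(n_1,a_1)$ relative to the base level shows that off the diagonal either all three indicators are present (sum $=0$ by the same relations) or none are (sum $=0$), while on the diagonal $(n_1,a_1,s_1)=(n_2,a_2,s_2)$ exactly one indicator survives, and that term equals $\tfrac{W^{(a_1,-1/2)}(s_1)}{\pi}\oint[\mathsf{J}_{s_1}^{(a_1,-1/2)}(q(z))]^2\,m_{a_1}(dz)=1$ by \eqref{Orthogonality}. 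This gives \eqref{C1}. Equation \eqref{C2} is the transpose statement — the second argument is fixed, the three first arguments run over the three lozenges below it — and is handled by the same mechanism: the relevant combination is $\sum_\lambda \mathcal C_0(n_1^{(\lambda)},a_1^{(\lambda)},s_1^{(\lambda)})\,W^{(a_1^{(\lambda)},-1/2)}(s_1^{(\lambda)})\,\mathsf{J}_{s_1^{(\lambda)}}^{(a_1^{(\lambda)},-1/2)}(q(z))\,w(z)^{n_1^{(\lambda)}}$ (the $W$'s now being the constants $2$ or $1$), and the two contiguous relations read downward in the level, together with \eqref{Orthogonality}, close it.

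\noindent\textbf{The asymptotic conditions \eqref{C3}, \eqref{C4}, \eqref{C6}, \eqref{C5}.} For every argument occurring there the white level lies strictly above the black level, so the indicator in \eqref{ExtraIntegral'} vanishes and only $\tilde K^{(1)}$ matters. The plan is to deform the outer ($v$-) contour of \eqref{TheDoubleIntegral'} inward through $|v|=1$. Writing $\tfrac{1-v^{-2}}{z+z^{-1}-v-v^{-1}}=\tfrac{(1-v^{-2})vz}{(z-v)(vz-1)}$, the poles crossed are the simple ones at $v=z$ and $v=z^{-1}$, each with residue $-1$ in the rational factor, together with the pole of $w(v)^{-n_2}$ at $v=1$; the essential singularity of $e^{-tq(v)}$ at $v=0$ stays inside. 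Since $q$, $w$ and the $\mathsf{J}$'s are $v\mapsto v^{-1}$-invariant and $m_{a_1}$ is (anti)invariant, the $v=z$ and $v=z^{-1}$ residues combine — after the cancellation $e^{tq(z)}e^{-tq(z)}=1$ — into the single $t$-independent integral
\[
\mp\,\frac{2\,W^{(a_1,-1/2)}(s_1)}{\pi}\cdot\frac{\mathcal C_0(n_1,a_1,s_1)}{\mathcal C_0(n_2,a_2,s_2)}\oint_{|z|=1}\mathsf{J}_{s_1}^{(a_1,-1/2)}(q(z))\,\mathsf{J}_{s_2}^{(a_2,-1/2)}(q(z))\,w(z)^{n_1-n_2}\,m_{a_1}(dz),
\]
a finite Laurent coefficient which stabilises for large $s_1$ (the relevant lozenge shifts keep $n_1-n_2$ and $s_1-s_2$ bounded), which one evaluates to $0$ at the type-II positions of \eqref{C6} and to $1$ at the type-III positions of \eqref{C5}, and which also vanishes when $|s_1-s_2|$ is large — the latter disposing of the far-apart factors $\tilde K(x,n,x',n',t)$ in \eqref{C3}--\eqref{C4}. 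The remaining pieces, the $v=1$ residue and the left-over double integral over $|v|<1$, still carry the uncancelled $e^{tq(z)}$, so performing the $z$-integral as a residue at $z=0$ produces coefficients of $e^{t/(2z)}$ of order $z^{-(s_1+O(1))}$, i.e.\ of size $O(t^{s_1}/s_1!)\to 0$; a parallel move (split $\mathsf{J}_{s_1}^{(a_1,-1/2)}(q(z))$ into its $z^{\pm s_1}$ pieces and push the $z$-contour toward $0$ or $\infty$, gaining $\rho^{\mp s_1}$, which beats the $|v|^{-s_2}$ from the left-over Jacobi polynomial) controls the left-over double integral. Hence $\tilde K^{(1)}\to 0$ on the type-II arguments of \eqref{C6}, $\to 1$ on the type-III arguments of \eqref{C5}, and $\to 0$ on the arguments of \eqref{C3}--\eqref{C4}.

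\noindent\textbf{Where the difficulty is.} The algebra of \eqref{C1}--\eqref{C2} is routine once the two contiguous relations and \eqref{Orthogonality} are in hand. The real work is the asymptotic part, and it has two fiddly components. First, one must pin down the dictionary between the general-framework shifts $x\mapsto x-1+\dn$, $x\mapsto x+\dn$ and the $(n,a,s)$-coordinates — recall the horizontal relabelling from $2\Z$ and $2\Z+1$ to $\Z$ — accurately enough that the stabilised Laurent coefficient above comes out to \emph{exactly} $0$ for type II and \emph{exactly} $1$ for type III; the presence of $W^{(-1/2,-1/2)}=2$ and $w(-1)=-2$ makes this bookkeeping unforgiving, and it is the one place an overlooked factor of $2$ would be fatal. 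Second, one must justify the contour deformations in the presence of the essential singularities of $e^{tq(z)}$ and $e^{-tq(v)}$ at the origin and of the order-$2n_2$ pole of $w(v)^{-n_2}$ at $v=1$, uniformly in $t\ge 0$ — keeping the contours a fixed distance from $0$ and tracking the $v=1$ residue is where the genuine estimates live. These estimates are elementary but must be carried out by hand here, since the asymptotic machinery surrounding Definition \ref{Normal} and Remark \ref{LozengeRemark} presupposes precisely the conclusion \eqref{C1}--\eqref{C5} that is being established.
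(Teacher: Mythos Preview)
Your treatment of \eqref{C1}--\eqref{C2} matches the paper's: both reduce to the two contiguous relations for $\mathsf J_s^{(\pm 1/2,-1/2)}$ together with the orthogonality \eqref{Orthogonality}, with the $\tilde K^{(1)}$-contributions cancelling identically and the $\tilde K^{(2)}$-contributions producing the Kronecker delta.

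For the asymptotic conditions \eqref{C3}--\eqref{C5} you take a genuinely different route from the paper, and yours has a gap. The paper first expands both Jacobi polynomials into monomials, uses $|v|>1$ to discard the $v^{-s_2}$-pieces outright, folds the $z^{s_1}$-piece into the $z^{-s_1}$-piece via $z\mapsto z^{-1}$, and only then deforms both contours \emph{outward} to $|z|=1+2\epsilon$, $|v|=1+\epsilon$. This keeps $|v|>1$ throughout, so the pole of $w(v)^{-n_2}$ at $v=1$ is never crossed, the essential singularity at $v=0$ is never approached, and the single residue at $v=z$ reduces to a short one-variable contour integral that is evaluated by inspection. For \eqref{C3}--\eqref{C4} the paper looks at the product as a quadruple integral in which the entire $s$-dependence collapses to $(z_1/v_2)^s$, and a single deformation $|v_2|>|z_1|$ finishes it.

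Your inward deformation of the $v$-contour forces you to handle both the order-$(2n_2-1)$ residue at $v=1$ and a leftover double integral over $|v|=r<1$ in which $\mathsf J_{s_2}(q(v))$ contributes a $v^{-s_2}$-term of size $r^{-s_2}$. Your proposed fix---push the $z$-contour to $|z|=\rho$ to gain $\rho^{\mp s_1}$---does not close: to make $\rho^{-s_1}r^{-s_2}\to 0$ with $s_1-s_2$ bounded you need $\rho>1/r$, and deforming $z$ from $|z|=1$ to $|z|>1/r$ crosses the pole at $z=v^{-1}$ (which lies on $|z|=1/r$), producing a further residue you do not track. Equivalently, your claim that the $z$-integral of the leftover piece is ``a residue at $z=0$'' is false as stated: for $|v|=r<1$ the factor $(z+z^{-1}-v-v^{-1})^{-1}$ has a pole at $z=v$ strictly inside $|z|=1$, and that residue (where the exponentials again cancel) must be accounted for. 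The argument can be rescued, but only by essentially reproducing the paper's monomial-expansion-plus-outward-deformation, which is what circumvents these obstructions in the first place.
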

\begin{proof}
Using \eqref{JacobiEquations}--\eqref{JacobiEquations2} and the orthogonality relation \eqref{Orthogonality}, it is straightforward to check that \eqref{C1} and \eqref{C2} hold. What happens is that in the left hand side of \eqref{C1} or \eqref{C2}, one obtains six terms, three of which come from \eqref{TheDoubleIntegral'} and three of which come from \eqref{ExtraIntegral'}. The three terms from \eqref{TheDoubleIntegral'} always sum to $0$, while the three terms from \eqref{ExtraIntegral'} sum to $0$ or $1$. 

Now we will prove \eqref{C6}-\eqref{C5} when $a_1=-1/2$. The term \eqref{ExtraIntegral'} equals zero, so we only need to look at \eqref{TheDoubleIntegral'}. Explicitly, the expression is
\begin{multline*}
K(n,-1/2,s,n,1/2,s',t)=\frac{2}{2\pi^2 i}\oint\oint_{\vert z\vert=1}\frac{e^{t(\frac{z+z^{-1}}{2})}}{e^{t(\frac{v+v^{-1}}{2})}} \left(\frac{z^{s}+z^{-s}}{2}\right)\\
\times\left(\frac{v^{s'+1/2}-v^{-s'-1/2}}{v^{1/2}-v^{-1/2}}\right)\frac{(\frac{z+z^{-1}}{2}-1)^{n}}{(\frac{v+v^{-1}}{2}-1)^{n}} \frac{1-v^{-2}}{z+z^{-1}-v-v^{-1}}\frac{dzdv}{2iz}, 
\end{multline*}
and we want the asymptotic result when $s,s'\rightarrow\infty$ in such a way that $s-s'$ is $0$ or $1$. Expand the paranthetical expression $v^{s'+1/2}-v^{-s'-1/2}$ to get two terms, each of which is a double integral. Since $1=\vert z\vert<\vert v\vert$, the term with $v^{-s'-1/2}$ goes to zero. For the remaining term, expand $z^s+z^{-s}$ to get two terms. For the term with $z^s$, make the substitution $z\mapsto z^{-1}$. What remains is
\[
\frac{2}{2\pi^2 i}\oint\oint_{\vert z\vert=1}\frac{e^{t(\frac{z+z^{-1}}{2})}}{e^{t(\frac{v+v^{-1}}{2})}} \frac{v^{s'}}{z^{s}}\frac{v}{v-1}\frac{(\frac{z+z^{-1}}{2}-1)^{n}}{(\frac{v+v^{-1}}{2}-1)^{n}} \frac{1-v^{-2}}{z+z^{-1}-v-v^{-1}}\frac{dzdv}{2iz}. 
\]
Now deform the $z$-contour to the circle $\vert z\vert=1+2\epsilon$ and the $v$-contour to the circle $\vert v\vert=1+\epsilon$, where $\epsilon>0$. With these deformations, $\vert v\vert<\vert z\vert$, so the double integral goes to zero. However, residues are picked up when the contours pass through each other. These residues equal
\[
-\frac{2}{\pi}\oint_{\vert z\vert=1+2\epsilon} z^{s'-s}\frac{z}{z-1}\frac{dz}{2iz}.
\]
There is a residue at $z=1$ which equals $-2$, and a residue at $z=0$ which equals $0$ for $s'\geq s$ and $2$ for $s>s'$. Since $c_0(n,-1/2,s,n,1/2,s)=-1/2$, this proves \eqref{C6} and \eqref{C5} when $a_1=-1/2$. The case when $a_1=1/2$ is similar. 

It remains to show \eqref{C3} and \eqref{C4}. When considering the product of two kernels, we obtain a quadruple integral. After the substitutions $z_1\mapsto z^{-1}_1$ and $v_2\mapsto v_2^{-1}$, the part of the integrand that depends on $s$ is just $(z_1/v_2)^s$. Therefore, deforming contours so that $\vert v_2\vert>\vert z_1\vert$ gives \eqref{C3} and \eqref{C4}.
\end{proof}

\subsection{Analysis steps in proof of theorem \ref{IsNormal}}\label{SpecificAnalysis}
For this section, we need a slightly different expression for the kernel. By (40)--(42) of \cite{kn:BK}, the kernel equals
\begin{multline}\label{TheDoubleIntegral}
K(n_1,a_1,s_1;n_2,a_2,s_2,t)\\
=\frac{W^{(a_1,-1/2)}(s_1)}{2\pi^2 i}\int_{e^{-i\theta}}^{e^{i\theta}}\oint_{\vert z\vert=1}\frac{e^{t(\frac{z+z^{-1}}{2})}}{e^{t(\frac{v+v^{-1}}{2})}} \mathsf{J}_{s_1}^{(a_1,-1/2)}\left(\frac{z+z^{-1}}{2}\right)\mathsf{J}_{s_2}^{(a_2,-1/2)}\left(\frac{v+v^{-1}}{2}\right)\\
\times\frac{(\frac{z+z^{-1}}{2}-1)^{n_1}}{(\frac{v+v^{-1}}{2}-1)^{n_2}} \frac{1-v^{-2}}{z+z^{-1}-v-v^{-1}} m_{a_1}(dz) dv\\
\end{multline}
\begin{multline}\label{ExtraIntegral}
+\mathbf{1}_{(n_1,a_1)\trianglerighteq(n_2,a_2)}\Bigg(\frac{W^{(a_1,-1/2)}(s_1)}{\pi}\oint_{\vert z\vert=1}\mathsf{J}_{s_1}^{(a_1,-1/2)}\left(\frac{z+z^{-1}}{2}\right)\mathsf{J}_{s_2}^{(a_2,-1/2)}\left(\frac{z+z^{-1}}{2}\right)\\
\times\left(\frac{z+z^{-1}}{2}-1\right)^{n_1-n_2}m_{a_1}(dz)\Bigg)
\end{multline}
\begin{multline}\label{ExtraIntegrall}
+\Bigg(\frac{W^{(a_1,-1/2)}(s_1)}{\pi}\int_{e^{-i\theta}}^{e^{i\theta}}\mathsf{J}_{s_1}^{(a_1,-1/2)}\left(\frac{z+z^{-1}}{2}\right)\mathsf{J}_{s_2}^{(a_2,-1/2)}\left(\frac{z+z^{-1}}{2}\right)\\
\times\left(\frac{z+z^{-1}}{2}-1\right)^{n_1-n_2}m_{a_1}(dz)\Bigg),
\end{multline}
where $\theta$ is any real number, and the arc from $e^{-i\theta}$ to $e^{i\theta}$ is outside the unit circle and does not cross $(-\infty,0]$.  

Set 
\[
G(\nu,\eta,\tau,z)=\tau\frac{z+z^{-1}}{2}+\eta\log\left(\frac{z+z^{-1}}{2}-1\right)-\nu\log z
\]
By Proposition 5.1.1 of \cite{kn:BK}, we can take $\mathcal{D}$ to be
\[
\mathcal{D}=\{(\nu,\eta,\tau): \eta,\tau>0, q_1(\eta,\tau)<\nu<q_2(\eta,\tau)\},
\] 
for some explicit algebraic functions $q_1$ and $q_2$.

\begin{lemma}\label{InverseOmega} Let $\Om_{\pm}$ denote $\Om(\pm\nu,\eta,\tau)$. Then $\bOm_+\Om_- \equiv 1.$ 
\begin{proof}
In general, 
\[
G'(z)=\frac{p(z)}{r(z)},
\]
where $p$ and $r$ are 
\begin{eqnarray*}
p(z) &=& \tau + (2\eta + 2\nu - \tau)z + (2\eta -2\nu - \tau)z^2 + \tau z^3,\\
r(z) &=& 2z^2(z-1).
\end{eqnarray*}
Let $p_{\pm}(z)$ denote the polynomial $p(z)$ corresponding to $(\pm\nu,\eta,\tau)$. Note that $z^3p_+(z^{-1})=p_-(z)$. By definition, $\Om_{\pm}$ is the zero of $p_{\pm}$ that is in the upper half-plane. Therefore, $\Om_-^{-1}=\bOm_+$.
\end{proof}
\end{lemma}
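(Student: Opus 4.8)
The plan is to work directly with the cubic that defines $\Om$. First I would record the elementary computation of $G' = \partial G/\partial z$ for the given
\[
G(\nu,\eta,\tau,z)=\tau\frac{z+z^{-1}}{2}+\eta\log\left(\frac{z+z^{-1}}{2}-1\right)-\nu\log z .
\]
Using $\frac{z+z^{-1}}{2}-1=\frac{(z-1)^2}{2z}$ and $\frac{1-z^{-2}}{2}=\frac{(z-1)(z+1)}{2z^2}$, one gets $G'(z)=p(z)/r(z)$ with $r(z)=2z^2(z-1)$ and $p(z)=\tau(z-1)^2(z+1)+2\eta z(z+1)-2\nu z(z-1)$, which expands to the cubic in the statement. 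Hence, away from $z=0,1$, the critical points of $G$ are exactly the roots of $p$, and by definition (and the standing hypothesis that $\Om$ maps into $\mathbb{H}$, verified for this model via Proposition~5.1.1 of \cite{kn:BK}) $\Om(\nu,\eta,\tau)$ is the unique root of $p$ lying in the open upper half-plane. Writing $p_\pm$ for the cubic attached to $(\pm\nu,\eta,\tau)$, the lemma then reduces to one algebraic observation.

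That observation is the reversal identity $z^3 p_+(z^{-1})=p_-(z)$. One checks it by comparing coefficients: the constant and leading coefficients of $p$ are both $\tau$ (hence unchanged by $\nu\mapsto-\nu$), while the $z^1$ and $z^2$ coefficients, $2\eta+2\nu-\tau$ and $2\eta-2\nu-\tau$, are interchanged by $\nu\mapsto-\nu$ — which is precisely the effect of reversing the coefficient list. Since $\tau>0$ we have $p_\pm(0)=\tau\neq0$, so every root of $p_\pm$ is nonzero and $z^{-1}$ makes sense on the root set. Consequently $p_+(\Om_+)=0$ forces $p_-(\Om_+^{-1})=\Om_+^{-3}p_+(\Om_+)=0$, so $\Om_+^{-1}$ is a root of $p_-$.

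It remains to place that reciprocal root in the right half-plane. Because $(\nu,\eta,\tau)\in\mathcal D$ gives $\Im\Om_+>0$ strictly, we have $\Im(\Om_+^{-1})=-\Im(\Om_+)/\vert\Om_+\vert^2<0$, so $\Om_+^{-1}$ is the lower-half-plane root of $p_-$. Since $p_-$ has real coefficients, its complex conjugate $\overline{\Om_+^{-1}}=\bOm_+^{-1}$ is also a root of $p_-$, and it lies in $\mathbb{H}$. By uniqueness of the $\mathbb{H}$-root of $p_-$ we conclude $\bOm_+^{-1}=\Om_-$, that is, $\bOm_+\,\Om_-=1$. Here $\Om_-$ is understood as the $\mathbb{H}$-root of $p_-$ — the natural continuation of $\Om$ past $\nu=q_1$ — which is exactly what is needed when the identity is invoked to implement the $\nu\mapsto-\nu$ symmetry in Section~\ref{SpecificAnalysis}.

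There is essentially no hard computation here; the only points that need care are the bookkeeping with complex conjugation in the half-planes and the appeal to uniqueness of the $\mathbb{H}$-root of the cubic — that uniqueness is what upgrades ``$\bOm_+^{-1}$ is a root of $p_-$ in $\mathbb{H}$'' to the equality with $\Om_-$, and it is supplied by the hypotheses on $\Om$ (and by \cite{kn:BK} for this model). The one genuinely useful step is spotting the reversal identity $z^3 p_+(z^{-1})=p_-(z)$.
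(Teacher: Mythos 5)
Your argument is correct and follows the same route as the paper: compute $G'=p/r$, observe the coefficient-reversal identity $z^3p_+(z^{-1})=p_-(z)$, and use uniqueness of the upper-half-plane root to identify $\bOm_+^{-1}$ with $\Om_-$. The extra bookkeeping you supply (nonvanishing of the roots since $p_\pm(0)=\tau\neq 0$, and passing from the lower-half-plane root $\Om_+^{-1}$ to its conjugate) just makes explicit the conjugation step the paper leaves implicit.
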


Now let us return to the proof of the fourth condition in Definition \ref{Normal}. Start by examining \eqref{TheDoubleIntegral}. Expanding the parantheses, we obtain four terms corresponding to $z^{s_1}v^{s_2},z^{s_1}v^{-s_2},z^{-s_1}v^{-s_2}$, and $z^{-s_1}v^{s_2}$. For the two terms with $z^{s_1}$, make the substitution $z\rightarrow z^{-1}$. What remains are two terms, corresponding to $z^{-s_1}v^{s_2}$ and $z^{-s_1}v^{-s_2}$.  Therefore, \eqref{TheDoubleIntegral} equals
\begin{multline}\label{TheDoubleIntegral2}
\frac{W^{(-1/2,-1/2)}(s_1)}{4\pi^2 i}\int_{e^{-i\theta}}^{e^{i\theta}}\oint\frac{e^{t(\frac{z+z^{-1}}{2})}}{e^{t(\frac{v+v^{-1}}{2})}} z^{-s_1}(v^{s_2}+v^{-s_2})\\
\times \frac{(\frac{z+z^{-1}}{2}-1)^{n_1}}{(\frac{v+v^{-1}}{2}-1)^{n_2}} \frac{1-v^{-2}}{z+z^{-1}-v-v^{-1}}\frac{dz dv}{2iz},
\end{multline}

We now need to deform the contours in \eqref{TheDoubleIntegral2} to steepest descent paths. In other words, we need
\begin{equation}\label{Inequality1}
\Re(G(\nu_1,\eta_1,\tau,z))<\Re(G(\nu_1,\eta_1,\tau,\Om(\nu_1,\eta_1,\tau)))
\end{equation}
for all $z$ on the $z$-contour and
\begin{eqnarray}\label{Inequality2} 
\Re(G(\nu_2,\eta_2,\tau,v))&>&\Re(G(\nu_2,\eta_2,\tau,\Om(\nu_2,\eta_2,\tau))),\\
\Re(G(-\nu_2,\eta_2,\tau,v))&>&\Re(G(-\nu_2,\eta_2,\tau,\Om(-\nu_2,\eta_2,\tau)))
\label{Inequality3}
\end{eqnarray}
for all $v$ on the $v$-contour. By Lemma \ref{InverseOmega} and the definition of $G$, we see that $\Re(G(\nu_2,\eta_2,\tau,\Om(\nu_2,\eta_2,\tau)))=\Re(G(-\nu_2,\eta_2,\tau,\Om(-\nu_2,\eta_2,\tau)))$. If $\vert v\vert\geq1$, then $\Re(G(-\nu_2,\eta_2,\tau,v))\geq\Re(G(\nu_2,\eta_2,\tau,v))$. Since the steepest descent paths can go completely outside the unit circle (see Proposition 5.1.2 of \cite{kn:BK}), \eqref{Inequality3} follows from \eqref{Inequality2}.

\begin{figure}[htp]
\caption{On the left is $\Re(G(\nu_1,\eta_1,\tau,z)-G(\nu_1,\eta_1,\tau,\Om(\nu_1,\eta_1,\tau)))$, and on the right is $\Re(G(\nu_2,\eta_2,\tau,v)-G(\nu_2,\eta_2,\tau,\Om(\nu_2,\eta_2,\tau)))$. White regions indicate $\Re<0$ and shaded regions indicate $\Re>0$. The double zero occurs at $\Om(\nu_j,\eta_j,\tau)$. The arc $v$ goes from $e^{-i\theta}$ to $e^{i\theta}$. The unit circle has been drawn on the right.}
\includegraphics[totalheight=0.30\textheight]{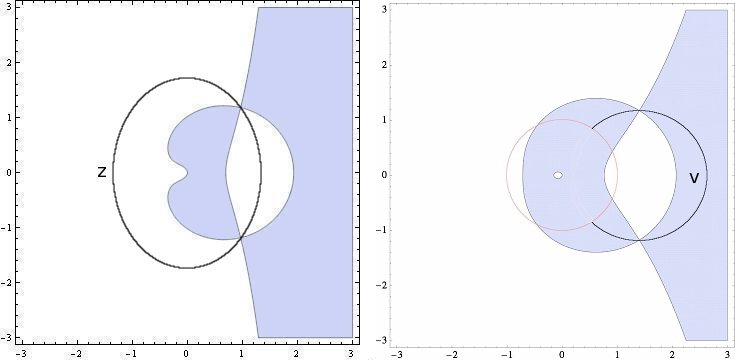}
\label{Contours}
\end{figure}

If we deform the contours to the steepest descent paths $\Gamma_1$ and $\Gamma_2$ in Figure \ref{Contours}, we get that \eqref{TheDoubleIntegral} asymptotically becomes
\[
\left(\frac{1}{2\pi i}\right)^2\int_{\Gamma_1}\int_{\Gamma_2}\frac{\exp(NG(\nu_1,\eta_1,\tau,z))}{\exp(NG(\nu_2,\eta_2,\tau,v))}\frac{1-v^{-2}}{z+z^{-1}-v-v^{-1}}\frac{dvdz}{z}
\]
\[
+\left(\frac{1}{2\pi i}\right)^2\int_{\Gamma_1}\int_{\Gamma_2}\frac{\exp(NG(\nu_1,\eta_1,\tau,z))}{\exp(NG(-\nu_2,\eta_2,\tau,v))}\frac{1-v^{-2}}{z+z^{-1}-v-v^{-1}}\frac{dvdz}{z},
\]
plus possibly the residues at $z=v$. Since $\Gamma_2$ goes outside the unit circle and the critical point of $G(-\nu_2,\eta_2,\tau,v)$ lies inside the unit circle, the second double integral is negligible.  

Now we need to compute the possible residues at $z=v$. If the contours pass through each other, then the residues at $z=v$ equal
\begin{multline}\label{Residues}
\frac{W^{(-1/2,-1/2)}(s_1)}{4\pi i}\int_{e^{i\theta}}^{\zeta}z^{s_2-s_1}\left(\frac{z+z^{-1}}{2}-1\right)^{n_1-n_2}\frac{dz}{z}\\
+\frac{W^{(-1/2,-1/2)}(s_1)}{4\pi i}\int_{\bar{\zeta}}^{e^{-i\theta}}z^{s_2-s_1}\left(\frac{z+z^{-1}}{2}-1\right)^{n_1-n_2}\frac{dz}{z}
\end{multline}
\begin{multline}
+\frac{W^{(-1/2,-1/2)}(s_1)}{4\pi i}\int_{e^{i\theta}}^{\zeta}z^{-s_2-s_1}\left(\frac{z+z^{-1}}{2}-1\right)^{n_1-n_2}\frac{dz}{z}\\
+\frac{W^{(-1/2,-1/2)}(s_1)}{4\pi i}\int_{\bar{\zeta}}^{e^{-i\theta}}z^{-s_2-s_1}\left(\frac{z+z^{-1}}{2}-1\right)^{n_1-n_2}\frac{dz}{z},
\label{Residues2}
\end{multline}
where $\zeta$ is any complex number satisfying \eqref{Inequality1} and \eqref{Inequality2}. See Figure \ref{ResidueContours}. If the contours do not pass through each other, then there is no contribution from the residues. For notational convenience, set 
\[
\xi=
\begin{cases}
\zeta,\ \text{if}\ \zeta \ \text{exists},\\
e^{i\theta},\ \text{otherwise}.
\end{cases}
\]

\begin{figure}[htp]
\begin{center}
\caption{The $z$ and $v$ contours from Figure \ref{Contours}. They intersect at $\zeta$.}
\includegraphics[totalheight=0.15\textheight]{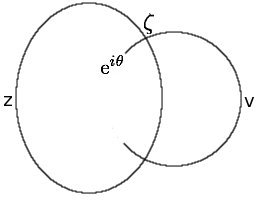}
\label{ResidueContours}
\end{center}
\end{figure}

It is important to note that $\xi$ is arbitrarily selected. The only requirement on $\zeta$ is that it satisfies the inequalities \eqref{Inequality1} and \eqref{Inequality2}, and the only requirement on $e^{i\theta}$ is that $\Re(G_2(e^{i\theta}))>\Re(G_2(\Om_2))$. So there exists $\epsilon>0$ such that  if $\vert \xi_1-\xi\vert<\epsilon$, then $\xi_1$ also satisfies those inequalities.

Now we need to compute \eqref{ExtraIntegral} and \eqref{ExtraIntegrall}. Expanding the parantheses, we get four terms corresponding to $z^{s_1+s_2},z^{s_1-s_2},z^{s_2-s_1},z^{-s_1-s_2}$. For the terms corresponding $z^{-s_2-s_1}$ and $z^{s_1-s_2}$, make the substitution $z\rightarrow z^{-1}$. Therefore, the sum of \eqref{ExtraIntegral},\eqref{ExtraIntegrall},\eqref{Residues},\eqref{Residues2} equals 
\begin{multline}\label{AnnoyingIntegral}
\frac{1}{4\pi i}\int_{\bar{\xi}}^{\xi} z^{s_2-s_1} \left(\frac{z+z^{-1}}{2}-1\right)^{n_1-n_2}\frac{dz}{z}\\
+\frac{1}{4\pi i}\int_{\bar{\xi}}^{\xi} z^{-s_2-s_1} \left(\frac{z+z^{-1}}{2}-1\right)^{n_1-n_2}\frac{dz}{z},
\end{multline}
where the contour crosses $(0,\infty)$ if $n_1\geq n_2$, and it crosses $(-\infty,0)$ if $n_1<n_2$. For each integral, deform the contour to a circular arc of constant radius. It is not a difficult calculus exercise to show that the absolute value of the integrand is maximized at the endpoints. 


Using a standard asymptotic analysis (see e.g. chapter 3 of \cite{kn:M}), we get that the asymptotic expansion of \eqref{AnnoyingIntegral} is
\begin{multline*}
\frac{c_1}{N}\xi^{N(\nu_2-\nu_1)}\left(\frac{\xi+\xi^{-1}}{2}-1\right)^{N(\eta_1-\eta_2)}+\frac{\overline{c_1}}{N}\bar{\xi}^{N(\nu_2-\nu_1)}\left(\frac{\bar{\xi}+\bar{\xi}^{-1}}{2}-1\right)^{N(\eta_1-\eta_2)}\\
+\frac{c_2}{N}\xi^{N(-\nu_2-\nu_1)}\left(\frac{\xi+\xi^{-1}}{2}-1\right)^{N(\eta_1-\eta_2)}+\frac{\overline{c_2}}{N}\bar{\xi}^{N(-\nu_2-\nu_1)}\left(\frac{\bar{\xi}+\bar{\xi}^{-1}}{2}-1\right)^{N(\eta_1-\eta_2)}
\end{multline*}
for some constants $c_1,c_2$. To complete the proof, notice that if 
\[
\left|\xi^{\pm\nu_2-\nu_1}\left(\frac{\xi+\xi^{-1}}{2}-1\right)^{\eta_1-\eta_2}\right|>e^{\Re(G_1(\Om_1)-G_2(\Om_2))}
\]
for some selection of $\pm$, then the asymptotic expansion of the kernel would depend on $\xi$. But $\xi$ was arbitrarily selected, so this is impossible. 

Now that the fourth condition has been proved, it remains to show that the second condition in Definition \ref{Normal} holds. Recall that $\Om(\nu,\eta,\tau)$ is the root of $p(\nu,\eta,\tau,z)$ that lies in the upper half-plane, where $p$ is the polynomial from Lemma \ref{InverseOmega}. We thus need to solve
\[
p(q_2(\eta,\tau)-\epsilon_1,\eta,\tau,\Om(q_2(\eta,\tau),\eta,\tau)+\epsilon_2)=0.
\]
Since $\Om(q_2(\eta,\tau),\eta,\tau)$ is a double zero of $p(q_2(\eta,\tau),\eta,\tau,z)$, we thus have to solve
\[
\frac{1}{2}\epsilon_2^2p''(\Om)-2\epsilon_1(\Om+\epsilon_2-\epsilon_2^2-2\epsilon_2\Om-\Om^2)+\bigO(\epsilon_2^3)=0,
\]
which implies that $\epsilon_2=\bigO(\epsilon_1^{1/2})$. In other words, as $\nu$ approahces $q_2(\eta,\tau)$, $\Om(\nu,\eta,\tau)-\Om(q_2(\eta,\tau),\eta,\tau)=\bigO((q_2(\eta,\tau)-\nu)^{1/2})$. Plugging this into the expression for $G''$ gives the result.

\section{Asymptotic Lemmas}
\subsection{Riemannian Approximations}
\begin{lemma}\label{ZeroLemma}
Suppose that $g\in C^1[a,b]$ and $I\in C^2[a,b]$. Suppose that as $\delta\rightarrow 0$, the Lesbesgue measure of the set $\{x\in [a,b]: I'(x)\in 2\pi\Z+[-\delta,\delta]\}$ is $\bigO(\delta^a)$ for some positive $a$. 
Let $\epsilon_N\in [-1,1]$ depend on $N$. Then
\[
\displaystyle\lim_{N\rightarrow\infty}\sum_{k=1}^{\lfloor (b-a)N \rfloor}e^{iNI(a+(k+\epsilon_N)/N)}g\left(a+\frac{k+\epsilon_N}{N}\right)\frac{1}{N} = o(1).
\]
\end{lemma}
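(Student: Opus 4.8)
The plan is to estimate the Riemann-type sum by splitting the interval $[a,b]$ into a "non-resonant" part, where $I'(x)$ stays away from $2\pi\Z$, and a "resonant" part, which by hypothesis has small measure. On the non-resonant part we sum by parts (Abel summation) to exploit cancellation in $e^{iNI(\cdot)}$; on the resonant part we use the trivial bound together with the measure estimate $\bigO(\delta^a)$.

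More precisely, fix a small $\delta>0$ (to be sent to $0$ after $N\to\infty$, or coupled as $\delta=\delta_N\to0$ slowly). Let $S_\delta=\{x\in[a,b]: I'(x)\in 2\pi\Z+[-\delta,\delta]\}$, so $|S_\delta|=\bigO(\delta^a)$. Since $I\in C^2$, $I'$ is $C^1$, so $[a,b]\setminus S_\delta$ is a finite union of intervals (their number is controlled by how many times $I'$ crosses a band around a multiple of $2\pi$, which is finite because $I'$ is continuous on a compact interval and $I''$ is bounded). On each such maximal subinterval $[\alpha,\beta]$, write $e^{iNI(x)} = \frac{1}{iNI'(x)}\frac{d}{dx}e^{iNI(x)}$ and perform summation by parts on the sampled sum $\sum_k e^{iNI(x_k)}g(x_k)\tfrac1N$ with $x_k=a+(k+\epsilon_N)/N$; since $|I'|\ge\delta$ on $[\alpha,\beta]$, the boundary terms are $\bigO(1/(N\delta))$ and the interior terms, after the discrete product rule, pick up $\tfrac{d}{dx}\big(g/I'\big)$ which is bounded by $\bigO(1/\delta^2)$ times something integrable, giving a total contribution $\bigO\!\big(1/(N\delta^2)\big)$ per subinterval, hence $\bigO\!\big(1/(N\delta^2)\big)$ overall since the number of subintervals is fixed (independent of $N$). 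Meanwhile the sampled points landing in $S_\delta$ number $\bigO(N|S_\delta|) + \bigO(1) = \bigO(N\delta^a)$, and each contributes at most $\|g\|_\infty/N$, so that piece is $\bigO(\delta^a)$.

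Combining, the whole sum is $\bigO\!\big(\delta^a + 1/(N\delta^2)\big)$. Choosing $\delta=\delta_N\to0$ with $N\delta_N^2\to\infty$ (e.g. $\delta_N=N^{-1/4}$) makes both terms $o(1)$, which is the claim. The dependence on $\epsilon_N\in[-1,1]$ is harmless: shifting the sample points by a bounded amount changes which $k$'s land in $S_\delta$ only by widening $S_\delta$ by $\bigO(1/N)$, which does not affect the measure estimate, and the summation-by-parts argument is insensitive to the $\epsilon_N$-shift.

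The main obstacle I expect is making the summation-by-parts step rigorous at the discrete level near the endpoints $\alpha,\beta$ of the non-resonant subintervals — one must be careful that $x_k$ may not align with $\alpha,\beta$, and that $1/I'(x_k)$ is controlled uniformly by $1/\delta$ only for indices genuinely inside $[\alpha,\beta]$; handling the $\bigO(1)$ "transition" indices straddling the boundary, and verifying that the number of non-resonant subintervals is bounded independently of $N$ (which uses $I\in C^2$ so that $I'$ has finitely many level crossings of each value $2\pi m$ with $|m|$ bounded by $\|I'\|_\infty/2\pi$), requires some care but no new ideas. A cleaner alternative, if one prefers to avoid discrete Abel summation, is to compare the sum to the integral $\int_a^b e^{iNI(x)}g(x)\,dx$ (the error is $\bigO(1/N)$ by $C^1$-regularity of the integrand's phase and amplitude, uniformly in $\epsilon_N$) and then apply the continuous van der Corput / non-stationary phase estimate with the same resonant/non-resonant split; I would likely present it this way since the continuous estimate $\int e^{iNI}g = o(1)$ under the stated measure hypothesis is standard.
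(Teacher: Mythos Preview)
Your Abel-summation approach on the non-resonant set is correct and does prove the lemma, but it is genuinely different from the paper's argument. The paper partitions the index range into blocks of length $\sim N^{1/3}$, Taylor-expands $I$ to first order around the center $t_s$ of each block so that $\sum_k e^{iNI(t_k)}$ becomes a geometric progression with ratio $e^{iI'(t_s)}$, and then separates ``resonant'' blocks ($I'(t_s)\in 2\pi\Z+[-\delta,\delta]$, trivial bound) from non-resonant ones (geometric-series bound $\bigO(\delta^{-1})$); the outcome is $\bigO(\delta^{-1}N^{-1/3})+\bigO(\delta^a)$ with $\delta=N^{-1/6}$. Your route avoids the block decomposition and linearization entirely and gives a slightly sharper non-resonant bound, at the price of having to control the boundary contributions at the endpoints of each non-resonant subinterval. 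Both are legitimate; the paper's version is more elementary, yours is closer in spirit to van der Corput.

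Two points to fix. First, your claim that $I\in C^2$ forces $I'$ to have only finitely many level crossings of each value $2\pi m$ is false in general: a $C^1$ function can vanish on $\{1/(n\pi):n\ge 1\}\cup\{0\}$, so the number $M(\delta)$ of non-resonant components need not stay bounded as $\delta\to 0$. This is not fatal---your own first option (fix $\delta$, let $N\to\infty$ so the $M(\delta)/(N\delta)$ boundary contribution vanishes, then let $\delta\to 0$) works without any control on $M(\delta)$---but the coupled choice $\delta=\delta_N$ needs more care than you indicate. Second, and more seriously, your ``cleaner alternative'' of comparing the sum to $\int_a^b e^{iNI(x)}g(x)\,dx$ does not work: the integrand has derivative of size $\sim N$, so the Riemann-sum discrepancy over $[a,b]$ is $\bigO(1)$, not $\bigO(1/N)$, and that $\bigO(1)$ is exactly the order of the quantity you are trying to show is $o(1)$. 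The cancellation here is genuinely discrete and cannot be offloaded to the continuous non-stationary-phase estimate.
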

\begin{proof}

Let $t_k$ denote $a+(k+\epsilon_N)/N$. Note that $\vert t_k-t_s\vert = \vert k-s\vert/N$. Fix some $1+N^{1/3}\leq s\leq \lfloor (b-a)N\rfloor-N^{1/3}$ and consider
\[
\displaystyle\sum_{k=s-N^{1/3}}^{s+N^{1/3}} e^{iNI(t_k)}g\left(t_k\right)\frac{1}{N}.
\]
We bound this sum in two cases.

\textbf{Case 1}. Assume $I'(t_s)\notin 2\pi\mathbb{Z}+[-\delta,\delta]$. 
For $s-N^{1/3}\leq k\leq s+N^{1/3}$, Taylor's theorem says that
\[
I(t_k)=[I(t_s)+I'(t_s)(t_k-t_s)]+[\frac{1}{2}I''(c_k)(t_k-t_s)^2]=:[I_1(t_k)]+[I_2(t_k)]
\]
for some $c_k$ between $t_s$ and $t_k$. We will prove that 
\[
\displaystyle\sum_{k=s-N^{1/3}}^{s+N^{1/3}} e^{iNI(t_k)}g\left(t_k\right)\frac{1}{N}\leq\frac{99\delta^{-1}\Vert g\Vert_{\infty}}{N}+\frac{18\Vert g\Vert_{\infty}\Vert I''\Vert_{\infty}}{N}+\frac{3\Vert g'\Vert_{\infty}}{N^{4/3}}
\]

Using the inequality
\[
\vert g(t_k)-g(t_s) \vert \leq \Vert g'\Vert_{\infty}\cdot\vert t_k-t_s\vert = \Vert g'\Vert _{\infty} \frac{\vert k-s\vert}{N},  
\]
we have that
\begin{multline}\label{E1}
\displaystyle\left|\sum_{k=s-N^{1/3}}^{s+N^{1/3}} e^{iNI(t_k)}(g\left(t_k\right)-g(t_s))\frac{1}{N}\right|\leq \sum_{k=s-N^{1/3}}^{s+N^{1/3}}\Vert g'\Vert _{\infty} \frac{\vert k-s\vert}{N^2}\\
=2\Vert g'\Vert _{\infty} \frac{N^{1/3}(N^{1/3}+1)}{N^2}.
\end{multline}
Furthermore, for $\vert k-s\vert \leq N^{1/3}$,
\begin{equation}
\label{E2}
\vert 1 - e^{iNI_2(t_k)}\vert = \vert 1- e^{iI''(c_k)(k-s)^2/(2N)}\vert \leq \vert 1- e^{i\Vert I''\Vert_{\infty}N^{-1/3}}\vert \leq 9\Vert I''\Vert_{\infty}N^{-1/3}.
\end{equation}
Also, 
\begin{multline}\label{E3}
\displaystyle\left|\sum_{k=s-N^{1/3}}^{s+N^{1/3}} e^{iNI_1(t_k)} \right|=\left|e^{iNI(t_s)}\sums e^{iI'(t_s)(k-s)}\right|\\
\leq  \left|\frac{4}{e^{iI'(t_s)}-1} \right|\leq 99\delta^{-1}
\end{multline}

Using \eqref{E1}, the definition of $I_1$ and $I_2$, \eqref{E2} and \eqref{E3} respectively, 
\begin{eqnarray*}
\left|\displaystyle\sum_{k=s-N^{1/3}}^{s+N^{1/3}} e^{iNI(t_k)}g\left(t_k\right)\frac{1}{N}\right|&\leq&\left|\displaystyle g\left(t_s\right)\sum_{k=s-N^{1/3}}^{s+N^{1/3}} e^{iNI(t_k)}\frac{1}{N}\right|+3\Vert g'\Vert_{\infty}N^{-4/3}\\
&\leq&\Vert g\Vert_{\infty}\left|\displaystyle\sum_{k=s-N^{1/3}}^{s+N^{1/3}} \frac{e^{iNI_1(t_k)}+e^{iNI_1(t_k)}(e^{iNI_2(t_k)}-1)}{N}\right|+\frac{3\Vert g'\Vert_{\infty}}{N^{4/3}}\\
&\leq& \Vert g\Vert_{\infty}\left|\sums\frac{e^{iNI_1(t_k)}}{N}\right|+\frac{18\Vert g\Vert_{\infty}\Vert I''\Vert_{\infty}}{N}+\frac{3\Vert g'\Vert_{\infty}}{N^{4/3}}\\
&\leq& \frac{99\delta^{-1}\Vert g\Vert_{\infty}}{N}+\frac{18\Vert g\Vert_{\infty}\Vert I''\Vert_{\infty}}{N}+\frac{3\Vert g'\Vert_{\infty}}{N^{4/3}}
\end{eqnarray*}

\textbf{Case 2}. Assume that $I'(t_s)\in 2\pi\mathbb{Z}+(-\delta,\delta)$. In this case, only a simple estimate is needed:
\[
\left|\sums e^{iNI(t_k)}g(t_k)\frac{1}{N}\right|\leq \frac{2\Vert g\Vert_{\infty}}{N^{2/3}}.
\]

Since the estimate in case 1 is much better than the estimate in case 2, we need an upper bound on how frequently case 2 can occur. In other words, we need an upper bound on the measure of the set $\{x\in [a,b]: I'(x)\in 2\pi\mathbb{Z}+(\delta,\delta)\}$. We assumed that
\[
\vert \{x\in [a,b]: I'(x)\in 2\pi\mathbb{Z}+[\delta,\delta]\}\vert = \bigO(\delta^a).
\]


Now we need to sum over all $s$ in the set $\{N^{1/3}+1, 3N^{1/3}+1,5N^{1/3}+1\ldots,(b-a)N-N^{1/3}\}$. There are $\bigO(\delta^a N^{2/3})$ terms for which case 2 applies. Therefore, 
\[
\sum_{k=1}^{\lfloor (b-a)N \rfloor}e^{iNI(a+(k+\epsilon_N)/N)}g\left(a+\frac{k+\epsilon_N}{N}\right)\frac{1}{N} = \mathcal{O}\left(\delta^{-1}N^{-1/3}\right)+\mathcal{O}(\delta^a),
\]
and setting $\delta=N^{-1/6}$ yields the result.
\end{proof}

\begin{proposition}\label{OneProposition}
Suppose $f:\mathbb{Z}_{\geq 0}\rightarrow \mathbb{C}$ is a function such that for each $t>0$,
\[
f(\lfloor tN \rfloor)=e^{iNI(t)}g(t)N^d + o(N^d)\ \text{as}\ N\rightarrow\infty,
\]
where $g$ and $I$ satisfy the same assumptions as in Lemma \ref{ZeroLemma}. Further suppose that the error term $o(N^d)$ is uniform, i.e.
\[
\frac{f(\lfloor tN \rfloor)-e^{iNI(t)}g(t)N^d}{N^d}\rightarrow 0\ \text{uniformly on}\ [a,b].
\]
Then as $N\rightarrow\infty$,
\[
\displaystyle\sum_{x=\lfloor aN\rfloor +1}^{\lfloor bN\rfloor} f(x) = o(N^{d+1}).
\]
\end{proposition}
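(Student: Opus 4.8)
The plan is to recognize the sum as a Riemann-type sum to which Lemma \ref{ZeroLemma} applies directly, after peeling off the negligible contribution coming from the assumed asymptotic expansion of $f$. The substantive oscillatory cancellation has already been isolated in Lemma \ref{ZeroLemma}, so what remains here is bookkeeping plus one application of that lemma.

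First I would reindex the sum. Writing $x=\lfloor aN\rfloor+k$ and choosing the sample point $t_k:=a+k/N$, one has $\lfloor t_kN\rfloor=\lfloor aN+k\rfloor=\lfloor aN\rfloor+k=x$ because $k\in\Z$, and moreover $t_k\in(a,b]$ for $1\le k\le\lfloor(b-a)N\rfloor$, so $t_k$ stays in the interval on which the uniform convergence is assumed. Since $\lfloor bN\rfloor-\lfloor aN\rfloor$ and $\lfloor(b-a)N\rfloor$ differ by at most $1$, and since the hypothesis gives $|f(\lfloor tN\rfloor)|\le(\Vert g\Vert_\infty+o(1))N^d$ uniformly on $[a,b]$ (using $|e^{iNI(t)}|=1$ and boundedness of $g$), we obtain
\[
\sum_{x=\lfloor aN\rfloor+1}^{\lfloor bN\rfloor}f(x)=\sum_{k=1}^{\lfloor(b-a)N\rfloor}f(\lfloor t_kN\rfloor)+\bigO(N^d),
\]
the $\bigO(N^d)$ absorbing the at-most-one boundary term.

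Next I would substitute the assumed expansion $f(\lfloor t_kN\rfloor)=e^{iNI(t_k)}g(t_k)N^d+r_N(t_k)N^d$, where $\eta_N:=\sup_{t\in[a,b]}|r_N(t)|\to0$ by the uniformity hypothesis. This gives
\[
\sum_{k=1}^{\lfloor(b-a)N\rfloor}f(\lfloor t_kN\rfloor)=N^{d+1}\left(\frac1N\sum_{k=1}^{\lfloor(b-a)N\rfloor}e^{iNI(a+k/N)}g(a+k/N)\right)+N^d\sum_{k=1}^{\lfloor(b-a)N\rfloor}r_N(t_k).
\]
The second term is bounded in modulus by $\lfloor(b-a)N\rfloor\cdot\eta_N\cdot N^d=o(N^{d+1})$. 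The parenthetical factor in the first term is exactly the quantity appearing on the left of Lemma \ref{ZeroLemma} with $\epsilon_N=0$; since $g$ and $I$ satisfy the hypotheses of that lemma, it is $o(1)$, whence the first term is $o(N^{d+1})$. Combining these with the $\bigO(N^d)=o(N^{d+1})$ boundary term yields $\sum_x f(x)=o(N^{d+1})$.

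I do not expect a genuine obstacle here, since all the hard work was done in Lemma \ref{ZeroLemma}; the only points needing any care are the elementary reconciliation of $\lfloor bN\rfloor-\lfloor aN\rfloor$ with $\lfloor(b-a)N\rfloor$ and the verification that every sample point $t_k$ lies in $[a,b]$, where $r_N\to 0$ uniformly. If one is uneasy about the endpoint, one may simply run the argument on a slightly enlarged interval $[a,b+1]$ and discard the extra $\bigO(1)$ terms, each of size $\bigO(N^d)$, without affecting the conclusion.
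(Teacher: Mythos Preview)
Your proof is correct and follows essentially the same approach as the paper: separate the uniformly small remainder from the oscillatory main term, then invoke Lemma~\ref{ZeroLemma}. Your version is simply more explicit about the reindexing $x=\lfloor aN\rfloor+k$, the choice $\epsilon_N=0$, and the at-most-one boundary term arising from $\lfloor bN\rfloor-\lfloor aN\rfloor$ versus $\lfloor(b-a)N\rfloor$, points the paper glosses over.
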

\begin{proof}
This follows quickly from Lemma \ref{ZeroLemma}.
\end{proof}

\begin{proposition}\label{RiemannApproximation}
Suppose $f:\mathbb{Z}_{\geq 0}\rightarrow \mathbb{R}$ is a function such that for each $t>0$,
\[
f(\lfloor tN \rfloor)=g(t)N^d + o(N^d)\ \text{as}\ N\rightarrow\infty,
\]
where $g$ is a function on $[a,b]$ of bounded variation. Further suppose that the error term $o(N^d)$ is uniform, i.e.
\[
\frac{f(\lfloor tN \rfloor)-g(t)N^d}{N^d}\rightarrow 0\ \text{uniformly on}\ [a,b].
\]
Then
\[
\displaystyle\sum_{x=\lfloor aN\rfloor +1}^{\lfloor bN\rfloor} f(x) = N^{d+1}\int_a^b g(t)dt + o(N^{d+1}).
\]
\end{proposition}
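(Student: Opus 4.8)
The plan is to reduce Proposition \ref{RiemannApproximation} to Lemma \ref{BVLemma} in exactly the same way that Proposition \ref{OneProposition} was reduced to Lemma \ref{ZeroLemma}. The point is that the uniform error bound lets us replace the discrete sum of $f(x)$ by the Riemann sum of $g(x/N)N^d$ up to an error that is $o(N^{d+1})$, and then Lemma \ref{BVLemma} converts that Riemann sum into the integral.

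\begin{proof}
Let $\epsilon>0$. By the uniformity of the error term, there exists $N_0$ such that for all $N>N_0$ and all $x\in[\lfloor aN\rfloor+1,\lfloor bN\rfloor]$,
\[
\left| f(x)-g\left(\frac{x}{N}\right)N^d\right| < \epsilon N^d.
\]
Summing over the $\lfloor bN\rfloor-\lfloor aN\rfloor\leq (b-a)N+1$ values of $x$ gives
\[
\left| \sum_{x=\lfloor aN\rfloor+1}^{\lfloor bN\rfloor} f(x) - N^d\sum_{x=\lfloor aN\rfloor+1}^{\lfloor bN\rfloor} g\left(\frac{x}{N}\right)\right| \leq \big((b-a)N+1\big)\,\epsilon N^{d} \leq 2(b-a)\epsilon N^{d+1}
\]
for $N$ large enough. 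By Lemma \ref{BVLemma}, $\sum_{x=\lfloor aN\rfloor+1}^{\lfloor bN\rfloor} g(x/N) = N\int_a^b g(t)\,dt + o(N)$, so
\[
N^d\sum_{x=\lfloor aN\rfloor+1}^{\lfloor bN\rfloor} g\left(\frac{x}{N}\right) = N^{d+1}\int_a^b g(t)\,dt + o(N^{d+1}).
\]
Combining the last two displays, and letting $\epsilon\to 0$ after $N\to\infty$, yields
\[
\sum_{x=\lfloor aN\rfloor+1}^{\lfloor bN\rfloor} f(x) = N^{d+1}\int_a^b g(t)\,dt + o(N^{d+1}),
\]
as claimed.
\end{proof}

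There is essentially no obstacle here: the statement is a routine ``uniform approximation plus Riemann sum'' argument, and all the real work has already been isolated into Lemma \ref{BVLemma} (which in turn is the elementary fact that Riemann sums of a bounded-variation function converge, together with an estimate controlling the shift by the fractional part $\{aN\}$). The only point requiring a little care is bookkeeping the difference between the index ranges $\lfloor aN\rfloor+1\le x\le\lfloor bN\rfloor$ and $1\le k\le\lfloor(b-a)N\rfloor$, but this discrepancy is at most one term and contributes $O(N^d)=o(N^{d+1})$, so it is harmless. One should also note that, unlike in Proposition \ref{OneProposition}, there is no oscillatory factor $e^{iNI(t)}$, so we do not invoke Lemma \ref{ZeroLemma} at all; the bounded-variation hypothesis on $g$ is exactly what Lemma \ref{BVLemma} needs, and $f$ is real-valued here (as stated), though the argument works verbatim for complex-valued $f$ as well.
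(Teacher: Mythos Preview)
Your proof is correct and follows essentially the same approach as the paper: use the uniform error bound to replace $f(x)$ by $g(x/N)N^d$ up to $\epsilon N^d$, sum, and invoke Lemma \ref{BVLemma}. The paper writes the bound as a two-sided inequality rather than with absolute values (since $f$ is real-valued), and it spells out the observation that $t=x/N$ lies in $[a,b]$ with $\lfloor tN\rfloor=x$, but otherwise the arguments are identical.
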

\begin{proof}
This is an elementary, albeit somewhat tedious, exercise in approximating integrals with Riemann sums.
\end{proof}

\subsection{Asymptotics}\label{Asymptotics}
\begin{proposition}\label{Prop6.9}
For $j=1,2$, let $(\nu_j,\eta_j,\tau)\in\mathcal{D}$, $\Om_j$ denote $\Om(\nu_j,\eta_j,\tau)$, $G_j(z)$ denote $G(\nu_j,\eta_j,\tau,z)$, and $\theta_j$ denote $\theta(\nu_j,\eta_j,\tau)$. With the assumptions from section \ref{Statement of the Main Theorem}, 
\begin{multline*}
\left(\frac{1}{2\pi i}\right)^2\int_{\Gamma_1}\int_{\Gamma_2}\frac{\exp(NG(\eta_1,\nu_1,\tau,u))}{\exp(NG(\eta_2,\nu_2,\tau,w))}f(u,w)dwdu\\
= \bigO\left(\frac{G_1''(\Om_1)^{-3}+G_2''(\Om_2)^{-3}}{G_1''(\Om_1)^{1/2}G_2''(\Om_2)^{1/2}}N^{-2}\right)+\bigO(G_1''(\Om_1)^{-7/2}G_2''(\Om_2)^{-7/2}N^{-3})\\ +\frac{e^{N\Re((G_1(\Om_1)-G_2(\Om_2)))}}{2\pi N\sqrt{\vert G_1''(\Om_1)\vert}\sqrt{\vert G_2''(\Om_2)\vert}}
\times \Big[f(\Om_1,\Om_2)\frac{e^{iN\Im(G_1(\Om_1))-i{\theta}_1}}{e^{iN\Im(G_2(\Om_2))+i{\theta}_2}} + f(\Om_1,\bar{\Om}_2)\frac{e^{iN\Im(G_1(\Om_1))-i{\theta}_1}}{e^{-iN\Im(G_2(\Om_2))-i{\theta}_2}} \\
+f(\bar{\Om}_1,\Om_2)\frac{e^{-iN\Im(G_1(\Om_1))+i{\theta}_1}}{e^{iN\Im(G_2(\Om_2))+i{\theta}_2}}+f(\bar{\Om}_1,\bar{\Om}_2)\frac{e^{-iN\Im(G_1(\Om_1))+i{\theta}_1}}{e^{-iN\Im(G_2(\Om_2)-i{\theta}_2)}}\Big].
\end{multline*}
\end{proposition}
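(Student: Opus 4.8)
The proposition is a two-variable steepest-descent (Laplace) estimate, performed directly on the contours $\Gamma_1,\Gamma_2$ appearing in \eqref{AssumptiononK}. Since $\Gamma_1$ is a steepest descent path for $u\mapsto\Re G_1(u)$ and, because $G(\bar z)=\overline{G(z)}$, the critical equation $G_1'(u)=0$ has the conjugate pair of solutions $\Om_1,\bOm_1$ lying on $\Gamma_1$, the modulus of $e^{NG_1(u)}$ along $\Gamma_1$ is maximized exactly at $\Om_1$ and $\bOm_1$ and is exponentially smaller away from them; symmetrically, $|e^{-NG_2(w)}|$ along $\Gamma_2$ is maximized at $\Om_2$ and $\bOm_2$. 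Assuming $f$ has no poles on $\Gamma_1$ or $\Gamma_2$ (any residue crossing having been extracted beforehand, as in the proof of Theorem \ref{IsNormal}), the first step is to fix a shrinking radius $\delta=\delta_N\to0$ with $N\delta^2\to\infty$ and to write $\int_{\Gamma_1}\int_{\Gamma_2}$ as the sum of the four contributions from the product neighborhoods $(\{\Om_1,\bOm_1\})\times(\{\Om_2,\bOm_2\})$, plus a remainder. Using $\Re G_1(\Om_1)-\Re G_1(u)\ge c\delta^2$ off the $u$-neighborhoods and the symmetric bound for $G_2$, together with integrability of $f$ along the contours, the remainder is $\bigO(e^{N\Re(G_1(\Om_1)-G_2(\Om_2))}e^{-cN\delta^2})$, which is negligible against every term in the statement.

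The heart of the argument is then the standard saddle-point expansion on each one-dimensional piece. Write $G_j(z)=G_j(\Om_j)+\tfrac12 G_j''(\Om_j)(z-\Om_j)^2+\bigO((z-\Om_j)^3)$; with $\theta_j=\tfrac12\arg G_j''(\Om_j)$, parametrize the steepest-descent direction through $\Om_1$ by $u-\Om_1=ie^{-i\theta_1}t$ and through $\Om_2$ by $w-\Om_2=e^{-i\theta_2}t$, so that the quadratic terms become $-\tfrac12|G_j''(\Om_j)|t^2$. Near the corner $(\Om_1,\Om_2)$ the double integral then factors, to leading order, as $f(\Om_1,\Om_2)$ times the product of the two Gaussian integrals; the $u$-integral yields $e^{NG_1(\Om_1)}\cdot ie^{-i\theta_1}\cdot\sqrt{2\pi/(N|G_1''(\Om_1)|)}\cdot(1+\bigO((G_1''(\Om_1))^{-3}N^{-1}))$ and the $w$-integral the analogue with $e^{-NG_2(\Om_2)}$ and the appropriate Jacobian. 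Here $e^{\pm NG_j(\Om_j)}=e^{\pm N\Re G_j(\Om_j)}e^{\pm iN\Im G_j(\Om_j)}$ produces the oscillatory factors, the Jacobians of the steepest-descent parametrizations produce the $e^{\mp i\theta_j}$, and together with $(\tfrac1{2\pi i})^2$ and the two $\sqrt{2\pi/N}$'s they combine into the stated prefactor $\frac{e^{N\Re(G_1(\Om_1)-G_2(\Om_2))}}{2\pi N\sqrt{|G_1''(\Om_1)|}\sqrt{|G_2''(\Om_2)|}}$. The relative error $\bigO((G_j'')^{-3}N^{-1})$ is the usual one: after rescaling by the Gaussian width $(N|G_j''|)^{-1/2}$, the cubic and quartic Taylor coefficients of $G_j$ enter weighted by $(G_j'')^{-3/2}$ and $(G_j'')^{-2}$, and with $f,G_j''',G_j''''$ bounded the dominant power is $(G_j'')^{-3}$. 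Finally, substituting the four combinations $(\Om_1\text{ or }\bOm_1,\ \Om_2\text{ or }\bOm_2)$ into $f$ (the $\bOm_j$ contributions being conjugates of the $\Om_j$ ones, so costing no extra work) gives the bracketed four-term main term, and the two displayed error terms are exactly the cross terms of the product: (leading) $\times$ (subleading) gives $\bigO((G_1''(\Om_1))^{-7/2}(G_2''(\Om_2))^{-1/2}+(G_1''(\Om_1))^{-1/2}(G_2''(\Om_2))^{-7/2})N^{-2}$, which after collecting is the first displayed error, and (subleading) $\times$ (subleading) gives $\bigO((G_1''(\Om_1))^{-7/2}(G_2''(\Om_2))^{-7/2}N^{-3})$.

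The analysis near a single saddle is routine; the delicate points, and where I expect the real work, are the phase bookkeeping and the uniformity in $G_j''$. For the phases, one must orient the two arcs of $\Gamma_1$ (near $\Om_1$ and near $\bOm_1$) and the arc of $\Gamma_2$ consistently with the prescribed global direction of the contours, and check that this choice --- together with $(\tfrac1{2\pi i})^2$ and the steepest-descent Jacobians $\pm ie^{-i\theta_1},\pm e^{-i\theta_2}$ --- reproduces exactly the signs of the $\theta_j$ and the real positive constant in the statement; a sign slip here propagates into $\mathcal{G}$ in Theorem \ref{MainTheorem}. For the uniformity, the reason the statement retains explicit $G_j''$-dependence rather than writing a plain $\bigO(N^{-2})$ is that in Proposition \ref{ProductOfKernels} and Theorem \ref{GoalinAnalysisSection} one sums $\sim N$ such estimates with $G_2''(\Om_2)$ ranging down to the edge scale, where $|G_2''|\to0$; so the Taylor remainders in the saddle-point expansion and the tail bound above must be controlled with all constants made explicit in $|G_j''(\Om_j)|$, and the constraint $N\delta^2|G_j''(\Om_j)|\to\infty$ must be trackable. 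Beyond that, everything is classical Laplace asymptotics.
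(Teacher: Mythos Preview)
Your proposal is correct and follows the same overall strategy as the paper: localize the double integral to the four corners $(\Om_1\text{ or }\bOm_1,\Om_2\text{ or }\bOm_2)$, expand each factor by one-dimensional Laplace asymptotics, and read off the cross terms as the two displayed $\bigO$ errors. Your identification of the relative error $(G_j'')^{-3}N^{-1}$ per factor, via the rescaled cubic and quartic Taylor coefficients, reproduces exactly the paper's conclusion that the absolute error in each single integral is $\bigO(|G_j''|^{-7/2}N^{-3/2})$, and your product bookkeeping matches the statement.

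The one difference worth flagging is in how the uniform-in-$G_j''$ error bound is actually proved. You parametrize the steepest-descent arc explicitly by $u-\Om_1=ie^{-i\theta_1}t$ and then invoke ``the usual'' Gaussian remainder with constants depending polynomially on $|G_j''|^{-1}$. The paper instead reduces each one-dimensional integral to the real Laplace form $\int e^{NR(t)}\phi(t)\,dt$, makes the \emph{implicit} change of variables $-s^2=R(t)-R(t_{\max})$ (so $t=t_{\max}+sv(s)$), and then spends a substantial lemma (Lemma~\ref{EstimateOnV}) bounding $v(s),v'(s),v''(s)$ explicitly in terms of $v(0)=\sqrt{-2/R''(t_{\max})}$; these bounds feed into a Watson-type estimate (Corollary~\ref{eh}) whose error is $\Lambda\asymp v(0)^7 N^{-3/2}=|G_j''|^{-7/2}N^{-3/2}$. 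The payoff of the paper's route is that the radius $\tilde s$ on which the expansion is valid, and hence the exponentially small tails, are written down as explicit functions of $R'',R''',R''''$, so nothing is hidden in ``$N\delta^2|G_j''|\to\infty$''. Your direct Taylor approach gives the same powers, but if you carry it out you will need to make that shrinking-neighborhood constraint quantitative in $|G_j''|$---otherwise the remainder bound is not genuinely uniform near the edge, which is precisely where it is used in Proposition~\ref{ProductOfKernels}.
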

\begin{proof}
First, we show that the main term is correct.

By assumption, we can deform $\Gamma_1$ and $\Gamma_2$ so that $\Gamma_j$ passes through $\Om_j,\bar{\Om}_j$ for $j=1,2$. The contributions to the integral away from $\Om_j,\bar{\Om}_j$ are exponentially small, so we can replace $\Gamma_j$ with $\gamma_j \cup \bar{\gamma}_j$, where $\gamma_j$ and $\bar{\gamma}_j$ are steepest descent paths near $\Om_j$ and $\bar{\Om}_j$, respectively. The integration over $u\in\gamma_1\cup\bar{\gamma}_1,w\in\gamma_2\cup\bar{\gamma}_2$ expands into four integrations corresponding to $(u,w)\in\gamma_1\times\gamma_2, \bar{\gamma}_1\times\gamma_2, \gamma_1\times\bar{\gamma}_2, \bar{\gamma}_1\times\bar{\gamma}_2$. We explicitly do the calculation for $\gamma_1\times\gamma_2$. The other three calculations are essentially identical. 

Make the substitutions $s=G_1(\Om_1)-G_1(u)$ and $t=G_2(\Om_2)-G_2(w)$. In the neighborhood of $u=\Om_1$ and $w=\Om_2$, we have
\[
f(u,w)\approx f(\Om_1,\Om_2),\ \ s=-\frac{(u-\Om_1)^2}{2}G_1''(\Om_1), \ \ t=-\frac{(w-\Om_2)^2}{2}G_2''(\Om_2),
\]
which imply
\begin{align*}
G_1'(u)&=-\frac{ds}{du}=(u-\Om_1)G_1''(\Om_1)=\sqrt{-2sG_1''(\Om_1)}, \\  
G_2'(w)&=-\frac{dt}{dw}=(w-\Om_2)G_2''(\Om_2)=\sqrt{-2tG_2''(\Om_2)}.
\end{align*}
Then we get
\begin{align*}
&\left(\frac{1}{2\pi i}\right)^2e^{N(G_1(\Om_1)-G_2(\Om_2))}\int_0^{\infty}\int_0^{\infty}e^{-N(s+t)}\frac{f(u,w)}{G_1'(u)G_2'(w)}dtds\\
&=4\cdot\frac{e^{N(G_1(\Om_1)-G_2(\Om_2))}}{8\pi^2\sqrt{G_1''(\Om_1)}\sqrt{G_2''(\Om_2)}}f(\Om_1,\Om_2)\Big(\int_0^{\infty} s^{-1/2}e^{-Ns}ds\Big)\Big(\int_0^{\infty} t^{-1/2}e^{-Nt}dt\Big)\\
&=\frac{e^{N(G_1(\Om_1)-G_2(\Om_2))}}{2\pi N\sqrt{G_1''(\Om_1)}\sqrt{G_2''(\Om_2)}}f(\Om_1,\Om_2)\\
&=\frac{e^{N\Re((G_1(\Om_1)-G_2(\Om_2)))}}{2\pi N\sqrt{\vert G_1''(\Om_1)\vert}\sqrt{\vert G_2''(\Om_2)\vert}} \Big[f(\Om_1,\Om_2)\frac{e^{iN\Im(G_1(\Om_1))-i{\theta}_1}}{e^{iN\Im(G_2(\Om_2))+i{\theta}_2}}\Big], 
\end{align*}
where the last equality follows from $G(\bar{z})=\overline{G(z)}$. The $4$ appears because the maps $u\mapsto s$ and $w\mapsto t$ are both two-to-one. 

It still remains to show that the error term is correct. The remainder of this section is devoted to proving this. The idea is to reduce the double integral to progressively simpler forms. First, by a reparametrization, the integral over two arcs in $\C$ can be written as a integral in $\R^2$. Second, by using a Taylor approximation, the integral in $\R^2$ can be written as a product of two integrals in $\R$, each of which is of the form $\int e^{-NR(t)}\phi(t)dt$, where $R(t)$ has a maximum $\tm$ in the interval of integration. Third, by using the implicit function theorem, this integral reduces to the form $\int e^{-Nt^2}g(t)$, where the interval of integration is a small neighbourhood $\tm$. Fourth, this last integral is a slight generalization of $\int e^{-Nt}g(t)dt$, which is dealt with by the well-known Watson's lemma (Lemma \ref{4.8} below). Since the first two steps have been done before (see Chapters 3 and 4 of \cite{kn:M}), we will focus mostly on the third and fourth steps. 

\begin{lemma}\label{4.8}
Suppose that $R$ and $\phi$ are infinitely continuously differentiable in some neighbourhood of $t_{max}$. Also suppose that $\tm$ is a local maximum of $R$ and $R''(\tm)<0$. Then for any $N>1$ and $s\in [0,m^2]$,
\begin{multline*}
\left| \int_{\tm-\delta_1}^{\tm+\delta_2} e^{NR(t)}\phi(t)dt-\phi(\tm)e^{NR(\tm)}\sqrt{\frac{-2\pi}{NR''(\tm)}}\right|\leq \frac{\sqrt{\pi}}{2}\frac{\displaystyle\sup_{0\leq\tau\leq s} \vert g'(\tau)\vert}{N^{3/2}}\\
+e^{-Ns}\int_s^{m^2} \vert h(t)\vert dt + e^{-Nm^2}\int_{m}^{\max(\alpha,\beta)}\vert h(t)\vert dt+h(0)\int_s^{\infty} e^{-Nt}t^{-1/2}dt\\
+e^{-Ns/2}\sup_{0\leq\tau\leq s} \vert g'(\tau)\vert,
\end{multline*}
where
\[
\alpha=\sqrt{-R(\tm-\delta_1)},\ \beta=\sqrt{-R(\tm+\delta_2)},\ m=\min(\alpha,\beta),
\]
\[
h(s)=\phi(\tm+sv(s))(sv'(s)+v(s)),\ g(s)=\frac{1}{2}(h(s^{1/2})+h(-s^{1/2})), 
\]
where $v(s)$ is an infinitely differentiable function solving
\begin{equation}\label{Defnofv}
-R(\tm)+R(\tm+sv(s))=-s^2.
\end{equation}
\end{lemma}
\begin{proof}
This is a slight generalization of Watson's lemma (e.g. Proposition 2.1 of \cite{kn:M}), which deals with asymptotics of integrals of the form $\int_0^T e^{-Nt}\phi(t)dt$. By following pages pages 58--60 of \cite{kn:M}, one generalizes to integrals of the form $\int_{-\alpha}^{\beta} e^{-Nt^2}\phi(t)dt$, and then it is not hard to generalize to functions $R(t)$ which behave like $-t^2$ near its maximum.
\end{proof}

Before continuing, a few estimates on $v(s)$ are needed.

\begin{lemma}\label{EstimateOnV} Let $v(s)$ be as in \eqref{Defnofv}.

(a)
\begin{equation}\label{v0}
R''(\tm)=-2v(0)^{-2}.
\end{equation}
\begin{equation}\label{v'0}
v'(0)=\frac{R'''(\tm)}{12}v(0)^4
\end{equation}

(b) Set $B=\sup\vert R^{(4)}/24\vert$. Then 
\[
\vert v(s)-v(0)-v'(0)s\vert < \left(\frac{5}{144}R'''(\tm)^2v(0)^7 + Bv(0)^5\right)s^2.
\]
for 
\[
s<\min(\vert 53R'''(\tm)\vert/(625Bv(0)),(\vert R'''(\tm)\vert v(0)^3)^{-1},(50\sqrt{B}v(0)^2)^{-1}).
\]
In particular, $\vert v(s)-v(0)\vert< \vert R'''(\tm)\vert v(0)^4 \vs/4$ and $\vert v(s)-v(0)\vert<v(0)/4$.

(c) Let
\[
a_3:=\left(\frac{157}{16}Bv(0)^3 + \frac{101}{288}R'''(\tm)^2v(0)^5\right).
\]
Then
\[
\vert v(s)+sv'(s)-v(0)-2v'(0)s\vert< \left( \frac{39}{32}R'''(\tm)^2v(0)^7 + \frac{471}{16}Bv(0)^5\right) s^2
\]
\begin{equation}\label{Estimateons}
 \text{for}\ \  \vert s\vert <\min\left(\frac{\vert 53R'''(\tm)\vert}{625Bv(0)},\frac{1}{50\sqrt{B}v(0)^2},\frac{1}{6R'''(\tm)v(0)^3},\left|\sqrt{\frac{2v(0)^{-1}}{3a_3}}\right|\right). 
\end{equation}


(d) With the same bounds on $\vs$, 
\[
\vert 2v'(s)+sv''(s)-2v'(0)\vert<450000(R'''(\tm)^2v(0)^7 + Bv(0)^5)s 
\]
\end{lemma}
\begin{proof}
(a) The proof comes from page 69 of \cite{kn:M}. It follows immediately from using implicit differentiation of \eqref{Defnofv} and setting $s=0$.

(b) First notice that if $R_-(t) \leq R(t) \leq R_+(t)$ with $R_-(\tm) = R(\tm) = R_+(\tm)$ and $v_{\pm}$ are the solutions to $-R(\tm)+R(\tm \pm sv_{\pm}(s))=-s^2$, then $v_- \leq v \leq v_+$. We will use 
\[
R_{\pm}(t)=R(\tm)+\frac{1}{2}R''(\tm)(t-\tm)^2 + \frac{1}{6}R'''(\tm)(t-\tm)^3  \pm B(t-\tm)^4 
\]
Therefore, we obtain bounds on $v(s)$ by solving $-R(\tm)+R(\tm \pm sv_{\pm}(s))=-s^2$, which is equivalent to solving
\[
Q_{\pm,s}(y):=1-y_0^{-2}y^2+Asy^3\pm Bs^2y^4=0, \ \ A=R'''(\tm)/6, \ \ y_0=v(0).
\]
In other words $Q_{\pm,s}(v(s))=0$. Taking the derivative of $Q_{\pm,s}(v(s))=0$ with respect to $s$ and setting $s=0$, observe that $v'(0)=Ay_0^4s/2.$ We will use the intermediate value theorem to estimate roots of $Q_{\pm,s}$. 

For $\epsilon = Ay_0^4s/2 + (5A^2y_0^7/4 + By_0^5)s^2$ and $\vs=(HAy_0^3)^{-1}$ where $H$ is any real number, we have
\begin{multline*}
Q_{+,s}(y_0+\epsilon)\leq\frac{1}{256 A^{10} H^{10} y_0^{10}}(256 B^5 + 
  256 A^2 B^4 p_2(H) y_0^2 
  + 
  32 A^4 B^3 p_4(H) y_0^4 \\+ 
  16 A^6 B^2 p_6(H) y_0^6 
     + 
  A^8 B p_8(H) y_0^8 + 
  4 A^{10} H^3 p_5(H) y_0^{10})
\end{multline*}
where the $p_i$ are polynomials which satisfy the following inequalities when $\vert H\vert>6:$
$$ 
p_2(H)<11H^2,p_4(H)<371H^4,p_6(H)<1447H^6,p_8(H)<-10H^8,p_5(H)<0.
$$

Now setting $H := hy_0^{-1}A^{-1}\sqrt{B}$ where $h>50$,
\[
Q_{+,s}(y_0+\epsilon)<\frac{128 + 1408 h^2 + 5936 h^4 + 11576 h^6 - 5 h^8}{128h^{10}}<0.
\]
Since 
\[
Q_{+,s}(y_0+Ay_0^4s/2)>\frac{1}{16} s^2 y_0^4 (B (2 + A s y_0^3)^4 + 
   2 A^2 y_0^2 (10 + 6 A s y_0^3 + A^2 s^2 y_0^6))>0,
\] 
this implies that $v_+(s)<y_0+\epsilon=v(0)+v'(0)s + (5A^2v(0)^7/4 + Bv(0)^5)s^2$ for $\vs<\min((6\vert A\vert v(0)^3)^{-1}, (50\sqrt{B}v(0)^2)^{-1})$.

By applying a similar argument to $Q_{-,s}$, one can show that $v(0) + v'(0)s - (5A^2v(0)^7/4 + Bv(0)^5)s^2<v_-(s)<y_0$. Thus the lower bound holds in both cases. 

The last statement follows because
\begin{eqnarray*}
\vert v(s) - v(0) \vert &<& \vert v'(0)\vert\cdot \vs + \left(\frac{5}{144}R'''(\tm)^2v(0)^7 + Bv(0)^5\right)s^2 \\ 
&<& \frac{\vert R'''(\tm) \vert}{12}v(0)^4\vs + \frac{5}{144}\frac{R'''(\tm)^2v(0)^7}{\vert R'''(\tm)\vert v(0)^3}\vs + Bv(0)^5\frac{53 \vert R'''(\tm)\vert }{625Bv(0)}\vs\\
&<& \frac{\vert R'''(\tm)\vert v(0)^4}{4}\vs < \frac{v(0)}{4}.
\end{eqnarray*}

(c) Differentiating \eqref{Defnofv} yields
\[
(v(s)+sv'(s))=\frac{-2s}{R'(\tm+sv(s))}.
\]
To estimate this, let us first estimate $R'$.

By a Taylor expansion,
\begin{equation}\label{EstimateonR'}
\vert R'(\tm+sv(s))-R''(\tm)sv(s)-\frac{1}{2}R'''(\tm)s^2v(s)^2\vert \leq 4Bs^3v(s)^3.
\end{equation}
By the triangle inequality and part (b), 
\begin{multline*}
\left| R'(\tm+sv(s))-R''(\tm)s(v(0)+sv'(0))-\frac{1}{2}R'''(\tm)s^2v(0)^2 \right| \\
\leq 4Bv(s)^3s^3 - 
R''(\tm)((v(s)-v(0)-v'(0)s)s +\frac{1}{2}(v(s)-v(0))(v(s)+v(0))s^2\\
\leq \frac{125}{16}Bv(0)^3s^3 - 
R''(\tm)\left(\frac{5}{144}R'''(\tm)^2v(0)^7 + Bv(0)^5\right)s^3\\
+\frac{1}{2}\cdot\frac{1}{4}R'''(\tm)^2v(0)^4\cdot\frac{9}{4}v(0)s^3,
\end{multline*}
which, by \eqref{v0} and \eqref{v'0}, implies
\begin{multline}\label{EstimateonR'2}
\left| R'(\tm+sv(s))+2v(0)^{-1}s-4v(0)^{-2}v'(0)s^2\right| \\
\leq \left(\frac{157}{16}Bv(0)^3 + \frac{101}{288}R'''(\tm)^2 v(0)^5\right)s^3=:a_3s^3.
\end{multline}

To estimate the inverse of $R'$, use
\begin{multline*}
\left|\frac{1}{a_1s+a_2s^2+a_3s^3}-\frac{1}{a_1s}+\frac{a_2}{a_1^2}\right|=\left|\frac{(a_2^2-a_1a_3)s^2+a_2a_3s^3}{a_1^2(a_1s+a_2s^2+a_3s^3)} \right|\ \leq \left|\frac{6(a_2^2+\vert a_1a_3\vert)}{a_1^3}s\right|\\ 
\text{for} \ \ \vs<\min( \left|\frac{a_1}{3a_2}\right|, \left|\sqrt{\frac{a_1}{3a_3}}\right|),
\end{multline*}
which, by setting $a_1=2v(0)^{-1}$ and $a_2=4v(0)^{-2}v'(0)$, implies that
\begin{equation}\label{EstimateonDenominator}
\left| \frac{1}{R'(\tm+sv(s))}+\frac{v(0)}{2s}+v'(0)\right| \leq \frac{R'''(\tm)^2v(0)^7+18v(0)^2\vert a_3\vert}{12}\vert s \vert.
\end{equation}
Multiplying by $2\vert s\vert$ finishes the proof of (c).

(d) Differentiating \eqref{Defnofv} twice yields
\begin{equation}\label{v''Equation}
2v'(s)+sv''(s)=\frac{-2-R''(\tm+sv(s))(v(s)+sv'(s))^2}{R'(\tm+sv(s))}.
\end{equation}
From part (c) and a Taylor approximation for $R''$, 
\begin{multline*}
\left| -2-R''(\tm+sv(s))(v(s)+sv'(s))^2\right| < 999((R'''(\tm)^2v(0)^6+Bv(0)^4)s^2\\
+(R'''(\tm)^3v(0)^9+R'''(\tm)Bv(0)^7)s^3\\
+(R'''(\tm)^4v(0)^{12}+R'''(\tm)^2Bv(0)^{10}+10B^2v(0)^8)s^4\\
+(R'''(\tm)^5v(0)^{15}+R'''(\tm)^3Bv(0)^{13}+10R'''(\tm)B^2v(0)^{11})s^5\\
+(R'''(\tm)^6v(0)^{18}+R'''(\tm)^4Bv(0)^{16}+10R'''(\tm)^2B^2v(0)^{14}+10B^3v(0)^{12})s^6).
\end{multline*}
Since $\vs<(R'''(\tm)v(0)^3)^{-1},R'''(\tm)(Bv(0))^{-1}$,
\begin{equation}\label{I1}
\left| -2-R''(\tm+sv(s))(v(s)+sv'(s))^2\right|< 99999(R'''(\tm)^2v(0)^6+Bv(0)^4)s^2.
\end{equation}
By \eqref{EstimateonDenominator} and the estimates on $\vs$,
\begin{equation}\label{I2}
\left| \frac{1}{R'(\tm+sv(s))}+\frac{v(0)}{2s}+v'(0)\right| \leq 30R'''(\tm)v(0)^4.
\end{equation}
Combining \eqref{v''Equation},\eqref{I1} and \eqref{I2}, 
\begin{multline*}
\vert 2v'(s)+sv''(s)-2v'(0)\vert < 50000(R'''(\tm)^2v(0)^7+Bv(0)^5)s\\
+400000(R'''(\tm)^3v(0)^{10}+R'''(\tm)Bv(0)^8)s^2,
\end{multline*}
and using $\vs<(R'''(\tm)v(0)^3)^{-1}$ on the second term gives the result.




\end{proof}

\begin{corollary}\label{eh}
Suppose that $R$ and $\phi$ are infinitely continuously differentiable in some neighbourhood of $t_{max}$. Also suppose that $\tm$ is a local maximum of $R$ and $R''(\tm)<0$. Let $\delta_1$ and $\delta_2$ be positive numbers such that 
\[
m^2:=-R(\tm-\delta_1)=-R(\tm+\delta_2),
\]
and assume $m^2$ equals the right-hand side of \eqref{Estimateons}. Let \[
\tilde{s}=\min\left(\frac{R'''(\tm)}{50Bv(0)}, \frac{1}{50R'''(\tm)v(0)^3}\right),
\]
\[
\Lambda:=500R'''(\tm) \Vert \phi'\Vert_{\infty} v(0)^5 + 450000\Vert \phi \Vert_{\infty} (R'''(\tm)^2v(0)^7+Bv(0)^5).
\]
Then for any $N>1$,
\begin{multline*}
\left| \int_{\tm-\delta_1}^{\tm+\delta_2} e^{NR(t)}\phi(t)dt-\phi(\tm)\sqrt{\frac{-2\pi}{NR''(\tm)}}\right|\leq \\
\frac{\sqrt{\pi}}{2}\frac{\Lambda}{N^{3/2}}
+\phi(\tm)\sqrt{\frac{-2}{R''(\tm)}}\frac{e^{-N\tilde{s}}}{\sqrt{\tilde{s}}N}+e^{-N\tilde{s}^2/2}\Lambda.
\end{multline*}
\end{corollary}
\begin{proof}
Use Lemma \ref{4.8}. By part (a) of Lemma \ref{EstimateOnV}, 
\[
h(0)=\phi(\tm)\sqrt{\frac{-2}{R''(\tm)}}.
\]
By parts (c) and (d) of Lemma \ref{EstimateOnV},
\[
\sup_{0\leq\tau\leq m^2} \vert g'(\tau)\vert \leq \Lambda.
\]
When $v(0)>\sqrt{B}/R'''(\tm)$, 
\[
\frac{53}{625}\frac{R'''(\tm)}{Bv(0)}>\frac{53}{625R'''(\tm)v(0)^3}
\]
\[
\frac{1}{50\sqrt{B}v(0)^2} > \frac{1}{50R'''(\tm)v(0)^3}
\]
\[
a_3 < 16R'''(\tm)^2v(0)^5 
\]
\[
\sqrt{\frac{2v(0)^{-1}}{3a_3}}>\frac{1}{5R'''(\tm)v(0)^3},
\]
implying $m^2>(50R'''(t_{max})v(0)^3)^{-1}\geq \tilde{s}$. Similarly, when $v(0)<\sqrt{B}/R'''(\tm),$ then $m^2 > R'''(t_{max}))/(50Bv(0)) \geq \tilde{s}$. Thus $m^2>\tilde{s}$, and
\[
\int_{m^2}^{\infty} e^{-Nt}t^{-1/2}dt\leq \frac{e^{-Nm^2}}{mN}\leq\frac{e^{-N\tilde{s}}}{\sqrt{\tilde{s}}N}.
\]
\end{proof}

We can finally wrap up the proof of Proposition \ref{Prop6.9}. Since $s$ is not too small (polynomial in $v(0)^{-1}$), the exponential terms are small enough to be ignored. Therefore the error term is $\Lambda_1=\bigO(v(0)^7N^{-3/2})= \bigO(R''_1(\tm)^{-7/2}N^{-3/2})$. The main term is of order $N^{-1/2}R''_1(\tm)^{-1/2}$. Thus, when multiplying two integrals of the form in Corollary \ref{eh}, we get 
\[
\bigO\left(\frac{R_1''(t_{\text{max}_1})^{-3}+R_2''(t_{\text{max}_2})^{-3}}{R_1''(t_{\text{max}_1})^{1/2}R_2''(t_{\text{max}_2})^{1/2}}N^{-2}\right)+\bigO(R_1''(t_{\text{max}_1})^{-7/2}R_2''(t_{\text{max}_2})^{-7/2}N^{-3}),
\]
as needed.
\end{proof}

In Proposition \ref{Prop6.9}, the error term blows up at the edge. Therefore a better bound is needed. To get this bound, we simply use the first term in Watson's lemma, as opposed to using two terms. Since the method of the proof is identical as before and the details are simpler, the proof will be omitted. The exact statement is the following.

\begin{proposition}\label{EdgeAsymptotics}
For $j=1,2$, let $(\nu_j,\eta_j,\tau)\in\mathcal{D}$, $\Om_j$ denote $\Om(\nu_j,\eta_j,\tau)$, $G_j(z)$ denote $G(\nu_j,\eta_j,\tau,z)$, and $\theta_j$ denote $\theta(\nu_j,\eta_j,\tau)$. With the assumptions in section \ref{Statement of the Main Theorem}, 
\begin{multline*}
\left(\frac{1}{2\pi i}\right)^2\int_{\Gamma_1}\int_{\Gamma_2}\frac{\exp(NG(\eta_1,\nu_1,\tau,u))}{\exp(NG(\eta_2,\nu_2,\tau,w))}f(u,w)dwdu\\
\leq\frac{1000}{N\sqrt{\vert G_1''(\Om_1)\vert}\sqrt{\vert G_2''(\Om_2)\vert}}
\times \Big[\left| f(\Om_1,\Om_2)\right| + \left| f(\Om_1,\bar{\Om}_2)\right| + \left| f(\bar{\Om_1},\Om_2)\right| + \left| f(\bar{\Om_1},\bar{\Om_2})\right|\Big] 
\end{multline*}
\end{proposition}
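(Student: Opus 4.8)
The plan is to run the proof of Proposition~\ref{Prop6.9} essentially unchanged up to the point where Watson's lemma is invoked, and then to extract only the leading-order term rather than the two-term expansion that produced the $|G_j''|^{-7/2}$ factors there. Concretely, I would first deform $\Gamma_1$ and $\Gamma_2$ to steepest descent contours through $\Om_1,\bOm_1$ and $\Om_2,\bOm_2$ (the same deformation used in Proposition~\ref{Prop6.9}); the parts of the contours away from these four points contribute only exponentially small terms, so the double integral breaks into the four pieces indexed by $(u,w)$ lying on $\gamma_1\times\gamma_2$, $\bar{\gamma}_1\times\gamma_2$, $\gamma_1\times\bar{\gamma}_2$, $\bar{\gamma}_1\times\bar{\gamma}_2$, where $\gamma_j,\bar{\gamma}_j$ are the steepest descent arcs near $\Om_j,\bOm_j$. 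It suffices to bound the $\gamma_1\times\gamma_2$ piece, since the other three are identical with $\Om_1$, $\Om_2$ replaced by their conjugates.

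On $\gamma_1\times\gamma_2$ I would make the substitutions $s=G_1(\Om_1)-G_1(u)$, $t=G_2(\Om_2)-G_2(w)$, which — using $G_j'(u)=\sqrt{-2sG_j''(\Om_j)}\,(1+o(1))$ near the critical point, exactly as in Proposition~\ref{Prop6.9} — turns the piece into
\[
\left(\frac{1}{2\pi i}\right)^{2} e^{N(G_1(\Om_1)-G_2(\Om_2))}\int_0^{T_1}\!\!\int_0^{T_2} e^{-N(s+t)}\,s^{-1/2}t^{-1/2}\,\Psi(s,t)\,ds\,dt,
\]
where $\Psi$ is smooth near the origin with $\Psi(0,0)$ equal to $f(\Om_1,\Om_2)/\bigl(2\sqrt{G_1''(\Om_1)G_2''(\Om_2)}\bigr)$ up to a unit-modulus phase. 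This is precisely the setting of Watson's lemma (Lemma~\ref{WatsonsLemma}), applied in each of the two variables with $\sigma=-1/2$.

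The key difference from Proposition~\ref{Prop6.9} is that here I would retain only the leading term of Watson's lemma in each variable — which is of size $\sqrt{\pi}\,N^{-1/2}$ times the value of the smooth part at the origin — and bound the entire remainder crudely. Each resulting one-dimensional integral is then $O(N^{-1/2})$, with implicit constant governed by the supremum of the smooth part over the localization interval, which near the critical point is of order $|f(\cdot,\cdot)|\cdot|G_j''(\Om_j)|^{-1/2}$; the remainder pieces of Lemma~\ref{WatsonsLemma} — namely $e^{-Ns}\int_s^T|\phi|$, $g(0)\int_s^\infty e^{-Nt}t^{-1/2}\,dt$, and the $e^{-Ns/2}$ term — are all exponentially small once the localization radius and the threshold $s$ are chosen polynomially small in $|G_j''(\Om_j)|$, exactly as in Corollary~\ref{eh} (whose estimates are polynomial in $v(0)=\sqrt{-2/R''(\tm)}$). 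Multiplying the two one-dimensional bounds, summing over the four contour pieces (which, after the triangle inequality, replaces $f(\Om_1,\Om_2)$ by $|f(\Om_1,\Om_2)|+|f(\Om_1,\bOm_2)|+|f(\bOm_1,\Om_2)|+|f(\bOm_1,\bOm_2)|$), using that the prefactor $e^{N\Re(G_1(\Om_1)-G_2(\Om_2))}$ is $\le 1$ in the regime where this estimate is applied (just as the error term of Proposition~\ref{Prop6.9} carries no such factor), and absorbing all numerical constants — the $\sqrt{\pi}$'s, the factor $4$ from the two-to-one substitutions $u\mapsto s$, $w\mapsto t$, and the implied constants in the remainder bounds — into a single constant, which one checks is at most $1000$, yields the stated inequality. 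The step requiring the most care is this last uniformity claim: one must verify that the exponentially small error terms are genuinely negligible compared with $N^{-1}|G_1''(\Om_1)|^{-1/2}|G_2''(\Om_2)|^{-1/2}$ uniformly over $(\nu_j,\eta_j,\tau)\in\mathcal{D}$, including near the edge where $|G_j''|\to 0$; but this is immediate from the polynomial-in-$v(0)$ bounds already obtained in Lemma~\ref{EstimateOnV} and Corollary~\ref{eh}.
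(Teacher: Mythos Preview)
Your proposal is correct and matches the paper's approach exactly: the paper omits the proof of this proposition, saying only that ``we simply use the first term in Watson's lemma, as opposed to using two terms'' and that ``the method of the proof is identical as before and the details are simpler.'' Your plan of rerunning the steepest descent argument from Proposition~\ref{Prop6.9} but retaining only the leading term of Watson's lemma in each variable is precisely this.
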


\end{document}